\documentclass[11pt]{article}
\usepackage[margin=0.9in]{geometry}
\usepackage{amsthm,amsmath,amssymb,amscd,enumerate, graphicx,subcaption,pstricks,bm,booktabs}
\usepackage{setspace,array}
\usepackage{thmtools}
\usepackage{thm-restate}
\usepackage{color}
\usepackage{natbib}
\usepackage[mathscr]{euscript}
\usepackage[shortlabels]{enumitem}
\usepackage[colorlinks=true, urlcolor=blue, linkcolor=blue, citecolor=blue]{hyperref}
\usepackage{algpseudocode}
\usepackage{algorithm}
\newcolumntype{x}[1]{>{\centering\arraybackslash\hspace{0pt}}p{#1}}
\algrenewcommand\algorithmicindent{0.4em}

\DeclareMathAlphabet{\mathbbs}{U}{bbm}{m}{sl}
\newcolumntype{C}[1]{>{\centering\let\newline\\\arraybackslash\hspace{0pt}}m{#1}}
\declaretheorem[name=Theorem]{theorem}
\declaretheorem[name=Lemma]{lemma}

\pdfminorversion=4
\allowdisplaybreaks

\DeclareMathOperator*{\argmin}{argmin}
\DeclareMathOperator*{\argmax}{argmax}
\DeclareMathOperator*{\logit}{logit}
\DeclareMathOperator*{\expit}{expit}

\newcommand{\1}[1]{\mathbf{1}\{#1\}}

\usepackage{natbib}
\usepackage{xcolor}
\usepackage{colortbl}
\usepackage[normalem]{ulem}
\usepackage[T1]{fontenc}
\usepackage[utf8]{inputenc}
\usepackage{authblk}

\definecolor{Gray}{gray}{0.95}

\newcommand{\convset}{B}

\newcommand{\pr}{\text{pr}}

\newcommand{\nrvs}{U}

\newcommand{\distv}{Q}

\newcommand{\intvar}{ t}

\newcommand{\garg}{ x}
\newcommand{\gengam}[1][]{\Gamma_{#1}(\garg, \Sigma_0, \varphi)}
\newcommand{\pregam}{\zeta}

\title{A general adaptive framework for\\multivariate point null testing}
\author[1]{Adam Elder}
\author[1,2,3]{Marco Carone}
\author[3,1]{Peter Gilbert}
\author[1,2,3]{Alex Luedtke}
\affil[1]{Department of Biostatistics, University of Washington}
\affil[2]{Department of Statistics, University of Washington}
\affil[3]{Vaccine and Infectious Disease Division, Fred Hutchinson Cancer Research Center}

\begin{document}

\maketitle

\doublespacing

\begin{abstract}
	As a common step in refining their scientific inquiry, investigators are often interested in performing some screening of a collection of given statistical hypotheses. For example, they may wish to determine whether any one of several patient characteristics are associated with a health outcome of interest. Existing generic methods for testing a multivariate hypothesis --- such as multiplicity corrections applied to individual hypothesis tests --- can easily be applied across a variety of problems but can suffer from low power in some settings. Tailor-made procedures can attain higher power by building around problem-specific information but typically cannot be easily adapted to novel settings. In this work, we propose a general framework for testing a multivariate point null hypothesis in which the test statistic is adaptively selected to provide increased power. We present theoretical large-sample guarantees for our test under both fixed and local alternatives. In simulation studies, we show that tests created using our framework can perform as well as tailor-made methods when the latter are available, and we illustrate how our procedure can be used to create tests in two settings in which tailor-made methods are not currently available. 
\end{abstract}

\section{Introduction}

Addressing a scientific question often involves performing simultaneous inference on components of a vector-valued statistical parameter and, in particular, assessing whether this parameter deviates from a specific null value of scientific interest. Indeed, testing of a multivariate point null hypothesis arises commonly in applications. For example, it may be of interest to determine whether any of several variables is related to a particular health outcome, as often occurs in genetics \citep{gao_multiple_2008}, neurology \citep{flandin_analysis_2019}, and vaccine development \citep{borthwick_vaccine-elicited_2014}, among other fields. General-purpose strategies (e.g., construction of Wald-type test statistics) exist for performing a hypothesis test of a univariate point null with a specified (asymptotic) type I error; in many cases, such strategies can be shown to yield optimal tests. The corresponding problem for a multivariate point null poses a much greater challenge.

A valid test of a multivariate null hypothesis can be constructed on the basis of multiple tests of univariate null hypotheses in a manner that controls the family-wise type I error rate. For decades, the Bonferroni correction has been used to derive multiple hypothesis testing procedures.  Early examples of its use appear in  \cite{dunn_estimation_1959,dunn_multiple_1961}. Refinements of the Bonferroni correction have been proposed by various authors, including, for example,  \cite{holm_simple_1979}, \cite{simes_improved_1986}, \cite{hommel_stagewise_1988}, \cite{hochberg_sharper_1988} and \cite{s._holland_improved_1988}.  Bonferroni-type correction procedures are broadly applicable and  easily implemented. However, because they do not leverage knowledge of the dependence between the test statistics involved, they may yield low power in some circumstances. Some authors, including \citet{lehmann_testing_2005} and  \citet{dudoit_multiple_2008}, have proposed alternative strategies to mitigate this problem by accounting for the joint behavior of the test statistics. These procedures, in particular, allow users to specify the desired trade-off between type I and II errors by controlling, for example, the false discovery rate or family-wise error rate of the test. Nevertheless, despite these improvements, the use of multiple testing techniques to assess a single multivariate hypothesis, while convenient, comes at a price. The ability to determine which null hypothesis (if any) to reject, while potentially valuable, could come at the cost of lower power for detecting deviations from the multivariate point null. Indeed, for any multiple testing procedure that achieves family-wise type I error control, there exists a calibrated test of the multivariate null with at least as much power. In fact, a more powerful test of the multivariate null would be expected to exist since such a test does not need to account for rejections of a univariate null that holds when others do not.

Approaches for multivariate testing have been proposed and typically account for the correlation between individual test statistics.  Such methods can be categorized based on how an aggregate test statistic is constructed. In some procedures (e.g.,  \citealp{donoho_higher_2004}), a summary test statistic is built using estimators of underlying univariate parameters, whereas in others (e.g., \citealp{liu_cauchy_2020})  $p$-values from multiple univariate tests are directly combined. Unfortunately, these procedures are usually tailored to a specific parameter and statistical model (e.g., \citealp{donoho_higher_2004}) or make assumptions about the data-generating mechanism that can fail in practice (e.g., sparsity conditions, parametric modeling assumptions). Additionally, some procedures do not allow the use of flexible learning strategies in the construction of the involved test statistics \citep{breiman_statistical_2001}.  While the use of flexible learners is often critical to obtaining asymptotic guarantees in nonparametric and semiparametric models, it can also cause poor finite-sample performance of testing procedures, especially when the adaptive nature of the test statistic is not taken into account (see, e.g., \citealp{leeb_model_2005, leeb_can_2006}). While more recent proposals address many of these potential issues  (e.g., \citealp{pan_powerful_2014, mckeague_adaptive_2015, xu_adaptive_2016}), they provide techniques for use in specific applications rather than general-purpose templates for use in a variety of problems. Thus, while procedures for multivariate testing with good performance characteristics have been devised for certain settings, in many cases, there is little guidance for investigators beyond crude approaches such as the Bonferroni correction. In this paper, we propose and study a general-purpose procedure for constructing a test of a multivariate point null hypothesis that can be used for a broad range of statistical parameters and models. Our procedure benefits from an explicit accounting of the joint behavior of the test statistic, and incorporates data-driven selection of the involved tuning parameters to optimize test performance for the application at hand. As such, it can be expected to provide improved performance compared to existing strategies in many contexts.
	
This paper is organized as follows. In Section \ref{sec:Working Examples}, we introduce the testing problem considered and provide working examples with which we will illustrate the implementation and performance of our proposed procedure. We formally describe our procedure in Section \ref{sec:prop_test_proc}, and provide a theoretical study of its properties in Section \ref{sec:theorems}. In Section \ref{sec:sim_stdy}, we illustrate through simulation studies that the proposed framework yields novel tests with comparable power to tailor-made procedures in settings in which specialized methods already exist, and has good operating characteristics in settings in which problem-specific methods do not currently exist. In Section \ref{sec:data_app}, we use our procedure to test for the existence of a correlate of risk of HIV infection using data from the HVTN 505 HIV vaccine trial. In Section \ref{sec:discuss}, we provide concluding remarks. Technical proofs as well as additional simulation results and details on our data analysis are provided in the Supplement.
	
\section{Problem setup}

\label{sec:Working Examples}

Suppose that we have at our disposal observations $X_1,X_2,\ldots,X_n$ drawn independently from a common unknown distribution $P_0\in\mathcal{M}$, where the statistical model $\mathcal{M}$ encodes known restrictions on $P_0$. In the developments below, we are primarily interested in cases in which $\mathcal{M}$ is a nonparametric or semiparametric model, although this is not a requirement for the developments presented. We denote by $\mathcal{X}$ the union of the support of $P$ for each $P\in\mathcal{M}$. Suppose that $\Psi_1,\Psi_2,\ldots,\Psi_d$ form a collection of real-valued statistical parameters defined on $\mathcal{M}$. For each $j\in\{1,2,\ldots,d\}$, we define $\psi_{j0}:=\Psi_j(P_0)\in\mathbb{R}$ to be the evaluation of $\Psi_j$ on $P_0$, and write $\psi_0:=(\psi_{10},\psi_{20},\ldots,\psi_{d0})$. In this article, for a given (known) vector $\psi_*:=(\psi_{1*},\psi_{2*},\ldots,\psi_{d*})\in\mathbb{R}^d$, we consider testing \begin{align}\label{hyp}H_0:\psi_0=\psi_{*}\text{\ \ versus\ \ }H_1:\psi_0\neq \psi_{*}\ .\end{align} Without loss of generality, we consider the case $\psi_{*}=(0,0,\ldots,0)$ since otherwise we may instead take $\Psi_j$ to be its null-centered counterpart $P\mapsto \Psi_j(P)-\psi_{j*}$.

The setup we consider is sufficiently broad to include a large variety of examples. For concreteness, we present here three particular examples that we will use throughout as an illustration of our general results.
	
\noindent\textbf{Example 1: correlation.} In our first and simplest example, we consider the data unit $X=(W,Y)$, where $W:=(W_1,W_2,\ldots,W_d)$ represents a vector of real-valued covariates and $Y$ is some outcome of interest, and the parameter of interest $\Psi_j(P):=\text{corr}_P(W_j,Y)$ is the marginal correlation between $W_j$ and $Y$ under $P$. We are interested in testing the multivariate null hypothesis that none of the components of $W$ are marginally correlated with $Y$ in a nonparametric model. For this problem, there exist several competing approaches in the literature, and we will compare a test derived using our proposal to several of these existing approaches.

\noindent\textbf{Example 2: coefficients of a working log-linear regression model under missingness.} In our second example, we instead consider the data unit $X=(W,U,\Delta)$, where $W:=(W_1,W_2,\ldots,W_d)$ again represents a vector of real-valued covariates, $\Delta$ is an indicator that the binary outcome  $Y$ is observed, and $U:=\Delta Y$ equals $Y$ if $\Delta=1$ and is set to zero otherwise. In other words, this data unit is similar to that defined in Example 1 but with the outcome value possibly missing. We focus here on coefficients indexing the least-squares projection of the true conditional success probability onto the log-linear regression model $\log \pr(Y=1\,|\,W_j=w_j)=\alpha_0+\alpha_jw_j$. Assuming missingness at random, that is, that $Y$ and $\Delta$ are independent conditionally upon $W$, the parameter
\begin{align}
	\Psi_j(P):=\frac{\text{cov}_P\left[W_j,\,\log E_P\left\{P(U = 1\,|\,\Delta = 1, W)\,|\,  W_j\right\}\right]}{\text{var}_P(W_j)} \label{eqn:ident_exp_2_param}
\end{align}
identifies the coefficient associated to $W_j$ in the projection onto the log-linear working model, and simplifies to $\alpha_j$ when this working model holds true. This parameter represents a measure of association between positive outcome $Y$ and covariate $W_j$ for use when $Y$ is possibly missing at random given $W$.  We are interested in testing, within a nonparametric model, the multivariate null hypothesis that all coefficients of this working log-linear model equal zero.

\noindent\textbf{Example 3: coefficients of a working effect modification model for randomized trials.} In our third example, we consider the data unit $X=(W,A,Y)$, where $W:=(W_1,W_2,\ldots,W_d)$ once more represents a vector of real-valued covariates, $A\in\{0,1\}$ is a binary treatment variable, and $Y$ is a binary outcome of interest, and focus on the interaction coefficient of the least-squares projection of the true conditional success probability onto the logistic model  $\logit\pr\left(Y=1\,|\, W_j=w, A_j=a\right)=\alpha_{0j} + \alpha_{1j} a + \alpha_{2j} w + \delta_j w a$. This coefficient provides a measure of the degree to which $W_j$ modifies the effect of $A$ on $Y$ in a randomized trial. The parameter of interest can be expressed as
\begin{align*}
\Psi_j(P):= \argmin_{\gamma}\min_{ \alpha}E_P\left[\logit \,P(Y=1\,|\,A,W_j)\,-\alpha_{0}-\alpha_{1} A-\alpha_{2} W_j-\gamma W_jA\right]^2 \ ,
\end{align*}which identifies the interaction coefficient in this working model, and simplifies to $\delta_j$ when the working logistic model above holds.  Once more, we are interested in testing, within a nonparametric model, the multivariate null hypothesis that each $\Psi_j(P)$ is equal to zero.

\section{Proposed testing procedure}
\label{sec:prop_test_proc}

\subsection{Non-adaptive test}

While the test we ultimately propose is adaptive, it can be viewed as a refinement of non-adaptive counterparts, which we begin by describing. We define $\mathcal{M}_0:=\{P\in\mathcal{M}:\Psi_j(P)=0\text{ for each }j=1,2,\ldots,d\}$ to be the collection of all distributions in $\mathcal{M}$ under which the null hypothesis \eqref{hyp} is true. Suppose that an estimator $\psi_n:=(\psi_{1n},\psi_{2n},\ldots,\psi_{dn})$ of $\psi_0$ is available, and that for each $P\in\mathcal{M}$, $n^{1/2}(\psi_n-\psi_0)$ tends in distribution to a random vector $U_0$ following the $d$-dimensional normal distribution $Q_0$ with mean zero and positive definite covariance matrix $\Sigma_0=\Sigma_0(P_0)$. We define $U_n:=n^{1/2}\psi_n$, and note that $U_n$ tends in distribution to $U_0$ provided $P_0\in\mathcal{M}_0$. In this work, the statistic $U_n$ will be used as a basis for the tests we construct. Our primary focus is on applications in which $\psi_n$ is an asymptotically linear estimator of $\psi_0$, in which case $\Sigma_0$ can be characterized in terms of the (multivariate) influence function of $\psi_n$. Below, we will utilize knowledge of this influence function to determine what values of $\psi_n$ are far enough from the zero vector to warrant rejecting the null hypothesise. Often, this task is accomplished by identifying a multivariate region $\Theta_0\subseteq \mathbb{R}^d$ such that the test rejecting $H_0$ if and only if $U_n\in \Theta_0$ has type I error that tends to the nominal type I error  $\alpha\in(0,1)$ as $n\rightarrow\infty$. Provided $\Theta_0$ is a continuity set of $Q_0$, this property is achieved if $\int I\{u\in \Theta_0\}\,dQ_0(u)=\alpha$ whenever $P_0\in\mathcal{M}_0$. There are typically infinitely many choices of $\Theta_0$, and it may be unclear which to select in practice. Instead, for a given norm $\varphi$ on $\mathbb{R}^d$, we propose to search for a univariate region $\Theta_0^{*}\subseteq \mathbb{R}$ such that $
\int I\{\varphi(u)\in\Theta_0^*\}\,dQ_0(u)=\alpha$ whenever $P_0\in\mathcal{M}_0$. Then, an asymptotically calibrated test is defined by rejecting $H_0$ if and only if $\varphi(U_n)\in \Theta_0^*$. Use of the norm $\varphi$ thus allows conversion of the original multivariate problem into a univariate one.
 
In practice, there are many choices for $\varphi$, and as we will see, the norm used plays an important role in determining the performance of the resulting test. As an  example, we consider the $\ell_p$-norm $\varphi_p$ defined as $(z_1,z_2,\ldots,z_p)\mapsto \|z\|_p:=(\,z_1^p+z_2^p+\ldots+z_d^p\,)^{\frac{1}{p}}$ along with regions of the form $\Theta^*_0(r) = [r,\infty)$. The choice $r_0:=\min\{r:\int I\{\|u\|_p\geq  r\}\,dQ_0(u)\leq \alpha\}$ ensures that $\Theta_0^*:=\Theta_0^*(r_0)$ provides a calibrated test, in the sense that the test rejecting $H_0$ if and only if $\|U_n\|_p\in\Theta_0^*$ has asymptotic type I error equal to $\alpha$.
The corresponding $p$-value is given by $\int I\{\|u\|_p \geq \|U_n\|_p\}\,dQ_0(u)$. Different choices of $p$ may yield tests with a different power profile over various alternatives. To explore this phenomenon, we may consider a simple example comparing tests resulting from the choice $p=2$ versus $p=\infty$, the latter corresponding to the maximum norm $(z_1,z_2,\ldots,z_p)\mapsto  \|z\|_\infty = \max\left\{|z_1|,|z_2|, \dots, |z_d|\right\}$.
Figure \ref{fig:figure1} illustrates the behavior of these two tests in the case $d=2$. In Panel A, 100 draws are taken from a multivariate normal distribution $\distv_0$ with mean zero and identity covariance matrix. The red circle and blue square represent the boundary of the region $\Theta_0$ of these two tests constructed using empirical estimates of the $95^{\text{th}}$ percentile of the distributions of $\|\nrvs\|_2$ and $\|\nrvs\|_\infty$, respectively. All observations in Panel A except the five with  largest $\ell_2$-norm are contained within the red circle. Similarly, the blue square contains all observations in Panel A except the five with largest $\ell_\infty$-norm. Observations that fall within the blue shaded region result in rejection of the null hypothesis if the $\ell_2$-norm is used to define the test but not if the $\ell_\infty$-norm is instead used. Conversely, observations that fall in the red shaded region result in rejection of the null hypothesis if the $\ell_\infty$-norm is used to define the test but not if the $\ell_2$-norm is instead used. The same square and circle are redrawn in panels B and C to illustrate the behavior of the test under alternatives in which either (B) $\psi_{10} = 0$ and $\psi_{20} \ne 0$, or (C) $\psi_{10} \neq 0$ and $\psi_{20} \neq 0$. While both constructions of a rejection region result in valid asymptotic type I error control, depending on the alternative, one test will outperform the other in power. In Panel B, shifting each observation in only a single direction has a larger impact on the maximum norm of the observations compared to the $\ell_2$-norm since the maximum norm only considers the largest coordinate.  This is shown by the numerous observations (given by red squares) outside of the blue box (equivalent to rejecting $H_0$) and inside the red circle (equivalent to failing to reject $H_0$). In contrast, there is only a single observation outside the red circle and inside the blue box (given by blue triangles). The converse trend is shown in panel C, where the $\ell_2$-norm performs better because it takes into account both coordinates of the shift. 

\begin{figure}
	\centering
	\includegraphics[width = \linewidth]{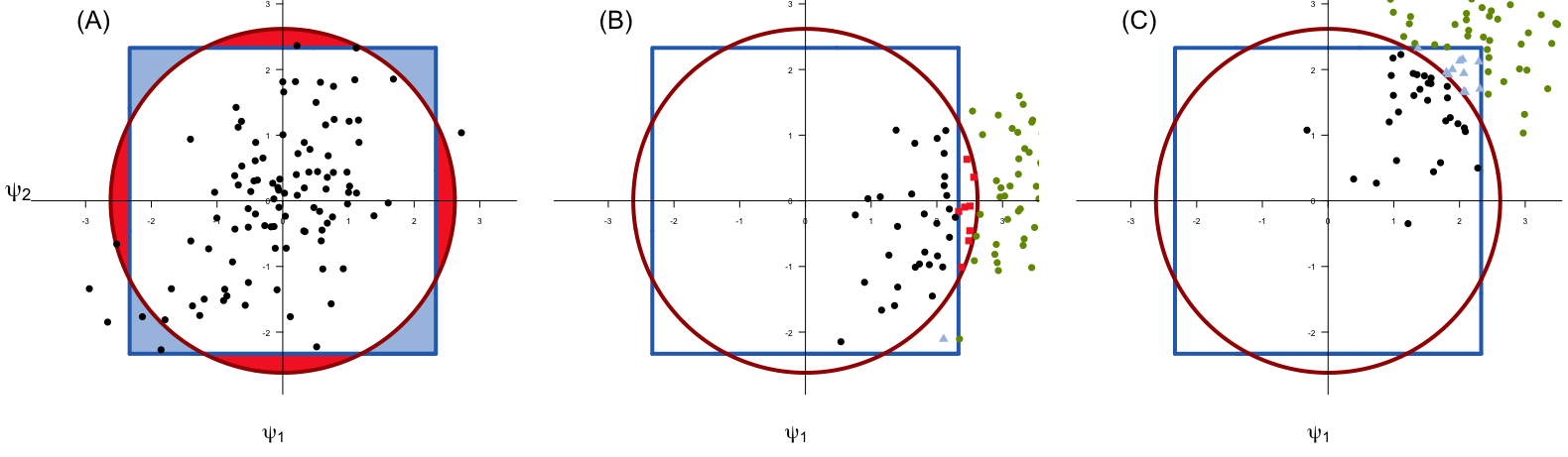}
	\linespread{1}
	\caption{Plots of 100 observations from a limiting distribution of a hypothetical vector of parameter estimators in $\mathbb{R}^2$ (A) under the null, (B) under an alternative with $\psi_1 = 0, \psi_2 \neq 0$, and (C) under an alternative with $\psi_1, \psi_2 \neq 0$. The 95\% quantiles for the data based on the max (blue) and $\ell_2$ (red) norms under the null are given in all three panels. If a test statistic fell within the blue regions the test would fail to reject $H_0$ if the $\ell_\infty$ norm was used, but would reject $H_0$ if the $\ell_2$ norm was used.  The converse is true for the red regions.  Depending on the alternative, the $\ell_\infty$ norm (B) or the $\ell_2$ norm(C) will achieve higher power.}
	\label{fig:figure1}
\end{figure}

\subsection{Adaptive norm selection}

We denote by $\mathscr{F}$ the collection of all norms defined on $\mathbb{R}^d$. So far, we have argued that a test can be defined based on any $\varphi\in \mathscr{F}$ and that the choice of $\varphi$ can influence the power of the test.  In many scenarios, it may not be clear a priori which of several tests should be preferred in a given setting since the power of each test depends on details of the true (unknown) alternative. In order to compare any of several candidate norms, we must first choose an objective criterion for adjudicating, in the setting at hand, the performance of the test statistic $\varphi(U_n)$ for a given norm $\varphi$.

For this purpose, suppose that $\Gamma_0:\mathbb{R}^d\times \mathscr{F} \rightarrow [0,\infty)$ provides a local measure of test inefficiency. Specifically, we stipulate that for any $x \in \mathbb{R}^d \backslash \{0\}$ and $\varphi\in\mathscr{F}$, greater values of $\Gamma_0(x,\varphi)$ indicate a larger asymptotic type II error --- and so, lower power --- for the test based on the test statistic $\varphi(U_n)$ under a location shift by $x$ of the null limiting distribution of $U_n$ under sampling from $P_0$.  In this work, we focus on two particular measures, although our theoretical results are stated in generality. The first, which we refer to as the \emph{acceptance rate} measure, is defined as 
\begin{align}
    \Gamma_{\text{ar}, 0}:(x,\varphi)\mapsto \int \1{\varphi(u+x)\leq c_0}\,dQ_0(u)\ , \label{gamma:accept_rate} 
\end{align}
where $c_0:=\min\{c\geq 0: \int \1{\varphi(u)\leq c}\,dQ_0(u)\geq 1-\alpha\}$ is the smallest cutoff value such that the test rejecting $H_0$ if and only if $\varphi(U_n)>c_0$ has asymptotic type I error equal to $\alpha$. This measure can be interpreted as the asymptotic type II error of the test based on $\varphi(U_n)$ in the context of a sequence of local alternatives under which $\psi_0=\psi^{(n)}_{0}:=x n^{-1/2}$. While it is intuitively simple and straightforward to estimate in practice, this measure can suffer from the fact that its output is constrained in the interval $[0,1 - \alpha]$, so that it becomes less informative --- and thus less useful for discriminating norms --- in settings in which the distribution of $\Gamma_0(U_n, \varphi)$ is concentrated near zero for each norm $\varphi$ considered. Additionally, in view of the exponential tails of the normal distribution, $\Gamma_0(x, \varphi)$ tends to zero rapidly as $x$ tends away from the origin, thereby rendering onerous the task of achieving sufficient relative precision when approximating $\Gamma_0(U_n, \varphi)$ using Monte Carlo methods. These difficulties motivate the consideration of an alternative measure defined as \begin{align}\Gamma_{\text{mf}, 0}:(x,\varphi)\mapsto\min\left\{s\geq 0: \int \1{\varphi(u+sx)\leq c_0}\,dQ_0(u)\leq \tau\right\}\label{eqn:mult_fact_pm}\end{align} for some user-specified $\tau\in(0,1-\alpha)$. We refer to this as the \emph{multiplicative factor} measure since it provides the smallest factor $\kappa$ such that the asymptotic type II error of the test based on $\varphi(U_n)$ is no greater than $\tau$ in the context of a sequence of local alternatives under which $\psi^{(n)}_{0}:=\kappa x n^{-1/2}$. This  measure avoids the drawback of the acceptance rate by operating on  a multiplicative scale, though it does so at the expense of simplicity of interpretation and computational ease.

Suppose that we consider a finite collection $\mathscr{F}_0:=\{\varphi_1,\varphi_2,\ldots,\varphi_K\}\subset \mathscr{F}$ of norms on $\mathbb{R}^d$, which we wish to discriminate based on a given local measure of test inefficiency $\Gamma_0$.  Suppose also that an estimator $\Gamma_n$ of $\Gamma_0$ based on $X_1,X_2,\ldots,X_n$ is available. Then, it is sensible to consider $\Gamma_n(U_n,\varphi)$ as an estimated local measure of test inefficiency for a given norm $\varphi$, where local here refers to consideration of local alternatives defined by $U_n$ itself. As a first attempt at developing a test based on adaptive norm selection, we could consider using the test statistic $\varphi_{k_n(U_n)}(U_n)$ with $k_n(U_n):=\argmin_{k}\Gamma_{n}(U_n,\varphi_k)$ --- this amounts to considering the univariate summary  $\varphi(U_n)$ based on the norm $\varphi\in\mathscr{F}_0$ with the smallest estimated local measure of test inefficiency. However, the test statistic $\varphi_{k_n(U_n)}(U_n)$ appears difficult to make valid inference with since its limit distribution is difficult to derive --- for example, the lack of continuity of  $\varphi_{k_n(U_n)}(U_n)$ as a function of $U_n$ precludes the use of a continuous mapping theorem. More importantly, this test statistic produces an undesirable ordering in the space of alternatives, as illustrated in Figure \ref{fig:adapt_norm_issue} with a simple example in which $\mathscr{F}_0 = \{\varphi_1, \varphi_2\}$ contains only two given norms, and the parameter of interest takes values in $\mathbb{R}^{2}$. In the figure, for each alternative, the color indicates which of $\varphi_1$ (dark red) and $\varphi_2$ (light blue) is preferred to define a test.  However, the norm $\varphi_1$ takes values that are substantially larger than $\varphi_2$ for observations that are similar. As a result for the two points $u_{n, 1}$, $u_{n, 2}$, even though $\varphi_1(u_{n, 1}) < \varphi_1(u_{n, 2})$ and $\varphi_2(u_{n, 1}) < \varphi_2(u_{n, 2})$, it is also true that  $\varphi_{k_n(u_{n, 2})}(u_{n, 2}) < \varphi_{k_n(u_{n, 1})}(u_{n, 1})$.  Thus, even though $u_{n, 2}$ is further away from the null as measured by both norms, the relative size of the norms makes it appear as though $u_{n, 1}$ is more surprising than $u_{n, 2}$ under the null according to the adaptive norm test statistic.

\begin{figure}
		\centering
		\linespread{1}
\includegraphics[width = \linewidth]{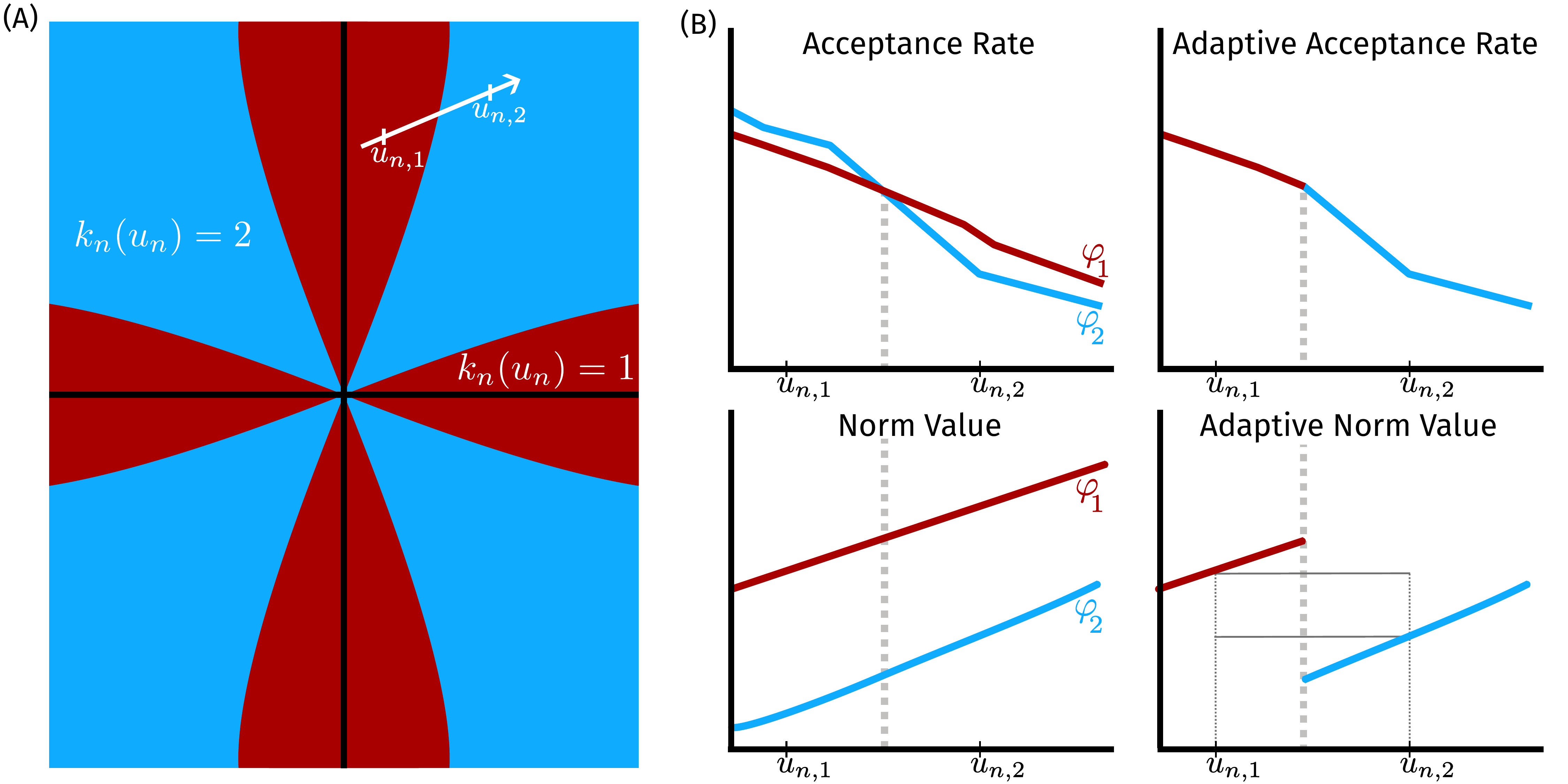}
\caption{This figure illustrates two issues that could arise when using the adaptive norm value as a test statistic. (A) shows regions of $\mathbb{R}^{2}$ in which $\varphi_1$ (dark red) or $\varphi_2$ (light blue) have better (hypothetical) acceptance rate value.  A line  segment containing two points $u_{n, 1}$ and $u_{n, 2}$ is also shown, and the points along this line segment form the $x$-axis of the four figures in (B). The arrow indicates the direction along the line segment in which both $\varphi_1$ and $\varphi_2$ increase. The top left display in (B) shows the hypothetical values of the acceptance rate measure along the $u_n$-values shown on the white arrow in (A), and the top right panel shows the adaptive version of this measure (the pointwise minimum of the individual acceptance rate measures).  The bottom left display indicates the norm values, and the bottom right display shows the adaptive norm value wherein the norm with lowest acceptance rate is used.  As shown by the two horizontal line segments in this display, the adaptive norm value does not necessarily increase as $u_n$-values are taken further away from the origin. Additionally, the discontinuity of the adaptive norm value is apparent in the bottom right display.}
	\label{fig:adapt_norm_issue}
\end{figure}

In view of these challenges, we consider another strategy for building an aggregate test statistic.  We observe that for any norm $\varphi$, if $\Gamma_0$ were known,  $\Gamma_0(U_n,\varphi)$ could serve as a sensible alternative to the test statistic $\varphi(U_n)$, with smaller values of $\Gamma_0(U_n,\varphi)$ supporting rejection of the null hypothesis. The use of $\Gamma_0(U_n, \varphi)$  as a test statistic has desirable properties.
First, the interpretation of the realizations of $\Gamma_0(U_n, \varphi)$ depends neither on the norm used nor on the limiting distribution $Q_0$. As a result, the value of $\Gamma_0(u_n, \varphi)$ can be directly compared across choices of $\varphi$, thereby facilitating the construction of an adaptive test statistic. Second, the ordering induced on the parameter space by $\Gamma_0$ is sensible. To illustrate this, suppose first that two realizations  $u_{n,1}$ and $u_{n,2}$ of $U_n$ fall on a common ray, that is, $u_{n,1}=\beta_1 v$ and $u_{n,2}=\beta_2 v$ for some direction $v\in\mathbb{R}^d$ and non-negative values $\beta_1,\beta_2\in\mathbb{R}$.  For $\beta_1>\beta_2$, we expect $u_{n,1}$ to be less likely than $u_{n,2}$ under the null, and indeed, in these settings $\Gamma_0(u_{n,1},\varphi)<\Gamma_0(u_{n,2},\varphi)$ under regularity conditions introduced in the next section. 
Suppose instead that $u_{n,1}$ and $u_{n,2}$ are such that $\varphi(u_{n,1}) = \varphi(u_{n,2})$. In this case, depending on $Q_0$, either $u_{n,1}$ or $u_{n,2}$ could be more likely. A test statistic based on $\Gamma_0$ allows for consideration of $Q_0$ and thus permits differentiation of $u_{n,1}$ and $u_{n,2}$ even when these realizations may be undifferentiated by $\varphi$.

In practice, since $\Gamma_0$ is unknown, the test statistic $\Gamma_n(U_n,\varphi)$ would be used instead of $\Gamma_0(U_n,\varphi)$. We observe that, by definition, \begin{equation}
    Z_n:=\Gamma_{n}(U_n, \varphi_{k_n(U_n)})=\min\{\Gamma_{n}(U_n, \varphi_1),\Gamma_{n}(U_n, \varphi_2), \ldots, \Gamma_{n}(U_n, \varphi_K)\}\ .\label{teststat}
\end{equation} As such, the adaptive test statistic $Z_n$ is a continuous transformation of the single-norm statistics $\Gamma_{n}(U_n,\varphi_1),\Gamma_{n}(U_n,\varphi_2),\ldots,\Gamma_{n}(U_n,\varphi_K)$. Provided $u\mapsto \Gamma_0(u,\varphi)$ is continuous for each $\varphi\in \mathscr{F}_0$, this implies that a non-degenerate limit distribution can be derived for the test statistic $Z_n$, thereby facilitating valid inference. Specifically, under regularity conditions, we may expect $Z_n$ to converge in distribution to the random variable $Z_0:=\min\{\Gamma_{0}(U_0,\varphi_1),\Gamma_{0}(U_0,\varphi_2),\ldots,\Gamma_{0}(U_0,\varphi_K)\}$, where $U_0$ is distributed according to $Q_0$. This motivates an adaptive test in which we
reject $H_0$ if and only if $Z_n>\chi_n$, where $\chi_n$ is any consistent estimator of the $(1-\alpha)$-quantile $\chi_0$ of the distribution of $Z_0$.  

We note here that our proposed procedure was inspired by the proposal of \citet{zhang_comment_2015}, which can be considered a special case of our framework. Their method, which focuses on the specific problem of testing for null correlations in the setting of univariable linear models, is recovered by taking $\psi_n$ to be the vector of sample correlations, $\mathscr{F}_0$ to be a collection of sum-of-squares norms, and $\Gamma_0$ to be the observed $p$-value for the test based on $\varphi(U_n)$. The sum-of-squares norm is defined as \[\jmath_{k}:(z_1,z_2,\ldots,z_d)\mapsto \left\{\textstyle\sum_{j=1}^{k}z^2_{(d-j+1)}\right\}^{1/2}\]for any fixed $k\in\{1,2,\ldots,d\}$ and with $z^2_{(r)}$ denoting the $r^{th}$ order statistic based on $z^2_1,z_2^2, \ldots, z^2_d$ for each $r=1,2\ldots,d$. A proof that the sum-of-squares norm is indeed a proper norm is provided in Lemma \ref{lemma:ssq_norm} of the Supplement.

\subsection{Implementation of proposed adaptive test}
\label{ssec:obtaining_null}

Suppose that $\Sigma_n$ is a consistent estimator of $\Sigma_0$, and denote by $Q_n$ the distribution function of the normal distribution with mean zero and covariance matrix $\Sigma_n$. An estimator $\Gamma_{n}$ can be derived by replacing $Q_0$ by $Q_n$ in the definition of $\Gamma_0$. We set $Z^{*}_n:= \min\{\Gamma_{n}(\bar{U}_n,\varphi_1),\Gamma_{n}(\bar{U}_n,\varphi_2)\ldots,\Gamma_{n}(\bar{U}_n,\varphi_K)\}$, where $\bar{U}_n$ represents a random draw from $Q_n$, and note that $Z_n^*$ serves as a natural proxy for a random draw from the null limit distribution of $Z_n$. Because the distribution of $Z_n$ is difficult to calculate in practice, we instead define our cutoff value $\chi^*_n$ as the $(1 - \alpha)$-quantile of $Z_n^{*}$. Below, we will establish properties of the test in which we
\begin{equation}\label{test}
    \text{reject }H_0\text{ if and only if }Z_n>\chi_n^*\ .
\end{equation}
While an analytic form $\chi^*_n$ is not currently available, its value can be approximated with an arbitrary level of accuracy using the following steps:
\begin{enumerate}
    \item for $M$ large, conditionally on $Q_n$, generate independent draws $\bar{U}_{n,1},\bar{U}_{n,2},\ldots,\bar{U}_{n,M}$ from $Q_n$;
    \item set $\bar{Z}_{n,m}:=\min\{\Gamma_{n}(\bar{U}_{n,m},\varphi_1),\Gamma_{n}(\bar{U}_{n,m},\varphi_2),\ldots,\Gamma_{n}(\bar{U}_{n,m},\varphi_K)\}$ for $m=1,2,\ldots,M$;
    \item compute the sample $(1-\alpha)$-quantile $\chi^*_{n,m}$ based on $\{\bar{Z}_{n,1},\bar{Z}_{n,2},\ldots,\bar{Z}_{n,M}\}$.
\end{enumerate}For concreteness of discussion, suppose that, for each $j=1,2,\ldots,K$, $\psi_{jn}$ is an asymptotically linear estimator of $\psi_{j0}$ with influence function $\phi_j:\mathcal{X}\rightarrow\mathbb{R}$, in the sense that \[\psi_{jn}=\psi_{j0}+\frac{1}{n}\sum_{i=1}^{n}\phi_j(X_i)+o_P(n^{-1/2})\]with $E_0\{\phi_j(X)\}=0$ and $var_0\{\phi_j(X)\}<\infty$. Suppose that the form of each $\phi_j$ is known up to some dependence on the unknown data-generating distribution $P_0$. Asymptotic linearity of $\psi_{n}$ readily implies that, under the null hypothesis, $U_n$ tends to a random vector following a multivariate normal distribution with mean zero and covariance matrix $\Sigma_0$ with $jk^{th}$ element $\Sigma_{jk0}:=\int \phi_j(x)\phi_k(x)\,dP_0(x)$. We will require a consistent estimator $\Sigma_n$ of $\Sigma_0$ in our developments --- a natural candidate is the empirical cross-moment estimator,  defined entrywise as $\Sigma_{jkn}:=\frac{1}{n}\sum_{i=1}^{n}\phi_{jn}(X_i)\phi_{kn}(X_i)$, where $\phi_{jn}$ and $\phi_{kn}$ are estimators of the influence functions $\phi_j$ and $\phi_k$. While for simplicity this empirical estimator is employed in all simulations and data analyses reported below, more sophisticated procedures for covariance estimation --- e.g., as described by \cite{ledoit_well-conditioned_2004, ledoit_power_2020} --- could be used instead. The implementation of our approach also requires selection of a collection $\mathscr{F}_0$ of norms. In this article, we explicitly consider the $\ell_p$ and sum-of-squares norms.

\section{Large-sample properties of proposed test}  \label{sec:theorems}

In this section, we establish conditions under which the adaptive test outlined in \eqref{test} is guaranteed to have desirable statistical properties. In addition to type I error control and consistency against fixed alternatives, we will show that our proposed test has nontrivial power against local alternatives. For each theorem in this Section, a proof is provided in the Supplement.

Since $\Gamma_0$ depends on $Q_0$ only through $\Sigma_0$, we explicitly denote the local measure of test inefficiency as a fixed mapping $(u,\Sigma,\varphi)\mapsto\Gamma(u,\Sigma,\varphi)$ for which we have that $\Gamma_0(u,\varphi)=\Gamma(u,\Sigma_0,\varphi)$ for each $u\in\mathbb{R}^d$ and $\varphi\in\mathscr{F}_0$. For simplicity, we consider an arbitrary estimator $\Gamma_n$ of $\Gamma_0$ of the form   $\Gamma_n(u,\varphi)=\Gamma(u,\Sigma_n,\varphi)$ for each $u\in\mathbb{R}^d$ and $\varphi\in\mathscr{F}_0$, where $\Sigma_n$ is any consistent estimator of $\Sigma_0$.
By explicitly representing the dependence of $\Gamma_0$ and $\Gamma_n$ on $\Sigma_0$ and $\Sigma_n$, respectively, via $\Gamma$, the consistency of $\Gamma_n$ to $\Gamma_0$ can be established as a consequence of a simple continuity condition on $\Gamma$. We introduce the following conditions on the local measure of test inefficiency relative to a given norm $\varphi$, where we denote by $\mathbb{V}_d$ the space of all positive definite $d\times d$ matrices:

\begin{enumerate}[label= C\arabic{enumi}), ref = C\arabic{enumi}]
	\item $(u,\Sigma)\mapsto \Gamma(u,\Sigma,\varphi)$ is continuous and non-negative on $\mathbb{R}^d \times B_0$ for some neighborhood $B_0\subset \mathbb{V}_d$  of $\Sigma_0$; \label{itm:cond_cont_non_neg}
	\item $\int I\{\Gamma_0(u,\varphi)=t\}\,dQ_0(u)=0$ for every $t\geq 0$;\label{itm:cond_dom_by_lebesgue}
	\item $\Gamma(x_s, \Sigma, \varphi)\rightarrow 0$ uniformly over $\Sigma \in B_1$ for some neighborhood $B_1\subset \mathbb{V}_d$ of $\Sigma_0$ for every sequence $x_1,x_2,\ldots$ of elements of $\mathbb{R}^d$ such that $\varphi(x_s)\rightarrow \infty$; \label{itm:cond_conv_prb_zr} 
	\item $u\mapsto \Gamma_0(u, \varphi)$ is quasi-concave, in the sense that $\{u : \Gamma_0(u,\varphi) \geq a\}$ is convex for every $a\geq 0$; \label{itm:cond_unimod}
	\item $u\mapsto \Gamma_0(u,\varphi)$ is centrally symmetric, in the sense that $\Gamma_0(u,\varphi)=\Gamma_0(-u,\varphi)$ for every $u\in\mathbb{R}^d$. \label{itm:cond_cent_sym}
\end{enumerate}

The result below states that, under mild conditions, the proposed  test has valid type I error rate and power tending to one under each fixed alternative as sample size tends to infinity.

\begin{theorem}
	\label{thm:t1ec}
	Suppose that 
	conditions \ref{itm:cond_cont_non_neg}--\ref{itm:cond_dom_by_lebesgue} hold for each $\varphi\in\mathscr{F}_0$. Under sampling from $P_0$, as $n\rightarrow\infty$, the rejection rate of the proposed test \eqref{test}:
	\begin{enumerate}[label=\alph*)]
	    \item tends to $\alpha$  if $P_0\in\mathscr{M}_0$;
	    \item tends to 1 if $P_0\notin \mathscr{M}_0$ provided condition \ref{itm:cond_conv_prb_zr} also holds for some $\varphi\in\mathscr{F}_0$.
	\end{enumerate}
\end{theorem}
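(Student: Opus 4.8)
The plan is to reduce the statement to two limit facts about the adaptive statistic $Z_n$ and the data-driven critical value $\chi_n^*$: under the null, (i) $Z_n \indist Z_0$, and (ii) $\chi_n^* \inprob \chi_0$, where $\chi_0$ is the relevant quantile of the law of $Z_0 := \min_{1 \le k \le K} \Gamma(U_0, \Sigma_0, \varphi_k)$ with $U_0 \sim Q_0$. Writing $\chi^*(\Sigma)$ for the corresponding quantile of the conditional law of $\min_k \Gamma(\bar U, \Sigma, \varphi_k)$ when $\bar U \sim N(0,\Sigma)$, the critical value is $\chi_n^* = \chi^*(\Sigma_n)$, a measurable functional of $\Sigma_n$ alone, so the whole argument is driven by the consistency $\Sigma_n \inprob \Sigma_0$ together with the continuity-type conditions \ref{itm:cond_cont_non_neg}--\ref{itm:cond_conv_prb_zr}.

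For fact (i), I would note that $\Sigma_n \inprob \Sigma_0$ (a constant limit) and $U_n \indist U_0$ give, by Slutsky, the joint convergence $(U_n, \Sigma_n) \indist (U_0, \Sigma_0)$. Since the map $(u,\Sigma) \mapsto \min_k \Gamma(u,\Sigma,\varphi_k)$ is a minimum of finitely many functions that are jointly continuous on $\mathbb{R}^d \times B_0$ by \ref{itm:cond_cont_non_neg}, and the limit is supported on $\mathbb{R}^d \times \{\Sigma_0\} \subset \mathbb{R}^d \times B_0$, the continuous mapping theorem yields $Z_n \indist Z_0$. I would also record here that the limit $Z_0$ is atomless: since $\{Z_0 = t\} \subseteq \bigcup_k \{\Gamma(U_0,\Sigma_0,\varphi_k) = t\}$, a union bound and \ref{itm:cond_dom_by_lebesgue} (applied to each $\varphi_k \in \mathscr{F}_0$) give $\int I\{Z_0 = t\}\,dQ_0 = 0$ for every $t$, so the distribution function of $Z_0$ is continuous. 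In particular, because $Z_0 \ge 0$ by \ref{itm:cond_cont_non_neg} and has no atom at $0$, its relevant quantile $\chi_0$ is strictly positive; I will use this in part (b).

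Fact (ii) is where the main difficulty lies, because $\chi_n^*$ is a conditional quantile in which $\Sigma_n$ enters twice --- once through the sampling law $N(0,\Sigma_n)$ and once through the functional $\Gamma(\cdot,\Sigma_n,\cdot)$. The plan is to show that $\Sigma \mapsto \chi^*(\Sigma)$ is continuous at $\Sigma_0$ and then invoke the continuous-mapping principle for convergence in probability to conclude $\chi^*(\Sigma_n) \inprob \chi^*(\Sigma_0) = \chi_0$. For continuity, I would take any deterministic $\Sigma_m \to \Sigma_0$; the weak convergence $N(0,\Sigma_m) \indist N(0,\Sigma_0)$ combined with the joint continuity in \ref{itm:cond_cont_non_neg} and an extended continuous mapping theorem gives $\min_k \Gamma(\bar U_m, \Sigma_m, \varphi_k) \indist Z_0$, and since the limit law is continuous, the corresponding quantiles converge, so $\chi^*(\Sigma_m) \to \chi_0$. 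Part (a) then follows: by Slutsky, $(Z_n, \chi_n^*) \indist (Z_0, \chi_0)$, and because the distribution function of $Z_0$ is continuous at $\chi_0$, the probability of the rejection event converges to the value $\alpha$ pinned down by the definition of $\chi_0$.

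For part (b), under a fixed alternative $P_0 \notin \mathscr{M}_0$ we have $\psi_0 \ne 0$, so $\psi_n \inprob \psi_0$ and hence $\varphi(U_n) = n^{1/2}\varphi(\psi_n) \inprob \infty$ for every norm. Fixing a norm $\varphi$ for which \ref{itm:cond_conv_prb_zr} holds, the uniformity over a neighborhood of $\Sigma_0$ together with $\Sigma_n \inprob \Sigma_0$ gives $\Gamma(U_n,\Sigma_n,\varphi) \inprob 0$; since $0 \le Z_n \le \Gamma(U_n,\Sigma_n,\varphi)$, this forces $Z_n \inprob 0$. Because $Z_n$ is driven to $0$ --- the extreme value of the inefficiency measure that corresponds to rejection --- while $\chi_n^* \inprob \chi_0 > 0$ by fact (ii), the rejection probability tends to one. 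I expect the delicate step throughout to be fact (ii): making rigorous the continuity of the conditional quantile $\chi^*$ at $\Sigma_0$, in particular combining the extended continuous mapping theorem with the atomlessness of $Z_0$ to secure quantile convergence, and then transferring this deterministic continuity statement to the random, consistent estimator $\Sigma_n$.
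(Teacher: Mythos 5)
Your proposal is correct and takes essentially the same route as the paper's proof (including your reading of the test as rejecting for small $Z_n$, which is the convention the paper's own supplement analyzes): the paper likewise reduces the theorem to (i) $Z_n \rightsquigarrow Z_0$ under the null via joint convergence of $(U_n,\Sigma_n)$ and the continuity condition C1, and (ii) consistency of the simulated critical value via weak convergence of the draw from the estimated normal law plus atomlessness of $Z_0$ (obtained, exactly as you do, from a union bound over the finitely many norms and C2, packaged there as a lemma about $\Gamma^*:=\min_{\varphi\in\mathscr{F}_0}\Gamma(\cdot,\cdot,\varphi)$), then concludes part (a) by a Slutsky-type argument and part (b) from the sandwich $0 \le Z_n \le \Gamma(U_n,\Sigma_n,\varphi_0) \inprob 0$ against the strictly positive limiting quantile. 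Your only departures are implementation details within the same strategy: you factor the critical-value consistency through continuity of the deterministic map $\Sigma \mapsto \chi^*(\Sigma)$ followed by the continuous mapping theorem (arguably a cleaner justification than the paper's direct appeal to van der Vaart's Lemma 21.2 for the random conditional quantiles), and you handle the randomness of $U_n$ when invoking condition C3 via an implicit $\varepsilon$--$M$ reformulation where the paper runs an almost-sure subsequence argument.
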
	
Since in practice studies are typically designed to have power substantively below one in view of cost and other logistic constraints, studying the asymptotic behavior of the proposed test for these settings is of interest and motivates consideration of local alternatives.
Specifically, a local alternative to $P_0\in\mathscr{M}_0$ is a one-dimensional parametric submodel  $\{P_t\}\subset \mathscr{M}$  of $\mathscr{M}$ dominated by $P_0$ and such that the Radon-Nikodym derivative of $P_t$ relative to $P_0$ satisfies, for $t$ in a neighborhood of zero, \begin{align}\label{local}
\frac{dP_t}{dP_0}(x)=1+tg(x)+tr_t(x)
\end{align}for some element  $g$ in the tangent space of $\mathscr{M}$ but not in the tangent space of $\mathscr{M}_0$ at $P_0$, and where $r_t$ is a remainder term tending to zero in a uniform sense \citep{pfanzagl_estimation_1990}. The estimator $\psi_n$ is said to be regular at $P_0\in\mathscr{M}$  if the limit distribution of $n^{1/2}(\psi_n-\psi_0)$ under sampling from $P_0$ and of $n^{1/2}(\psi_n-\psi_0^{(n)})$ under sampling from $P_0^{(n)}:=\left.P_t\right|_{t=n^{-1/2}}$ is the same, where we write $\psi_0^{(n)}:=\Psi(P_0^{(n)})$ and $\{P_t\}$ is any local alternative to $P_0$.  We note that, for any such sequence $P_0^{(n)}$, it holds that $\psi_0^{(n)}=a n^{-1/2}+o(1)$ for some $a\in\mathbb{R}^d\setminus \{0\}$. The following theorem states that, under certain regularity conditions, if the estimator $\psi_n$ is regular, then the proposed test is locally unbiased in the sense that it has non-trivial power under local alternatives.

\begin{theorem}
	\label{thm:unbiased_locl_alt}
Suppose that $P_0\in\mathscr{M}_0$, and let $P_0^{(n)}$ be a sequence of local alternatives converging to $P_0$.  Suppose also that conditions \ref{itm:cond_cont_non_neg}, \ref{itm:cond_dom_by_lebesgue}, \ref{itm:cond_unimod} and \ref{itm:cond_cent_sym} hold for each $\varphi\in\mathscr{F}_0$, that condition \ref{itm:cond_conv_prb_zr} holds for some $\varphi\in\mathscr{F}_0$,
and that $\psi_n$ is a regular estimator of $\psi_0$ under sampling from $P_0$. Then, the rejection rate $\pi_n$ of the proposed test under sampling from $P_0^{(n)}$ satisfies that $\liminf_n \pi_n >\alpha$.
\end{theorem}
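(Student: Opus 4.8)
The plan is to pass to the joint weak limit of the test statistic and the data-driven cutoff under $P_0^{(n)}$, and then to establish strict local unbiasedness through a convexity-and-symmetry argument in the spirit of Anderson's lemma. First I would pin down the limit law of $Z_n$. Regularity of $\psi_n$ gives that $n^{1/2}(\psi_n-\psi_0^{(n)})\indist U_0\sim Q_0$ under sampling from $P_0^{(n)}$; since $n^{1/2}\psi_0^{(n)}\to a$ for some $a\in\mathbb{R}^d\setminus\{0\}$, Slutsky's theorem yields $U_n=n^{1/2}\psi_n\indist U_0+a$. Local alternatives of the form \eqref{local} are mutually contiguous with $P_0$ (Le Cam's first lemma), so the assumed consistency of $\Sigma_n$ for $\Sigma_0$ under $P_0$ transfers to $P_0^{(n)}$. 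Writing $Z_n=\min_k\Gamma(U_n,\Sigma_n,\varphi_k)$ and using the continuity furnished by \ref{itm:cond_cont_non_neg} together with continuity of the minimum, the continuous mapping theorem applied to $(U_n,\Sigma_n)$ gives $Z_n\indist Z_a:=\min_k\Gamma_0(U_0+a,\varphi_k)$.

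Next I would handle the cutoff. Because $\chi_n^*$ is built solely from $Q_n$ (equivalently from $\Sigma_n$), and $\Sigma_n\inprob\Sigma_0$ under $P_0^{(n)}$ by the contiguity just invoked, the argument establishing $\chi_n^*\inprob\chi_0$ in the proof of Theorem \ref{thm:t1ec} carries over unchanged, where $\chi_0$ is the relevant quantile of $Z_0:=\min_k\Gamma_0(U_0,\varphi_k)$. Condition \ref{itm:cond_dom_by_lebesgue} ensures that each $\Gamma_0(U_0,\varphi_k)$ --- and hence $Z_0$ and $Z_a$ --- is atomless, so $\chi_0$ is a continuity point of the law of $Z_a$. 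Combining $Z_n\indist Z_a$ with $\chi_n^*\inprob\chi_0$ through Slutsky's theorem and the portmanteau lemma, the rejection rate $\pi_n$ converges to the probability that the shifted limit $U_0+a$ falls in the limiting rejection region.

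The crux is to show that this limiting rejection probability exceeds $\alpha$. In the limit, the complementary (non-rejection) region is $C:=\{u:\min_k\Gamma_0(u,\varphi_k)\geq\chi_0\}=\bigcap_k\{u:\Gamma_0(u,\varphi_k)\geq\chi_0\}$, and $\int I\{u\in C\}\,dQ_0(u)=1-\alpha$ by the definition of $\chi_0$. By \ref{itm:cond_unimod} each superlevel set in this intersection is convex, so $C$ is convex, and by \ref{itm:cond_cent_sym} each is symmetric about the origin, so $C$ is centrally symmetric. Anderson's inequality, applied to the Gaussian (hence symmetric and unimodal) measure $Q_0$ and the convex symmetric set $C$, gives $\int I\{u+a\in C\}\,dQ_0(u)\leq 1-\alpha$, so the limiting rejection probability $1-\int I\{u+a\in C\}\,dQ_0(u)\geq\alpha$. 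To upgrade this to a strict inequality I would invoke the strict form of Anderson's lemma, which applies because $a\neq 0$ and $C$ is bounded: boundedness follows from \ref{itm:cond_conv_prb_zr}, since for the norm $\varphi$ satisfying that condition $\Gamma_0(u,\varphi)\to 0$ as $\|u\|\to\infty$ and, as $\chi_0>0$, the set $C$ is contained in a ball. This yields $\liminf_n\pi_n>\alpha$.

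I expect the strict-inequality step to be the main obstacle: the nonstrict Anderson bound delivers only power at least $\alpha$, and promoting it to strict local unbiasedness requires verifying the hypotheses of the strict version --- nondegeneracy of the shift $a$ and boundedness of $C$ --- the latter being precisely where \ref{itm:cond_conv_prb_zr} and the positivity of $\chi_0$ (which itself follows from $Z_0>0$ almost surely) enter. A secondary technical point is the contiguity argument needed to carry the consistency of $\Sigma_n$, and hence the behavior of the cutoff, from $P_0$ to the sequence $P_0^{(n)}$.
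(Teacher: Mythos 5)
Your proposal is correct and follows essentially the same route as the paper's proof: a shifted Gaussian limit for $U_n$ via regularity, consistency of the cutoff, and then strict local unbiasedness from Anderson's theorem applied to the acceptance region, which is convex by \ref{itm:cond_unimod}, centrally symmetric by \ref{itm:cond_cent_sym}, and bounded by \ref{itm:cond_conv_prb_zr} together with the positivity of the limiting cutoff --- exactly the ingredients the paper assembles. The only difference is bookkeeping: you invoke the strict form of Anderson's lemma as a black box (and make the contiguity transfer of $\Sigma_n$'s consistency explicit), whereas the paper verifies the strictness condition of Anderson's Corollary 1 itself via its Lemmas \ref{lemma:mvtnorm_quasi-concavity}--\ref{lemma:aditional_anderson_cond}.
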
This theorem guarantees that the rejection rate is greater under local alternatives than it is under the null.  Theorems \ref{thm:t1ec} and \ref{thm:unbiased_locl_alt} indicate that the proposed test has desirable properties provided several conditions on the local measure of test inefficiency used hold. The next result establishes that the two measures presented in Section 3, namely the acceptance rate and multiplicative factor measures, indeed satisfy all required conditions, and therefore, can be used in our procedure.
\begin{theorem}
	\label{thm:performance_metric}
	Both the acceptance rate measure \eqref{gamma:accept_rate} and the multiplicative factor measure \eqref{eqn:mult_fact_pm} satisfy conditions \ref{itm:cond_cont_non_neg}--\ref{itm:cond_cent_sym} for each norm $\varphi$.
\end{theorem}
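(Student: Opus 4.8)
The plan is to prove the five conditions first for the acceptance rate measure \eqref{gamma:accept_rate}, which I regard as the substantive case, and then to deduce all of them for the multiplicative factor measure \eqref{eqn:mult_fact_pm} by a scaling argument. Throughout, write $B_\varphi(c):=\{v\in\mathbb{R}^d:\varphi(v)\le c\}$, a symmetric convex body, let $q_\Sigma$ denote the $N(0,\Sigma)$ density, and let $c(\Sigma)$ be the $(1-\alpha)$-quantile of $\varphi(U)$ for $U\sim N(0,\Sigma)$, so $c_0=c(\Sigma_0)$. After the substitution $w=u+x$ we have $\Gamma_{\mathrm{ar}}(x,\Sigma,\varphi)=\int \mathbf{1}_{B_\varphi(c(\Sigma))}(w)\,q_\Sigma(w-x)\,dw$, exhibiting the acceptance rate as a Gaussian convolution.

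For the acceptance rate I would organize the argument around the structure of $x\mapsto\Gamma_{\mathrm{ar}}(x,\Sigma_0,\varphi)$. Since $\mathbf{1}_{B_\varphi(c_0)}(w)\,q_{\Sigma_0}(w-x)$ is jointly log-concave in $(x,w)$ — being the product of the indicator of the convex set $\mathbb{R}^d\times B_\varphi(c_0)$ and a log-concave Gaussian evaluated at the linear argument $w-x$ — Prékopa's theorem shows the marginal $x\mapsto\Gamma_{\mathrm{ar}}(x,\Sigma_0,\varphi)$ is log-concave, hence quasi-concave, giving \ref{itm:cond_unimod}. Central symmetry \ref{itm:cond_cent_sym} follows from the change of variables $u\mapsto-u$ together with the symmetry of both $B_\varphi(c_0)$ and $q_{\Sigma_0}$. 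For \ref{itm:cond_dom_by_lebesgue} I would note that $x\mapsto\Gamma_{\mathrm{ar}}(x,\Sigma_0,\varphi)$ extends to a holomorphic function of $x\in\mathbb{C}^d$ (the Gaussian integrand is entire in $x$ and $B_\varphi(c_0)$ is bounded), hence is real-analytic; being strictly positive at the origin but vanishing at infinity, it is non-constant, so each level set $\{\Gamma_{\mathrm{ar}}(\cdot,\Sigma_0,\varphi)=t\}$ is the zero set of a non-trivial real-analytic function and therefore Lebesgue-null, hence $Q_0$-null. (Equivalently, log-concavity plus non-constancy forces this level set to be the boundary of the convex superlevel set, which is Lebesgue-null.)

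Conditions \ref{itm:cond_cont_non_neg} and \ref{itm:cond_conv_prb_zr} require varying $\Sigma$. Non-negativity is immediate. For continuity I would first argue that $\Sigma\mapsto c(\Sigma)$ is continuous: $\varphi(U)$ has no atoms (its level sets are Lebesgue-null) and a distribution function strictly increasing on $(0,\infty)$ that varies weakly-continuously with $\Sigma$, so the quantile is well defined and continuous. Joint continuity of $(x,\Sigma)\mapsto\int\mathbf{1}\{\varphi(w)\le c(\Sigma)\}\,q_\Sigma(w-x)\,dw$ on $\mathbb{R}^d\times B_0$, with $B_0$ a relatively compact neighborhood of $\Sigma_0$, then follows by dominated convergence: the integrand converges pointwise almost everywhere (the boundary $\{\varphi(w)=c(\Sigma)\}$ being null) and is dominated by a fixed integrable Gaussian bound uniform over the compact set of covariances. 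For \ref{itm:cond_conv_prb_zr}, taking $B_1$ relatively compact gives $\sup_{\Sigma\in B_1}c(\Sigma)<\infty$, so along any sequence with $\varphi(x_s)\to\infty$ the region $B_\varphi(c(\Sigma))-x_s$ recedes to infinity uniformly in $\Sigma$ by equivalence of norms, and a uniform Gaussian tail bound forces $\Gamma_{\mathrm{ar}}(x_s,\Sigma,\varphi)\to0$ uniformly.

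Finally I would transfer everything to the multiplicative factor. The key observation is that, writing $h_x(s):=\Gamma_{\mathrm{ar}}(sx,\Sigma_0,\varphi)$ for $x\ne0$, the map $s\mapsto h_x(s)$ is the restriction to a ray of a symmetric, quasi-concave, real-analytic, non-constant function, hence strictly decreasing from $h_x(0)=1-\alpha$ to $0$; thus $\Gamma_{\mathrm{mf}}(x,\Sigma_0,\varphi)$ is the unique $s^*$ with $h_x(s^*)=\tau$. This yields, for $a,t>0$, the identities $\{u:\Gamma_{\mathrm{mf}}(u,\Sigma_0,\varphi)\ge a\}=a^{-1}\{v:\Gamma_{\mathrm{ar}}(v,\Sigma_0,\varphi)\ge\tau\}$ and $\{u:\Gamma_{\mathrm{mf}}(u,\Sigma_0,\varphi)=t\}=t^{-1}\{v:\Gamma_{\mathrm{ar}}(v,\Sigma_0,\varphi)=\tau\}$, from which \ref{itm:cond_unimod} and \ref{itm:cond_dom_by_lebesgue} follow at once from the acceptance-rate case since positive scaling preserves convexity and Lebesgue-nullity; \ref{itm:cond_cent_sym} follows because $h_{-x}=h_x$; \ref{itm:cond_conv_prb_zr} follows since large $\varphi(x)$ makes $h_x$ cross $\tau$ at arbitrarily small $s^*$; and \ref{itm:cond_cont_non_neg} on $(\mathbb{R}^d\setminus\{0\})\times B_0$ follows from the strict monotonicity of $H(x,\Sigma,s):=\Gamma_{\mathrm{ar}}(sx,\Sigma,\varphi)$ in $s$ together with its joint continuity, via the standard monotone implicit-function argument. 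The only delicate point is the origin, where $\Gamma_{\mathrm{mf}}(0,\cdot,\varphi)=+\infty$; since $Q_0(\{0\})=0$ and $\Gamma_{\mathrm{mf}}$ is continuous off the origin, this affects neither the integral conditions nor the continuous-mapping arguments of the earlier theorems. I expect the main obstacle to be the acceptance-rate analysis underlying \ref{itm:cond_unimod} and \ref{itm:cond_dom_by_lebesgue}: recognizing the measure as a log-concave Gaussian convolution so that Prékopa's theorem applies, and ruling out fat level sets via real-analyticity (or, equivalently, via the convex-geometry fact that an interior level point would force local constancy). Once these structural facts are secured, the remaining conditions and the entire multiplicative-factor case follow by change of variables, dominated convergence, and scaling.
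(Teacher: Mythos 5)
Your proposal is correct, and it reaches all five conditions for both measures by a route that differs from the paper's in several substantive places. The paper proves quasi-concavity (\ref{itm:cond_unimod}) for the acceptance rate via Rinott's log-concave measure inequality combined with a separate set-translation lemma showing $tA_{x_1}+(1-t)A_{x_2}=A_{tx_1+(1-t)x_2}$; you instead view $\Gamma_{\mathrm{ar}}(\cdot,\Sigma_0,\varphi)$ as the convolution $\mathbf{1}_{B_\varphi(c_0)}*q_{\Sigma_0}$ and invoke Pr\'ekopa's marginal theorem, which yields the stronger conclusion of log-concavity with no set algebra. For \ref{itm:cond_dom_by_lebesgue}, the paper's engine is Anderson's theorem (its Lemma 3, giving strict ray-monotonicity) plus a spherical-coordinates lemma (its Lemma 4, showing level sets of strictly ray-monotone functions are Gaussian-null); your real-analyticity argument (the Gaussian convolution over a compact norm ball extends holomorphically, is non-constant, so its level sets are Lebesgue-null) bypasses Anderson entirely, and you later recover the strict ray-decrease needed for the multiplicative factor from symmetry, quasi-concavity, and analyticity rather than from Anderson. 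For \ref{itm:cond_cont_non_neg} and \ref{itm:cond_conv_prb_zr} you use dominated convergence and direct uniform Gaussian tail bounds where the paper routes through weak convergence, Polya-type uniform convergence of distribution functions, and quantile-convergence lemmas; both treatments still hinge on continuity of the quantile $c_\alpha(\Sigma)$, established the same way. Your multiplicative-factor reduction via the exact scaling identities for superlevel and level sets is a streamlined version of the paper's ray-scaling argument and makes \ref{itm:cond_dom_by_lebesgue} and \ref{itm:cond_unimod} immediate. Two caveats, neither fatal: your parenthetical claim that convex geometry alone rules out fat level sets is not quite an equivalent argument (a log-concave function can be locally constant at its maximum, e.g.\ flat-topped; it is the analyticity that excludes this), so the analytic argument must remain the primary one; and condition \ref{itm:cond_dom_by_lebesgue} at $t=0$ for the multiplicative factor deserves the one-line remark that this level set is empty since $\tau<1-\alpha$. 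The trade-off between the approaches: the paper's Anderson-based lemmas are reused in its proof of Theorem \ref{thm:unbiased_locl_alt}, so they are not wasted machinery there, whereas your argument is more self-contained for Theorem \ref{thm:performance_metric} itself and delivers log-concavity rather than mere quasi-concavity.
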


\section{Numerical examples}
\label{sec:sim_stdy}

In this section, we discuss the implementation and evaluate the performance of our proposed test in the context of the three working examples introduced in Section \ref{sec:Working Examples}.

In each example, we consider all combinations of sample size $n\in\{ 100, 200,500\}$ and covariate vector dimension $d\in\{10, 50,100\}$. The multiplicative factor measure  \eqref{eqn:mult_fact_pm} is used throughout. We compare a variety of competing procedures, including adaptive and non-adaptive versions of our test. The non-adaptive tests use the $\ell_2$ and maximum absolute value norms, and are referred to  as the $\ell_2$ and $\ell_\infty$ tests, respectively.  The first adaptive version of our test selects over the $\ell_1$, $\ell_2$, $\ell_4$, $\ell_6$ and $\ell_\infty$ norms, and is referred to as the adaptive $\ell_p$ test. The second adaptive test selects over various versions of the $\jmath_{k}$ norm --- specifically, over $k\in\{1, 3, 5, 6, 8, 10\}$ when $d = 10$,  $k\in\{1, 11, 21, 30, 40, 50\}$ when $d = 50$, and $k\in\{1, 21, 41, 60, 80, 100\}$ when $d = 100$ --- and is referred to as the sum-of-squares test.  We note that $\jmath_1 = \ell_\infty$ and $\jmath_d = \ell_2$.  We contrast the performance of these adaptive procedures with two existing all-purpose methods for multiple testing. Each all-purpose method (including ours) uses the same covariance matrix estimator $\Sigma_n$ and parameter estimator $\psi_n$. The first is a test based on the Bonferroni-corrected $p$-value $d\times \min(p_{1n},p_{2n},\ldots,p_{dn})$ computed from individual $p$-values $p_{jn}:=2\,[1-\Phi(z_j)]$, where $z_j:=n^{1/2}|\psi_{nj}|/\sigma_{jn}$, $\sigma_{jn}$ is a consistent estimator of the asymptotic standard deviation $\sigma_{0j}$ of $n^{1/2}\psi_{jn}$ under the null hypothesis, and $\Phi$ represents the standard normal distribution function. In our simulations, we take $\sigma_{jn}$ to be the root of the empirical second moment of $\phi_{nj}(X_1),\phi_{nj}(X_2),\ldots,\phi_{nj}(X_n)$, where $\phi_{nj}$ is a consistent plug-in estimator of the influence function $\phi_{0j}$ of $\psi_{nj}$. The second is the more recent Cauchy combination test (referred to here as the Cauchy test) described by \cite{liu_cauchy_2020} based on the test statistic $d^{-1  }\sum_{j = 1}^d \text{tan}\{( 2p_{jn} - 3/2 ) \pi\} $. Under certain conditions, including mutual independence of $\psi_{n1},\psi_{n2}, \ldots,\psi_{nd}$, this test statistic has a limiting Cauchy distribution under the null hypothesis. However, \cite{liu_cauchy_2020} show that even when independence fails to hold, $p$-values computed using the Cauchy distribution are approximately valid for large realizations of the test statistic.  For this reason, and in view of its simplicity, we include this test as a comparator in our simulation studies.  

\subsection{Example 1: correlation}
\label{sec:cor_examp}

In this example, we consider the settings described in the first example of \cite{mckeague_adaptive_2015} and \cite{zhang_comment_2015}.  The vector $W = (W_1, W_2,\dots, W_d)$ of covariates  is generated from a normal distribution with mean zero and  covariance matrix with diagonal and off-diagonal terms equal to 1 and  $\rho$, respectively. Three distinct conditional outcome distributions are considered. In each setting, we generate $\varepsilon$ as a standard normal variable independent of $W$. Conditionally on $W$ and $\epsilon$, we separately consider \begin{enumerate}\setlength{\itemindent}{4em}
    \item[(Setting 1)] $Y = \varepsilon$;\vspace{-.1in}
    \item[(Setting 2)] $Y = 0.25W_1 + \varepsilon$;\vspace{-.1in}
    \item[(Setting 3)] $Y = 0.15\,(W_1+\ldots+W_5)-0.1\,(W_6+\ldots+W_{10})+ \varepsilon$.
\end{enumerate}

In this example, the sampling distribution of each test statistic --- and thus cutoffs upon which to construct valid tests --- can also be determined using two different methods.  The standard approach, discussed above and referred to as the parametric bootstrap test, estimates the limiting distribution of $n^{1/2}(\psi_n-\psi_0)$ using a mean-zero normal distribution with estimated covariance matrix. A permutation approach, which typically provides better calibration than the parametric bootstrap in smaller samples, can also be used in this particular example. A permutation-based approximation of the $p$-value of the test based on $Z_n$ can be obtained by independently generating modifications of the original dataset in which the outcome vector has been randomly permuted across observations, re-computing $Z_n$ for each such permuted dataset, and computing the fraction of permuted datasets for which the re-computed $Z_n$ value is larger than the original $Z_n$ value. In this simulation, permutation-based and parametric bootstrap versions of our adaptive test were compared to three competing tests, namely the test of \citet{zhang_comment_2015}, the Bonferroni test, and the Cauchy test. We note here that the test of \citet{zhang_comment_2015} leverages knowledge about the data-generating mechanism, whereas other procedures considered instead make use of nonparametric parameter and covariance estimators.

\begin{figure}
		\centering
		\linespread{1}
\includegraphics[width = \linewidth]{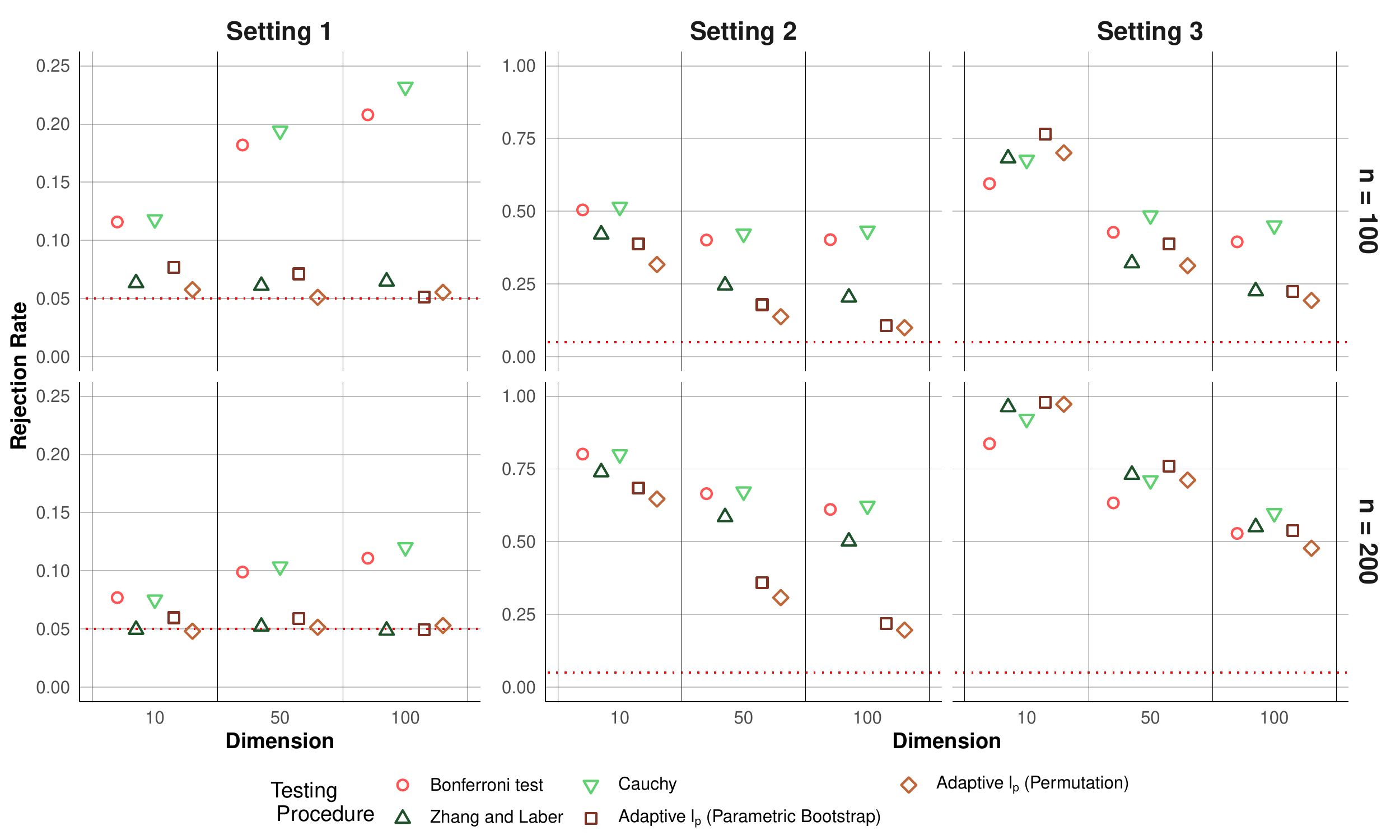}
\caption{Empirical rejection rate of various tests applicable in Example 1 under different data-generating mechanisms, at different sample sizes, and for covariate vectors with no correlation across components and of different length.}
	\label{fig:no_cor_ex1}
\end{figure}
	
\begin{figure}
\centering
\includegraphics[width = \linewidth]{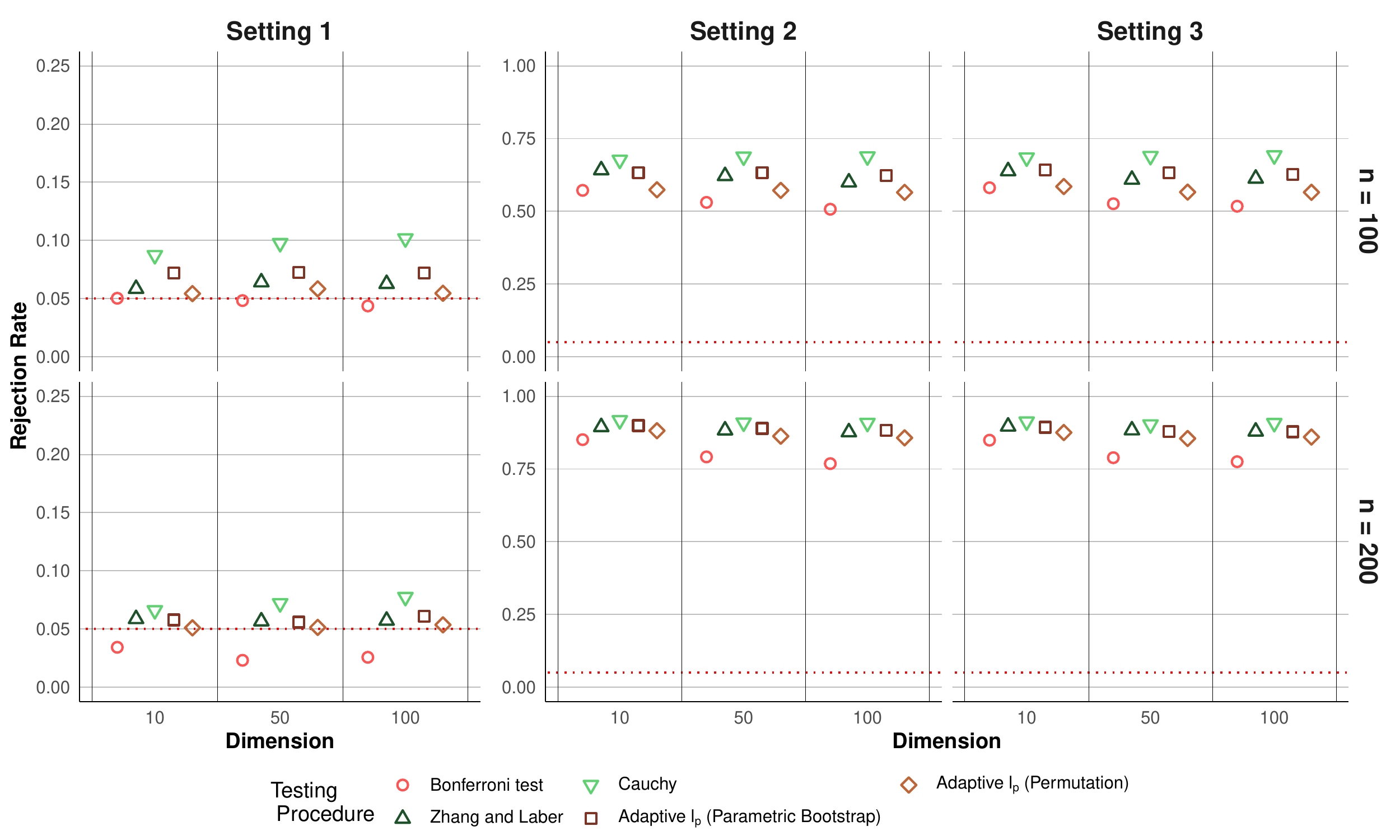}
	\caption{Empirical rejection rate of various tests applicable in Example 1 under different data-generating mechanisms, at different sample sizes, and for covariate vectors with high correlation (80\%) across components and of different length.}
	\label{fig:much_cor}
\end{figure}

The empirical rejection rates of the different tests considered are shown for the three described settings in Figures \ref{fig:no_cor_ex1} and \ref{fig:much_cor} with $\rho=0$ and $\rho=0.8$, respectively. Results for the intermediate setting $\rho = 0.5$ are provided in Figure \ref{fig:some_cor} in the Supplement. Results for  $n=500$ are not shown because power is very close to one for Settings 2 and 3.  Because $H_0$ holds in Setting 1, we expect the rejection rates for this setting to be close to the nominal level $0.05$.  Figures \ref{fig:no_cor_ex1} and \ref{fig:much_cor} illustrate that this is achieved by every testing procedure evaluated except for the Bonferroni test and the Cauchy test. In Settings 2 and 3, $H_0$ does not hold and the plots convey the empirical power of the tests considered.   

In most scenarios in which $H_0$ does not hold all tests have similar empirical power. In most settings, the test proposed by \citet{zhang_comment_2015} slightly outperforms all other tests, and in settings in which it is not the best, it only performs slightly worse than the best test. The most noticeable differences in performance are found in Setting 2 at sample size $n = 200$ under mutual independence of covariate component in which the \cite{zhang_comment_2015} test outperforms all others by a substantial margin.  This superior performance is expected since the data-generating mechanism matches the model assumed in this method, whereas the other tests considered are nonparametric and therefore valid under weaker conditions. 

Both of our adaptive tests perform similarly, with the permutation-based test having lower power but achieving better type I error control than the parametric bootstrap-based test. The relatively higher empirical power of the parametric bootstrap test relative to the permutation-based test likely stems from the fact that the null hypothesis tested is weaker for the latter (null marginal associations) than for the former (joint independence), and that the parametric bootstrap test is imperfectly calibrated, as evidenced by its slightly inflated type I error.

In this simulation, the Bonferroni test and the Cauchy test are anti-conservative, especially in settings in which there is no correlation between covariates. The failure of these tests to achieve nominal type I error control can  mostly be attributed to difficulty estimating the variance of $\psi_n$. Figure \ref{fig:pval_all} in the Supplement shows the distribution of $p_{1n}$ in the setting in which there is no correlation between covariate components.
This distribution has a large spike near zero, although the spike is less pronounced at sample size $n=200$ and is expected to dissipate as sample size further increases. We observe that when the $p$-value $p_{1n}$ is computed using the true standard error of $\psi_n$, this spike vanishes, suggesting that the poor small-sample calibration stems from estimation of the standard error. The observed over-representation of small $p$-values (relative to the uniform distribution) causes a large inflation in type I error for the Bonferroni test. The $p$-value of the Cauchy test is also sensitive to small values of $p_{jn}$ due to the vertical asymptote of the tangent function used to define the test statistic.

\subsection*{Example 2: coefficients of a working log-linear regression model under missingness}

In the second example, conditionally on $W=w$, the binary outcome $Y$ is simulated from the logistic regression model $\pr(Y = 1 \mid W=w) =\textrm{expit}( \beta_1w_1+\beta_2w_2+\ldots+\beta_dw_d)$.  In all scenarios, the conditional missingness probability is given by $ \pr(\Delta = 1 \mid  W=w) = \textrm{expit}(0.5 + 0.15 w_{d - 1} - 0.275 w_d)$, and the vector $W = (W_1,W_2, \dots, W_d)$ of covariates is drawn from a multivariate normal with mean zero and covariance matrix with diagonal and off-diagonal entries equal to 1 and 0.5, respectively.  We separately consider the following settings defined by different values for the regression coefficient vector:  \begin{enumerate}\setlength{\itemindent}{4em}
    \item[(Setting 1)] $\beta_1=\ldots=\beta_d=0$;\vspace{-.1in}
    \item[(Setting 2)] $\beta_1=0.6$, $\beta_2=\ldots=\beta_d=0$;\vspace{-.1in}
    \item[(Setting 3)] $\beta_1=\ldots=\beta_5=0.32$, $\beta_6=\ldots=\beta_{10}=-0.32$, $\beta_{11}=\ldots=\beta_d=0$;\vspace{-.1in}
    \item[(Setting 4)] $\beta_1=\ldots=\beta_5=0.23375$, $\beta_6=\ldots=\beta_{10}=0.4675$, $\beta_{11}=\ldots=\beta_d=0$.
\end{enumerate}Thus, the null hypothesis holds in Setting 1 but not in any of Settings 2, 3 and 4. In this example, an influence function-based estimator of the covariance matrix $\Sigma_0$ was used, and conditional mean functions involved were estimated using either an elastic net \citep{simon_blockwise_2013,tibshirani_strong_2012,friedman_regularization_2010, tibshirani_strong_2012} or loess smoother.

In the null setting (Setting 1), we find that the type one error of all tests is near (though still slightly above) the $0.05$ type one error rate. In general, the type one error is higher in settings with smaller sample size and larger dimension (as expected). In Setting 2, all tests have similar power with the Cauchy test slightly outperforming and the Bonferroni test slightly under-performing all other tests.  The differences in performance are larger  for settings with higher dimension.  In Setting 3, the $\ell_\infty$ based test  outperforms all others, especially in the sample size 500 setting.  In Setting 4, all tests except the $\ell_\infty$ and Bonferroni test perform nearly identically well for each sample size and dimension. In Setting 3, ten covariates are associated with the outcome, which would suggest that norms accounting for the many non-null associations would perform (relatively) better, as seen in Figure \ref{fig:figure1}. Unexpectedly, the $\ell_\infty$ test had the largest power. This finding may be driven by the fact that other norms place larger importance on smaller component values. In this setting, while only ten covariates are directly associated with the outcome, all other covariates are still marginally associated with the outcome through their correlation with other covariates.  While the $\ell_\infty$ norm considers only the covariate most strongly associated with the outcome, other norms consider all covariates.  Covariates that are indirectly associated with the outcome thus have a small (though still non-zero) association with the outcome.  If the additional variability introduced by including these covariates is too large, it may be detrimental to test performance. This explanation is supported by results presented in Figure \ref{fig:dat_examp_2}, wherein we find that when all covariates are truly associated with the outcome ($d = 10$), all considered tests have comparable power, but that differences emerge in larger dimensions.  It may also be that the low power of the adaptive and $\ell_2$ tests are a consequence of various linear effects on the outcome canceling each other out. Because covariate vector components are highly correlated and there are an equal number of positive and negative $\beta$ values of the same magnitude, the combined effect from all covariates could be small.  This would make it more difficult to discern the marginal effect of any single covariate. In Setting 4, ten covariates are directly associated with the outcome, just like in Setting 3, though unlike Setting 3, all non-null regression coefficients are positive. These differences result in a reversal of which tests are optimal, with the $\ell_\infty$ test having the lowest power of the non-adaptive tests, and the adaptive tests and the $\ell_2$-based test all perform nearly equally, with the latter narrowly outperforming the former.  

Overall, we see that depending on the scenario, the norm on which a test is based could be unimportant (Setting 2) or a source of substantial differences between tests (Setting 3).  In settings in which the choice of the norm is consequential, the adaptive test does not outperform all fixed norm tests, but does provide consistent performance across all settings. This example also suggests that common guidelines from the high-dimensional statistics literature on which norm should perform best in a given scenario may not be reliable, even when the data-generating mechanism is known a priori.

\begin{figure}
	\centering
	\linespread{1}
\includegraphics[width = \linewidth]{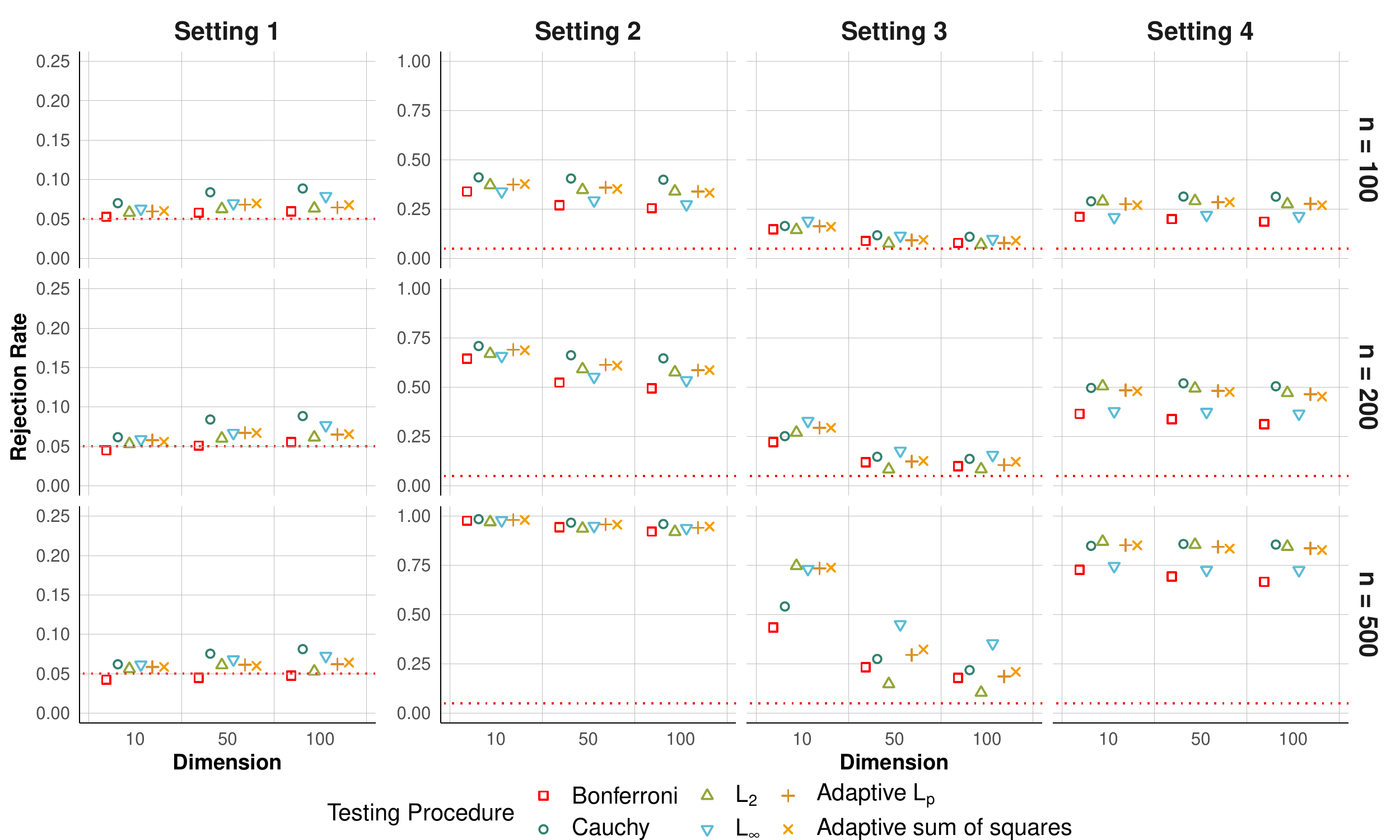}
	\caption{Empirical rejection rate of various tests applicable in Example 2 under different data-
generating mechanisms, at different sample sizes, and for covariate vectors with moderate correlation (50\%)
across components and of different length.}
\label{fig:dat_examp_2}
\end{figure}

\subsection*{Example 3: coefficients of a working effect modification model for randomized trials}

In this example, the covariate vector $W$ is drawn from a multivariate normal distribution with mean zero and covariance matrix with diagonal and off-diagonal entries equal to 1 and 0.5, respectively. Given $W=w$, the binary exposure $A\in\{0,1\}$ is drawn from a binomial distribution with success probability 0.5, as in a standard randomized trial. Finally, given $(A,W)=(a,w)$, the binary outcome $Y\in\{0,1\}$ is drawn from a Bernoulli distribution with success probability given by \[\logit\pr\left(Y = 1 \mid W=w, A=a\right) = \beta_0 a + \sum_{j = 1}^{d} (\beta_{j}+\gamma_ja) w_j \ .\]We set $\beta_0 = 0.2$, $\beta_1= \ldots=\beta_{d/2} = 0.7 / d$ and $\beta_{d/2+1}=\ldots=\beta_{d} = 0$, where $d$ is even, and consider the following settings: \begin{enumerate}\setlength{\itemindent}{4em}
    \item[(Setting 1)] $\gamma_{1}=\ldots=\gamma_{d} = 0$;\vspace{-.1in}
    \item[(Setting 2)] $\gamma_{1} = 1.2$ and $\gamma_{2}=\ldots=\gamma_{d} = 0$;\vspace{-.1in}
    \item[(Setting 3)] $\gamma_{1}=\ldots=\gamma_{5} = -0.8$, $\gamma_{6}= \ldots=\gamma_{10} = 0.8$ and $\gamma_{11}=\ldots=\gamma_{d} = 0$;\vspace{-.1in}
    \item[(Setting 4)] $\gamma_{1}=\ldots=\gamma_{5} = 0.07$, $\gamma_{6}= \ldots=\gamma_{10} = 0.14$ and $\gamma_{11}=\ldots=\gamma_{d} = 0$.
\end{enumerate}Thus, the null hypothesis holds in the first setting, and the alternative holds in the three other settings. Calculation of parameter estimates and estimated influence functions required for inference was implemented using code adapted from the \texttt{ltmle} package in \texttt{R} \citep{lendle_ltmle_2017}.  

In the null setting (Setting 1), we find that the type one error of all norm based tests is somewhat above the nominal level, and larger for small sample sizes and high dimensions. The Bonferroni test is slightly conservative and the Cauchy test is slightly anti-conservative in all settings. In Setting 2, all tests have similar power with the adaptive tests slightly outperforming others at lower sample sizes and the Bonferroni test under-performing in all settings.  The differences in performance are larger  when dimension is higher.  In Setting 3, the $\ell_\infty$ based test almost always outperforms all other tests, with the largest differences in the sample size 500 setting.  In Setting 4, the $\ell_\infty$ and Bonferroni test under-perform all other tests, of which all have nearly identical power.

\begin{figure}
	\centering
	\includegraphics[width = \linewidth]{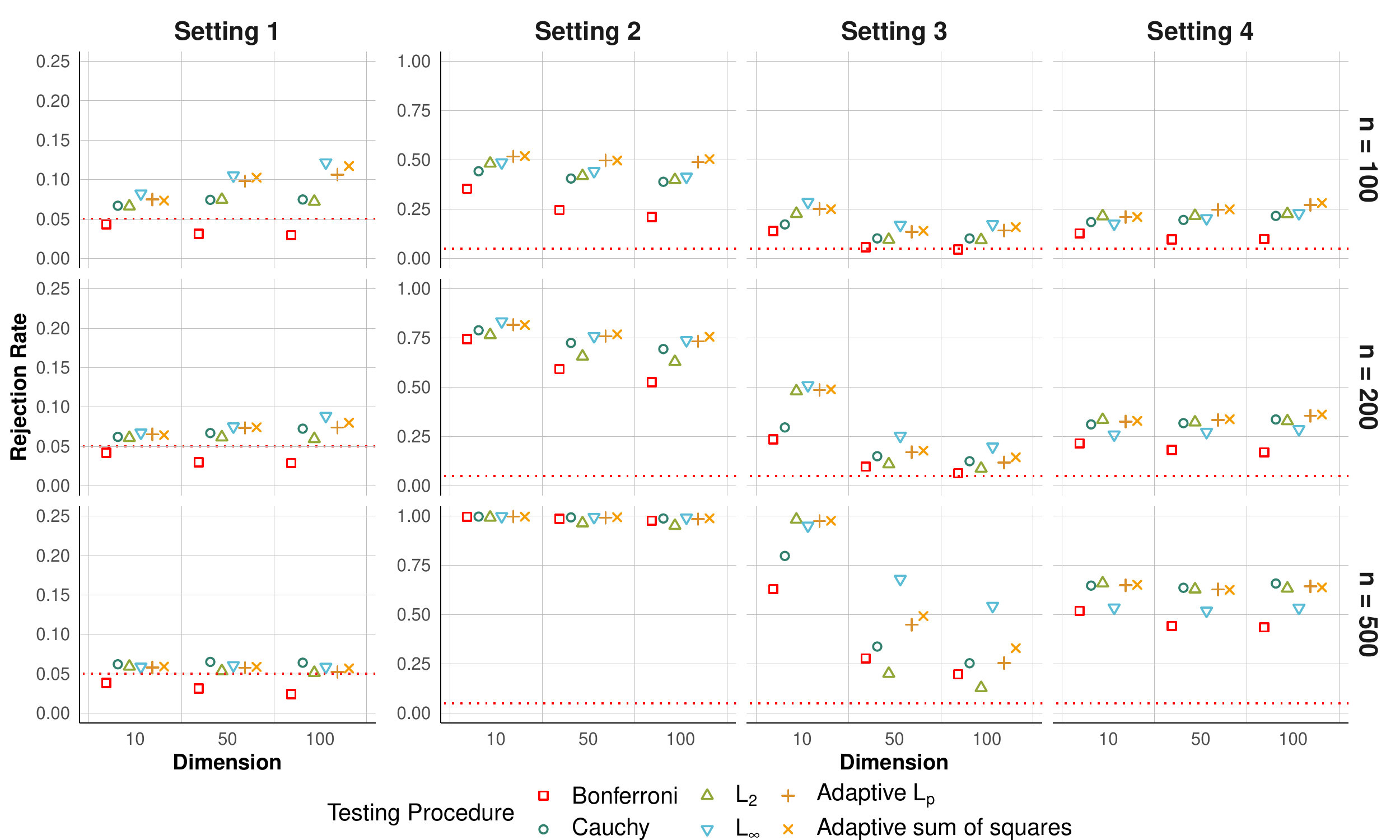}
	\linespread{1}
	\caption{Empirical rejection rates for the Bonferroni test, the Cauchy test, the non-adaptive $\ell_p$ and $\ell_\infty$ tests, and the adaptive $\ell_p$ and $\ell_\infty$ tests in Example 3 across Settings 1--4, different sample sizes and covariate vector dimensions. }
	\label{fig:examp_3}
\end{figure}

\section{Assessing correlates of risk of HIV infection in HVTN 505}
\label{sec:data_app}

Between 2008 and 2013, a cohort of 2,504 circumcised men and transgender persons who have sex with men were recruited in the United States to participate in HVTN 505, a phase IIB preventative efficacy trial of a DNA and recombinant adenovirus serotype 5 HIV vaccine  \citep{neidich_antibody_2019}.  While the vaccine under study was not found to be efficacious in preventing HIV infection, secondary analyses were conducted to study the association between the immune response to vaccine and risk of infection. This response was measured using a large number of biomarkers, including levels of various antibodies, T cells and Fc-gamma receptors. These analyses indicated the possibility of a qualitative interaction, whereby the vaccine may lower or raise the rate of HIV-1 acquisition for different subgroups, depending on the immune response  \citep{fong_modification_2018, gilbert_post-randomization_2020}. Estimates of how well each biomarker group can predict future HIV-1 infection are suggestive of which groups protect against HIV-1.

In our analysis, we consider the same groupings of biomarkers as in \cite{neidich_antibody_2019}.  For each set of biomarkers, we test the null hypothesis that no biomarker is associated with risk of infection using four tests derived from our framework.  Two of these tests are adaptive (selecting across $\ell_p$ and sum-of-squares norms, respectively), whereas the other two are non-adaptive (based on the $\ell_2$ and $\ell_\infty$ norms, respectively). The association parameter we focus on is the biomarker-specific regression coefficient from a weighted univariable working logistic regression model. Weighting accounts for the informative biomarker missingness induced by the two-phase study design. Additional details on the HVTN 505 trial and our analysis strategy are provided in the Supplement.

The results of these tests are summarised in Table \ref{tab:pval_tab_new}. Each column (except the first) corresponds to a test type and each row to a group of biomarkers considered by \citet{neidich_antibody_2019}.  With the exception of the Fx Ab and IgG$+$IgA groups, each test of association between a biomarker group and risk of infection has a $p$-value less than $0.01$ for all considered tests. For the IgG$+$IgA group, the tests yield $p$-values that are all similar, though the $\ell_2$ tests gives a slightly smaller $p$-value. The tests for the functional antibody (Fx Ab) biomarker group give similar $p$-values to one another except for the $\ell_\infty$ test, which yields a $p$-value roughly twice as large as the others. Thus, in all but one setting, the choice of testing procedure has little impact on results. For the test of the Fx Ab biomarker group, the adaptive tests provide similar $p$-values, whereas $p$-values for the non-adaptive tests differ more.

\begin{table}[]
    \centering
    \begin{tabular}{lx{2.1cm}x{2.1cm}x{2.1cm}x{2.1cm}}
\toprule
Biomarker Group & $\ell_2$ & $\ell_\infty$ & adaptive $\ell_p$ & adaptive ssq\\
\midrule
\cellcolor[HTML]{EEEEEE}{IgG + IgA} & \cellcolor[HTML]{EEEEEE}{0.127} & \cellcolor[HTML]{EEEEEE}{0.149} & \cellcolor[HTML]{EEEEEE}{0.147} & \cellcolor[HTML]{EEEEEE}{0.153}\\
IgG3 (Immuno Globulin G3 Group) & 0.000 & 0.003 & 0.000 & 0.000\\
\cellcolor[HTML]{EEEEEE}{T Cells} & \cellcolor[HTML]{EEEEEE}{0.000} & \cellcolor[HTML]{EEEEEE}{0.000} & \cellcolor[HTML]{EEEEEE}{0.000} & \cellcolor[HTML]{EEEEEE}{0.000}\\
Fx Ab & 0.062 & 0.116 & 0.052 & 0.049\\
\cellcolor[HTML]{EEEEEE}{IgG + IgA + IgG3} & \cellcolor[HTML]{EEEEEE}{0.002} & \cellcolor[HTML]{EEEEEE}{0.006} & \cellcolor[HTML]{EEEEEE}{0.002} & \cellcolor[HTML]{EEEEEE}{0.002}\\
IgG + IgA + T Cells & 0.003 & 0.000 & 0.000 & 0.001\\
\cellcolor[HTML]{EEEEEE}{IgG + IgA + IgG3 + T Cells} & \cellcolor[HTML]{EEEEEE}{0.000} & \cellcolor[HTML]{EEEEEE}{0.000} & \cellcolor[HTML]{EEEEEE}{0.000} & \cellcolor[HTML]{EEEEEE}{0.000}\\
IgG + IgA + IgG3 + Fx Ab & 0.004 & 0.004 & 0.002 & 0.002\\
\cellcolor[HTML]{EEEEEE}{T Cells + Fx Ab} & \cellcolor[HTML]{EEEEEE}{0.000} & \cellcolor[HTML]{EEEEEE}{0.000} & \cellcolor[HTML]{EEEEEE}{0.000} & \cellcolor[HTML]{EEEEEE}{0.000}\\
All markers & 0.000 & 0.001 & 0.000 & 0.000\\
\bottomrule
\end{tabular}    
    \caption{$p$-values for each combination of biomarker group and test type. This analysis is based on data from the HVTN 505 clinical trial, and the null hypothesis tested is that the biomarkers from the given group are not associated with risk of HIV infection.}
    \label{tab:pval_tab_new}
\end{table}

In Figure \ref{fig:rde_new}, we focus on the testing results for the Fx Ab group. The gray histogram in each panel shows an approximation of the estimated null limiting distribution of $Z_n$ for each considered test. The dashed red and solid black vertical lines intersect the $x$-axis at $Z^*_n$ and the $5^{th}$ percentile of the estimated limiting distribution, respectively. Both adaptive tests have distributions that are centered and more concentrated around a smaller value. Because the adaptive tests select the pointwise minimum among all norms considered, this phenomenon is expected. Figure \ref{fig:all_de_figs} of the Supplement shows this summary for every biomarker group from Table \ref{tab:pval_tab_new}.

\begin{figure}
	\centering
	\includegraphics[width = \linewidth]{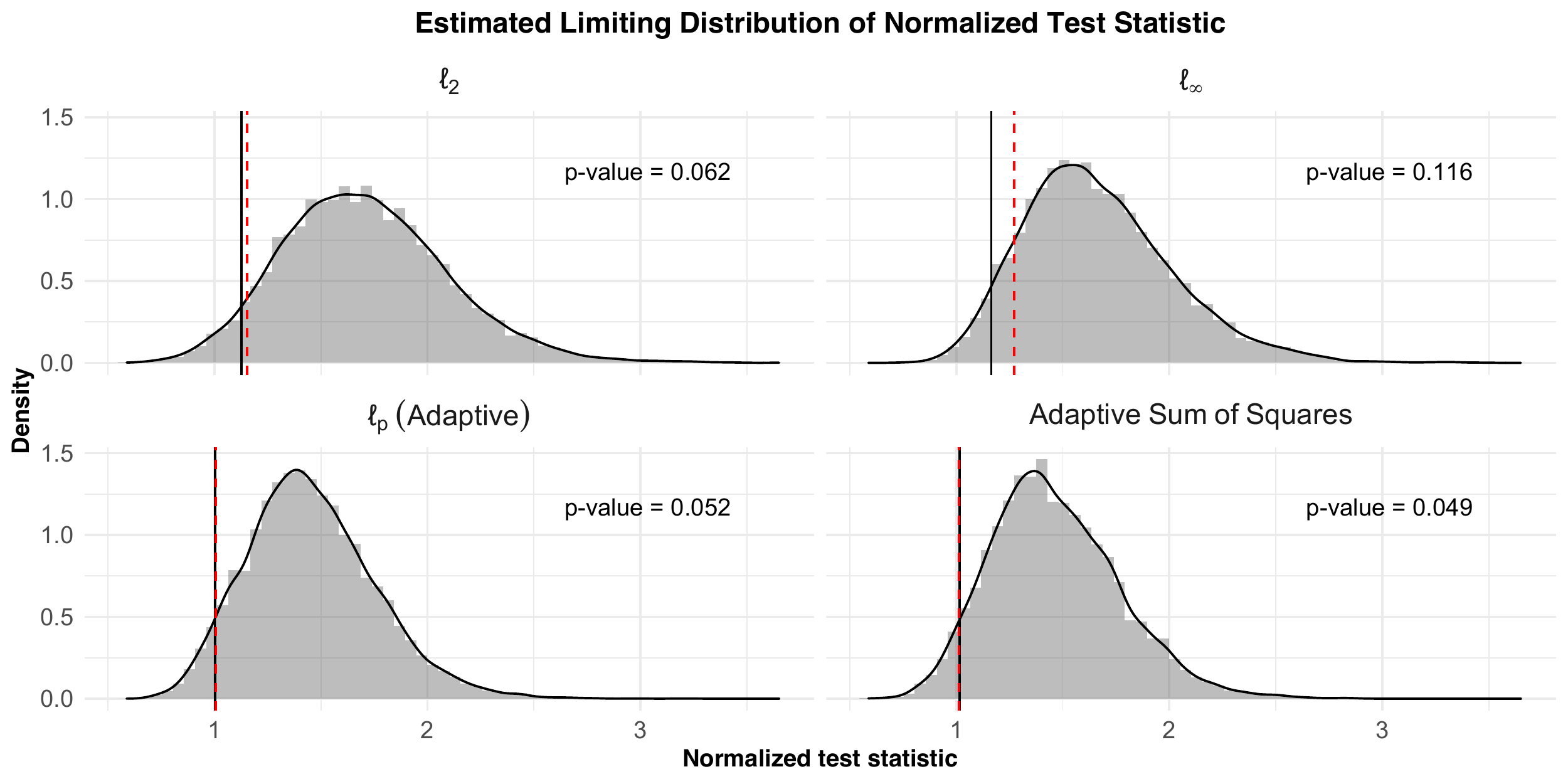}
	\linespread{1}
	\caption{Estimated limiting distributions of the multiplicative factor measure for both non-adaptive ($\ell_2$ and maximum absolute deviation) and adaptive (adaptive $\ell_p$ and adaptive sum-of-squares) testing procedures.  The black vertical line in each plot represents the $0.05$ quantile of the limiting distribution, and the dashed red vertical line represents the value of the test statistic.  This analysis is based on data from the HVTN 505 clinical trial, and the null hypothesis tested is that the biomarkers from the Fx Ab group are not associated with risk of HIV infection.}
	\label{fig:rde_new}
\end{figure}

\section{Concluding remarks}
\label{sec:discuss}
We have described a general framework for constructing tests of a multivariate point null hypothesis in settings in which an asymptotically linear estimator of the underlying target parameter is available.  Tests created using this framework leverage knowledge of the parameter estimator and its estimated sampling distribution to adaptively build a test statistic that provides good power under alternatives suggested by the data. Tests constructed using our framework have desirable asymptotic guarantees under the null, fixed alternatives, and local alternatives. We studied the performance of tests constructed using our framework in simulation studies and find these tests have comparable performance to tailor-made methods in settings in which specialized methods currently exist and have favorable properties in settings in which they do not.

The framework we described is quite general, allowing users to specify the parameter of interest and to utilize as much or as little information of the data-generating mechanism as is known. However, it does require an estimator of the covariance matrix of the parameter estimator. For most common parameters, such (non-parametric) estimators already exist and in novel settings constructing these estimators can be facilitated using influence functions. Such analytic derivations could pose a challenge for the implementation of this method in novel settings, though work has been done to allow for such computations to be carried out numerically, which could remove this hurdle \citep{carone_toward_2019}. 

Finally, while we have focused on point null hypotheses in this paper, our proposed framework can also be used to test certain composite null hypotheses. For example, the composite null hypothesis $H_0:\psi_{10}=\psi_{20}=\ldots=\psi_{d0}$ can be equivalently stated as $H_0:\psi^*_{10}=\psi^*_{20}=\ldots=\psi^*_{(d-1)0}=0$, where we define $\Psi_j^*(P):=\Psi_j(P)-\Psi_d(P)$ and write $\psi^*_{j0}:=\Psi_j^*(P_0)$. Indeed, whenever the composite null hypothesis of interest can be restated as a point null hypothesis (of lower dimension) based on a different parametrization, the methods we have proposed can be used directly.

\vspace{0.1in}
\singlespacing
{\small
\section*{Acknowledgements}
The authors would like to thank Brian Williamson for his generous help providing data and guidance on the data analysis. This work was supported by NIH grants DP2-LM013340, R01-HL137808, R01-AI029168 and UM1-AI068635. The opinions expressed in this article are those of the authors and do not necessarily represent the official views of the NIH.

\vspace{0.1in}

\bibliographystyle{chicago}
\bibliography{refs}
}

\doublespacing

\newpage

\section*{Supplement}

\subsection*{Technical lemmas}

We first state and prove technical lemmas that will be used in the proof of Theorems 1, 2 and 3. The first lemma below indicates when regularity conditions on the individual local measures of test inefficiency in $\mathscr{F}_0$ imply corresponding conditions for the adaptive local measure of test inefficiency upon which our test is derived. This lemma serves as a fundamental building block in the proof of Theorems 1 and 2.

Throughout the Supplement, we define $\Gamma^{*}:(x, \Sigma)\mapsto\min_{\varphi\in\mathscr{F}_0}\Gamma(x, \Sigma,\varphi)$ and $\Gamma_0^*:x\mapsto \Gamma^*(x,\Sigma_0)$. Also, for any event $\mathscr{E}$ involving only the random variable $U_0$, we denote by $\text{pr}(\mathscr{E})$ the evaluation of the probability of $\mathscr{E}$ under $U_0\sim Q_0$. Finally, we refer to the following conditions: \begin{enumerate}[label= {C\arabic{enumi}}*), ref = C\arabic{enumi}*]
	\item $(u,\Sigma)\mapsto \Gamma^*(u,\Sigma)$ is continuous and non-negative on $\mathbb{R}^d \times B^*_0$ for some neighborhood $B^*_0\subset \mathbb{V}_d$  of $\Sigma_0$; \label{itm:cond_cont_non_neg_star}
	\item $\int I\{\Gamma^*_0(u)=t\}\,dQ_0(u)=0$ for every $t\geq 0$;\label{itm:cond_dom_by_lebesgue_star}
	\item for at least one $\varphi\in\mathscr{F}_0$, $\Gamma^*(x_s, \Sigma)\rightarrow 0$ uniformly over $\Sigma \in B^*_1$ for some neighborhood $B^*_1\subset \mathbb{V}_d$ of $\Sigma_0$ for every sequence $x_1,x_2,\ldots$ of elements of $\mathbb{R}^d$ such that $\varphi(x_s)\rightarrow \infty$; \label{itm:cond_conv_prb_zr_star} 
	\item $u\mapsto \Gamma^*_0(u)$ is quasi-concave, in the sense that $\{u : \Gamma^*_0(u) \geq a\}$ is convex for every $a\geq 0$; \label{itm:cond_unimod_star}
	\item $u\mapsto \Gamma^*_0(u)$ is centrally symmetric, in the sense that $\Gamma^*_0(u)=\Gamma^*_0(-u)$ for every $u\in\mathbb{R}^d$. \label{itm:cond_cent_sym_star}
\end{enumerate}

\begin{lemma}\label{lem:Gammastar}
If any combination of \ref{itm:cond_cont_non_neg}, \ref{itm:cond_dom_by_lebesgue}, \ref{itm:cond_unimod} and \ref{itm:cond_cent_sym} hold for every element of $\mathscr{F}_0$, then the respective combination of \ref{itm:cond_cont_non_neg_star}, \ref{itm:cond_dom_by_lebesgue_star}, \ref{itm:cond_unimod_star} and \ref{itm:cond_cent_sym_star} hold as well. Additionally, if \ref{itm:cond_conv_prb_zr} holds for at least one element of $\mathscr{F}_0$, then \ref{itm:cond_conv_prb_zr_star} holds as well.
\end{lemma}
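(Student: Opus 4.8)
The plan is to exploit that $\Gamma_0^*$ is a pointwise minimum of the finitely many maps $u\mapsto\Gamma_0(u,\varphi_k)$ over $\varphi_k\in\mathscr{F}_0$, and that each structural property in question is preserved under such a finite minimum. Crucially, I would deduce each starred condition from its unstarred analogue \emph{alone}; this componentwise structure is precisely what licenses the ``any combination'' phrasing of the statement, since the argument for, say, \ref{itm:cond_unimod_star} will invoke only \ref{itm:cond_unimod}, and similarly for the others.

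First I would dispatch the two pointwise conditions. For \ref{itm:cond_cont_non_neg_star}, a finite minimum of continuous non-negative functions is again continuous and non-negative; taking $B_0^*$ to be the finite intersection of the neighborhoods $B_0$ supplied by \ref{itm:cond_cont_non_neg} for the individual norms yields a neighborhood of $\Sigma_0$ on which $\Gamma^*$ is continuous and non-negative. For \ref{itm:cond_cent_sym_star}, central symmetry commutes with the minimum: $\Gamma_0^*(-u)=\min_k\Gamma_0(-u,\varphi_k)=\min_k\Gamma_0(u,\varphi_k)=\Gamma_0^*(u)$, applying \ref{itm:cond_cent_sym} termwise.

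Next I would treat the two set-level conditions. For \ref{itm:cond_unimod_star}, I would write the super-level set as an intersection, $\{u:\Gamma_0^*(u)\geq a\}=\bigcap_{k=1}^K\{u:\Gamma_0(u,\varphi_k)\geq a\}$, each factor being convex by \ref{itm:cond_unimod}, and a finite intersection of convex sets is convex. For \ref{itm:cond_dom_by_lebesgue_star}, the key observation is the containment $\{u:\Gamma_0^*(u)=t\}\subseteq\bigcup_{k=1}^K\{u:\Gamma_0(u,\varphi_k)=t\}$, which holds because whenever the minimum equals $t$ it is attained at some index. Integrating the indicator and using subadditivity over this finite union, together with \ref{itm:cond_dom_by_lebesgue} for each norm, gives $\int I\{\Gamma_0^*(u)=t\}\,dQ_0(u)=0$ for every $t\geq 0$.

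Finally, for \ref{itm:cond_conv_prb_zr_star} I would use a sandwich argument. Fix the norm $\varphi$ for which \ref{itm:cond_conv_prb_zr} holds. Since $\Gamma$ is non-negative and $\Gamma^*$ is a minimum over $\mathscr{F}_0$, we have $0\leq\Gamma^*(x_s,\Sigma)\leq\Gamma(x_s,\Sigma,\varphi)$ for every $s$ and every $\Sigma$; along any sequence with $\varphi(x_s)\rightarrow\infty$ the upper bound tends to zero uniformly over a neighborhood $B_1$ of $\Sigma_0$, so the same holds for $\Gamma^*$ with $B_1^*=B_1$. I do not anticipate a genuine obstacle here, as the entire content lies in recognizing that continuity, central symmetry, convexity of super-level sets, and negligibility of level sets are each stable under finite minima. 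The step requiring the most care is \ref{itm:cond_dom_by_lebesgue_star}, where one must verify the level-set containment correctly and rely on the finiteness of $\mathscr{F}_0$ so that the null-measure property survives the union.
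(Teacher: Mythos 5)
Your proposal is correct and takes essentially the same route as the paper's own proof: each starred condition is deduced from its unstarred counterpart alone, using that continuity and non-negativity pass to finite minima (with the intersected neighborhood), that the level set of the minimum lies in the finite union of individual level sets (giving the null-measure property by subadditivity), that the super-level set of the minimum is the finite intersection of convex super-level sets, that central symmetry commutes with the minimum, and that the sandwich $0 \leq \Gamma^*(x_s,\Sigma) \leq \Gamma(x_s,\Sigma,\varphi)$ transfers the uniform decay. There are no gaps.
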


\begin{proof}[Proof of Lemma~\ref{lem:Gammastar}]
Suppose that \ref{itm:cond_cont_non_neg} holds for each $\varphi\in\mathscr{F}_0$.  Denote by $B_0(\varphi)\subset \mathbb{V}_d$ the neighborhood of $\Sigma_0$ over which \ref{itm:cond_cont_non_neg} holds for $\varphi\in\mathscr{F}_0$. Because the minimum function is continuous and the composition of continuous functions is also continuous, it follows that $\Gamma^*$ is continuous on $\mathbb{R}^d\times B_0^*$, where $B_0^*:=\cap_{\varphi\in\mathscr{F}_0}B_0(\varphi)\subset \mathbb{V}_d$ is itself a neighborhood of $\Sigma_0$. Additionally, the minimum of non-negative values is necessarily non-negative. Thus,  \ref{itm:cond_cont_non_neg_star} holds. Suppose that \ref{itm:cond_dom_by_lebesgue} holds for each $\varphi\in\mathscr{F}_0$.  Because $\text{pr}\{\Gamma_0(U_0,\varphi) = c\} = 0$ for each $c\in\mathbb{R}$ and $\varphi \in \mathscr{F}_0$, it follows that 
\begin{align*}
	\textstyle\text{pr}\left\{\Gamma_0^*(U_0) = c\right\}\ \leq\ \text{pr}\left\{ \bigcup_{\varphi\in\mathscr{F}_0} \left\{\Gamma_0(U_0, \varphi) = c \right\} \right\}\ \leq\ \sum_{\varphi\in\mathscr{F}_0} \text{pr}\left\{\Gamma_0(U_0, \varphi) = c\right\}\ =\ 0\ ,
\end{align*}establishing \ref{itm:cond_dom_by_lebesgue_star}.  Suppose that \ref{itm:cond_conv_prb_zr} holds for $\varphi_0\in\mathscr{F}_0$, and denote by $B^*_1\subset\mathbb{V}_d$ the neighborhood of $\Sigma_0$ over which \ref{itm:cond_conv_prb_zr} holds.  By definition, we have that $0\leq \Gamma^*(x_s,\Sigma)\leq \Gamma(x_s,\Sigma,\varphi_0)$ for each $\Sigma$, and so, $0\leq \sup_{\Sigma\in B_1^*}\Gamma^*(x_s,\Sigma)\leq\sup_{\Sigma\in B_1^*}\Gamma(x_s,\Sigma,\varphi_0)$. This establishes \ref{itm:cond_conv_prb_zr_star} since $\sup_{\Sigma\in B_1^*}\Gamma(x_s,\Sigma,\varphi_0)\rightarrow 0$ by assumption.
Suppose that \ref{itm:cond_unimod} holds for each $\varphi\in\mathscr{F}_0$. Defining $M_\varphi(a):=\{u\in\mathbb{R}^d:\Gamma_0(u,\varphi)\geq a\}$ and $M^*(a):=\{u\in\mathbb{R}^d:\Gamma^*_0(u)\geq a\}$, we note that $M^*(a)=\cap_{\varphi\in\mathscr{F}_0}M_\varphi(a)$. By assumption, $M_\varphi(a)$ is convex for each $a\geq 0$ and $\varphi\in\mathscr{F}_0$. Since the  intersection of a finite number of convex sets is convex, $M^*(a)$ is convex for each $a\geq 0$, thus proving \ref{itm:cond_unimod_star}. Suppose that \ref{itm:cond_cent_sym} holds for each $\varphi\in\mathscr{F}_0$. Because $u\mapsto \Gamma_0(u,\varphi)$ is centrally symmetric for each $\varphi\in\mathscr{F}_0$, we have that $\Gamma^*_0(-u)=\min_{\varphi\in\mathscr{F}_0}\Gamma_0(-u,\varphi)=\min_{\varphi\in\mathscr{F}_0}\Gamma_0(u,\varphi)=\Gamma^*_0(u)$ for each $u\in\mathbb{R}^d$, and so, \ref{itm:cond_cent_sym_star} holds.
\end{proof}

The following lemmas establish technical properties for certain sets, functions and probability statements considered in the proof of Theorems 1 and 2.

\begin{lemma}
	\label{lemma:mvtnorm_quasi-concavity}
	The density function $f$ of the $d$-variate normal distribution with mean zero  is quasi-concave, that is, the set $\{x\in\mathbb{R}^d:f(x)\geq \kappa\}$ is convex for each $\kappa\in\mathbb{R}$.
\end{lemma}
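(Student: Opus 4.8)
The plan is to establish the stronger property that $f$ is \emph{log-concave}, from which quasi-concavity follows immediately. Writing $\Sigma$ for the (positive definite) covariance matrix of the distribution, the density admits the explicit form
\begin{equation*}
f(x)=\frac{1}{(2\pi)^{d/2}|\Sigma|^{1/2}}\exp\left(-\tfrac{1}{2}x^{\top}\Sigma^{-1}x\right),
\end{equation*}
so that
\begin{equation*}
\log f(x)=-\tfrac{d}{2}\log(2\pi)-\tfrac{1}{2}\log|\Sigma|-\tfrac{1}{2}x^{\top}\Sigma^{-1}x.
\end{equation*}
Since $\Sigma$ is positive definite, so is $\Sigma^{-1}$, whence the quadratic form $x\mapsto x^{\top}\Sigma^{-1}x$ is convex; consequently $x\mapsto\log f(x)$, being $-\tfrac{1}{2}$ times this quadratic form plus a constant, is concave on $\mathbb{R}^d$.

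Having established concavity of $\log f$, I would conclude as follows. For $\kappa>0$, the set in question can be rewritten as $\{x\in\mathbb{R}^d:f(x)\geq\kappa\}=\{x\in\mathbb{R}^d:\log f(x)\geq\log\kappa\}$, which is a superlevel set of the concave function $\log f$ and is therefore convex by the elementary fact that superlevel sets of concave functions are convex. For $\kappa\leq 0$, the density being strictly positive everywhere forces $\{x:f(x)\geq\kappa\}=\mathbb{R}^d$, which is trivially convex. This covers every $\kappa\in\mathbb{R}$ and completes the argument. (Equivalently, one could note that for $\kappa$ in the range of $f$ the superlevel set is the ellipsoid $\{x:x^{\top}\Sigma^{-1}x\leq c\}$ for an appropriate $c\geq 0$, a sublevel set of a convex quadratic form, but the log-concavity framing is cleaner.)

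There is no substantive obstacle here: the result is a direct consequence of the positive definiteness of $\Sigma^{-1}$. The only points requiring a modicum of care are to treat the non-positive values of $\kappa$ separately---for which the superlevel set is the whole space rather than an ellipsoid---and to invoke the standard lemma that superlevel sets of concave functions are convex, which is immediate from the definition of concavity applied to any two points in the set.
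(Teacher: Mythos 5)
Your proof is correct and takes essentially the same route as the paper: both arguments establish quasi-concavity by way of log-concavity of the normal density. The only difference is that the paper handles both steps by citation (Tong 2012 for log-concavity of the density; Boyd and Vandenberghe, Section 3.5.1, for log-concave implying quasi-concave), whereas you prove both explicitly --- computing $\log f$ and using positive definiteness of $\Sigma^{-1}$, then arguing directly about superlevel sets --- which makes your version self-contained.
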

\begin{proof}[Proof of Lemma \ref{lemma:mvtnorm_quasi-concavity}]
    By \cite{tong_multivariate_2012}, the $d$-variate normal probability density function is log-concave. All log-concave functions are quasi-concave in view of  Section 3.5.1 of \cite{boyd_convex_2004}.
\end{proof}

\begin{lemma}
	\label{lemma:set_convexity}
	Let $C$ be a convex subset of $\mathbb{R}^d$, and define $C_{ \mu} := \{ c +  \mu :  c \in C\}$ for $ \mu\in \mathbb{R}^d$. For any $\mu_1,\mu_2\in\mathbb{R}^d$ and $t\in[0,1]$, the set $t C_{ \mu_1} + (1 - t) C_{ \mu_2} = \{t c_{1} + (1 - t) c_{2}:  c_{1} \in C_{ \mu_1},  c_{2} \in C_{ \mu_2}\}$ is equal to $C_{t \mu_1 + (1 - t)  \mu_2}$.
\end{lemma}
\begin{proof}[Proof of Lemma \ref{lemma:set_convexity}]
	Let $x \in t C_{ \mu_1} + (1 - t) C_{ \mu_2}$, so that there exist $ c_1, c_2\in C$ such that $x=( c_1 +  \mu_1) t + ( c_2 +  \mu_2)(1 - t)$. Since we can rewrite $x= c_1 t +  c_2 (1 - t) +  \mu_1 t +  \mu_2 (1 - t)$ with $ c_1 t +  c_2 (1 - t)\in C$ by the convexity of $C$, we have that $x \in C_{ \mu_1 t +  \mu_2 (1 - t)}$. Hence, we find that $t C_{ \mu_1} + (1 - t) C_{ \mu_2}\subseteq C_{ \mu_1 t +  \mu_2 (1 - t)}$. To show the reverse inclusion, let $ y \in C_{ \mu_1 t +  \mu_2 (1 - t)}$, so that there exists $ c\in C$ such that
	$y= c +  \mu_1 t +  \mu_2 (1 - t) = ( c +  \mu_1) t + ( c +  \mu_2) (1 - t)$.
	This implies that $ y \in t C_{ \mu_1} + (1 - t) C_{ \mu_2}$. Hence, we also find that $C_{ \mu_1 t +  \mu_2 (1 - t)}\subseteq t C_{ \mu_1} + (1 - t) C_{ \mu_2}$.
\end{proof}

\begin{lemma}
\label{lemma:aditional_anderson_cond}
Suppose that $B$ is a closed, bounded and centrally symmetric subset of $\mathbb{R}^d$, and let $f$ denote the density function of the $d$-dimensional normal distribution with mean zero and positive definite covariance matrix. For any non-zero $h\in\mathbb{R}^d$, the function $g_{h}:\beta\mapsto \int_B f(t-\beta h)\,dt$ is strictly decreasing.
\end{lemma}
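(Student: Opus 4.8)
The plan is to prove that $g_h$ is even and strictly decreasing on $[0,\infty)$; since $g_h(-\beta)=\int_B f(t+\beta h)\,dt=\int_B f(t-\beta h)\,dt=g_h(\beta)$ (substitute $t\mapsto -t$ and use $-B=B$ together with $f(-u)=f(u)$), this even monotonicity is the only sensible reading of the claim, and it is the form used later to order realizations along a ray. Throughout I use that $B$ is convex: in every application $B$ is a norm ball $\{u:\varphi(u)\le c_0\}$, which is convex, centrally symmetric and bounded.

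I would first show $g_h$ is non-increasing on $[0,\infty)$ by an Anderson-type argument built on the preceding lemmas. Writing $f(u)=\int_0^\infty \1{u\in A_s}\,ds$ with $A_s:=\{u:f(u)\ge s\}$, Lemma~\ref{lemma:mvtnorm_quasi-concavity} makes each $A_s$ convex and symmetry of $f$ makes it centrally symmetric, so by Fubini $g_h(\beta)=\int_0^\infty V_s(\beta)\,ds$ with $V_s(\beta):=\mathrm{Vol}\{B\cap(A_s+\beta h)\}$. For $\beta=t\beta_1+(1-t)\beta_2$, convexity of $B$ and $A_s$ together with Lemma~\ref{lemma:set_convexity} give the inclusion $B\cap(A_s+\beta h)\supseteq t\{B\cap(A_s+\beta_1 h)\}+(1-t)\{B\cap(A_s+\beta_2 h)\}$, so the Brunn--Minkowski inequality shows $V_s^{1/d}$ is concave; it is also even, hence non-increasing on $[0,\infty)$, and integrating over $s$ transfers this to $g_h$.

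The main obstacle is upgrading this to strict monotonicity, since the individual $V_s$ can be locally constant. For this I would compute $g_h'$ directly: differentiating under the integral (valid since $B$ is bounded and $f$ smooth) and using $\nabla f(u)=-f(u)\Sigma^{-1}u$ gives $g_h'(\beta)=\int_{B-\beta h}(h^\top\Sigma^{-1}u)\,f(u)\,du$. Applying the affine map $u\mapsto\Sigma^{-1/2}u$ and a rotation sending $\Sigma^{-1/2}h$ to a positive multiple of $e_1$ (both preserving convexity, central symmetry and the Gaussian form) reduces this, up to a positive factor, to $\int_S w_1\,\phi(w)\,dw$, where $\phi$ is the standard Gaussian and $S$ is convex and centrally symmetric about $-c\,e_1$ with $c>0$ whenever $\beta>0$. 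Factoring $\phi$ across coordinates and integrating out $w_{2:d}$ leaves $\int_{\mathbb{R}} w_1\,\phi_1(w_1)\,\rho(w_1)\,dw_1$, where $\rho(w_1)$ is the standard Gaussian measure of the slice $S_{w_1}$.

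It remains to sign this one-dimensional integral. The function $\rho$ is log-concave (by Pr\'ekopa's theorem, $\1{w\in S}\,\phi(w)$ being log-concave; equivalently by a further layer-cake/Brunn--Minkowski step on the slices via Lemma~\ref{lemma:mvtnorm_quasi-concavity}), symmetric about $-c$, and, because $B$ and hence $S$ is bounded, compactly supported and therefore not constant. Using $w_1\phi_1(w_1)=-\phi_1'(w_1)$ and integrating by parts rewrites the integral as $\int\phi_1(w_1)\,\rho'(w_1)\,dw_1$; since $\rho'$ is odd about $-c$, pairing $w_1=-c\pm t$ shows this equals $\int_0^\infty[\phi_1(t-c)-\phi_1(t+c)]\,\rho'(t-c)\,dt$. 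For $t,c>0$ the bracket is strictly positive because $|t-c|<t+c$, while $\rho'(t-c)\le 0$ (as $\rho$ is non-increasing past its mode $-c$) and is strictly negative on a set of positive measure (a non-constant log-concave function decaying to $0$ must strictly decrease past its mode). Hence $g_h'(\beta)<0$ for every $\beta>0$, giving the claimed strict monotonicity. The crux is exactly this strictness, which I resolve by reducing to the marginal $\rho$ and exploiting the boundedness of $B$.
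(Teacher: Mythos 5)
Your route is genuinely different from the paper's, and for the most part sound. The paper's proof is very short: it never re-derives any monotonicity at all, but simply verifies the strictness hypothesis of Corollary 1 of \citet{anderson_integral_1955} --- it takes a maximizer $x_0$ of $f$ over $B$, flips its sign so that translation by $h$ strictly decreases the Gaussian log-density, and exhibits a level $u$ strictly between $f(x_0^*)$ and $f(x_0^*+h)$, so that Anderson's two sets differ; everything else is delegated to Anderson's theorem. You instead reprove the non-strict monotonicity from scratch (layer-cake plus Brunn--Minkowski, which is essentially Anderson's own argument, with Lemmas \ref{lemma:mvtnorm_quasi-concavity} and \ref{lemma:set_convexity} in the roles you assign them) and then attack strictness head-on by differentiating $g_h$. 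Two of your framing points are genuine improvements on the lemma as stated: $g_h$ is even, so ``strictly decreasing'' can only mean on $[0,\infty)$; and convexity of $B$, absent from the hypotheses but present in every application, is truly needed --- for $d=1$, $B=\{t: 1\le |t|\le 2\}$ and $f$ standard normal, one computes $g_h''(0)=2\{\phi(1)-2\phi(2)\}>0$, so $g_h$ is strictly \emph{increasing} near $\beta=0^+$. (The paper needs convexity too, implicitly, since Anderson's corollary assumes it.) Note also that once you have $g_h'(\beta)<0$ for all $\beta>0$, your Brunn--Minkowski step is redundant.

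The strictness step, however, has a genuine gap. The integration by parts $\int w_1\phi_1(w_1)\rho(w_1)\,dw_1=\int\phi_1(w_1)\rho'(w_1)\,dw_1$ silently discards the boundary contributions $\phi_1(a)\rho(a^+)-\phi_1(b)\rho(b^-)$ at the endpoints $a,b$ of the support of $\rho$, and a compactly supported log-concave $\rho$ need not vanish there: it may jump. This is not a pathological corner case --- it is exactly what happens for the $\ell_\infty$ ball, one of the two principal norms in the paper. Take $\Sigma=I$, $B=[-c_0,c_0]^d$, $h=e_1$: then $\rho$ is a positive constant times the indicator of an interval, so $\rho'=0$ almost everywhere, your final display equals $0$, and the chain of identities would deliver $g_h'(\beta)=0$ rather than the true strictly negative value; the entire decrease lives in the boundary terms. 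Correspondingly, your assertion that a non-constant log-concave function decaying to zero must have $\rho'<0$ on a set of positive measure is false for this indicator --- it decreases only by jumping. The repair is routine but must be made: either retain the boundary terms (by symmetry $\rho(a^+)=\rho(b^-)$ and $\phi_1(a)<\phi_1(b)$ since $|a|>|b|$, so they are nonpositive and strictly negative whenever $\rho$ jumps), or run your pairing argument against the Lebesgue--Stieltjes derivative $d\rho$, a signed measure including atoms at $a$ and $b$, whose restriction to $(-c,\infty)$ has total mass $-\rho(-c)<0$. In either form one must also assume $B$ has positive Lebesgue measure --- a hypothesis both you and the paper leave implicit, since for $B$ of measure zero $g_h\equiv 0$ and the lemma fails trivially.
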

\begin{proof}[Proof of Lemma \ref{lemma:aditional_anderson_cond}]

A minimizer $x_0 \in \text{argmax}\left\{ f(x) : x \in \convset\right\}$ exists because $\convset$ is closed and bounded and $ f$ is continuous. We also have that $-x_0 \in \text{argmax}\left\{ f(x) : x \in \convset\right\}$ because $\convset$ and $ f$ are both centrally symmetric. Let $\Sigma$ be the covariance matrix indexing $ f$, and define $x_0^*:= {\rm sign}( h^{\top}\Sigma^{-1}x_0)\cdot x_0\in B$. In particular, we note that $x_0^*+h\in \{x+h:x\in B\}$. We also note that
\begin{align*}
    2[\log f(x^*_0) - \log f(x^*_0 + h)]\ &=\  (x_0^* + h)^{\top}\Sigma^{-1}(x_0^* + h) -(x_0^*)^{\top}\Sigma^{-1}x_0^* = 2 h^{\top}\Sigma^{-1}x_0^* + h^{\top}\Sigma^{-1} h \\
    &=\ 2|h^{\top}\Sigma^{-1}x_0| + h^{\top}\Sigma^{-1} h\ >\ 0\ ,
\end{align*}
and so, $ f(x^*_0) >  f(x^*_0 + h)$ for $h\in\mathbb{R}^d\neq 0$. Set $u:=[ f(x^*_0)+ f(x^*_0 + h)]/2$ and note that $ f(x_0^*)>u> f(x_0^*+h)$. This implies that $x_0^*+h$ is an element of $\{x+h:x\in B, f(x)>u\}$ but not of $\{x+h:x\in B\}\cap\{x: f(x)>u\}$. Thus, in view of Corollary 1 of \cite{anderson_integral_1955}, $g_h$ is strictly decreasing.
\end{proof}
\begin{lemma}
\label{lemma:zero_measure_ray_monotonic}
Let constants $b,c\in\mathbb{R}$ and strictly ray monotone function $g:\mathbb{R}^d\rightarrow \mathbb{R}$ be given. If $Z$ is a non-degenerate $d$-variate normal random variable with distribution $P$ and density function $ f$,  it holds that $P\left\{g(Z) = c\right\}=\int I\{g(z) = c\}  f(z)\,dz=0$.
\end{lemma}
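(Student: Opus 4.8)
The plan is to reduce the probability statement to a statement about Lebesgue measure. Since $Z$ is a non-degenerate $d$-variate normal vector, its law $P$ is absolutely continuous with respect to $d$-dimensional Lebesgue measure with density $f$, so it suffices to prove that the level set $L_c:=\{z\in\mathbb{R}^d:g(z)=c\}$ is Lebesgue-null; then the integrand $I\{g(z)=c\}f(z)$ vanishes Lebesgue-almost everywhere and both the probability and the integral are zero. (The auxiliary constant $b$ plays no role in the conclusion and enters only through the notion of strict ray monotonicity.)

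First I would extract the essential consequence of strict ray monotonicity. For each direction $\omega$ in the unit sphere $S^{d-1}$, the univariate map $r\mapsto g(r\omega)$ on $(0,\infty)$ is strictly monotone, hence injective, so the equation $g(r\omega)=c$ has at most one solution $r>0$. Consequently, for each fixed $\omega$ the set $\{r>0:g(r\omega)=c\}$ contains at most one point and therefore has one-dimensional Lebesgue measure zero. This is the step that genuinely uses the hypothesis, and is exactly the behavior anticipated in the discussion of condition \ref{itm:cond_unimod}, where larger $\beta$ along a common ray $\beta v$ strictly changes the value of the inefficiency measure.

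Next I would pass to polar coordinates to propagate this one-dimensional nullity to $\mathbb{R}^d$. Writing each nonzero $z$ as $z=r\omega$ with $r>0$ and $\omega\in S^{d-1}$, and applying Tonelli's theorem with the polar volume element $r^{d-1}\,dr\,d\sigma(\omega)$, I obtain
\begin{align*}
\int_{\mathbb{R}^d} I\{g(z)=c\}\,dz=\int_{S^{d-1}}\int_0^\infty I\{g(r\omega)=c\}\,r^{d-1}\,dr\,d\sigma(\omega).
\end{align*}
By the previous step the inner integral is zero for every $\omega$, so the full integral is zero; the origin is a single point and contributes nothing. Hence $L_c$ is Lebesgue-null, and the desired identity $P\{g(Z)=c\}=\int I\{g(z)=c\}f(z)\,dz=0$ follows at once from absolute continuity.

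The main obstacle is making the ``at most one point per ray'' reduction rigorous through the change to polar coordinates while cleanly discarding the origin. Measurability of $L_c$, which is needed to invoke Tonelli, is not a concern in our applications since the relevant $g$ (e.g.\ $u\mapsto\Gamma_0(u,\varphi)$) is continuous and hence Borel measurable; if one wishes to avoid even assuming continuity, strict ray monotonicity together with quasi-concavity as in condition \ref{itm:cond_unimod} already guarantees the level sets are Borel.
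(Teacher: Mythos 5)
Your proof is correct and follows essentially the same route as the paper's: both reduce the $d$-dimensional integral to radial integrals via polar coordinates and observe that strict ray monotonicity forces the indicator $I\{g(r\omega)=c\}$ to be nonzero for at most one $r$ per direction, making each radial integral vanish. The only cosmetic difference is that you phrase the polar decomposition abstractly with Tonelli and surface measure on $S^{d-1}$ (first showing the level set is Lebesgue-null, then invoking absolute continuity), whereas the paper integrates the density directly using an explicit angular parametrization; the underlying argument is the same.
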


\begin{proof}[Proof of Lemma \ref{lemma:zero_measure_ray_monotonic}]
We focus on $d\ge 3$ as the case $d\in\{1,2\}$ is straightforward. To evaluate the integral $\int I\{g(z) = c\}  f(z)\,dz$, we reparametrize $\mathbb{R}^d$ into $\{(r,\gamma_1,\gamma_2,\ldots,\gamma_{d-1}):r\in[0,\infty),\gamma_1,\gamma_2,\ldots,\gamma_{d-2}\in[0,\pi),\gamma_{d-1}\in[0,2\pi)\}$ as in  \citet{blumenson_derivation_1960}, setting  $z=rt(\gamma)$ with  $t_1(\gamma):=\cos(\gamma_1)$, $t_j(\gamma):=\sin(\gamma_1)\sin(\gamma_2)\ldots\sin(\gamma_{j-1})\cos(\gamma_{j})$ for $j=2,3,\ldots,d-1$, and $t_d(\gamma):=\sin(\gamma_1)\sin(\gamma_2)\ldots\sin(\gamma_{d-1})$. Here, we write $\gamma:=(\gamma_1,\gamma_2,\ldots,\gamma_{d-1})$ and $t(\gamma):=(t_1(\gamma),t_2(\gamma),\ldots,t_d(\gamma))$. We also define $|J_d| := r^{d-1} \text{sin}(\gamma_1)^{d - 2}\text{sin}(\gamma_2)^{d - 3} \dots \text{sin}(\gamma_{d - 2})$, and  note that $a_{ \gamma}:= |J_d|r^{1-d}$ depends on $\gamma$ but not $r$.  This change of variable allows us to write
\begin{align*}
\int I\{g(z) = c\}  f(z)\,dz\ &=\ \int_{0}^{2\pi}\int_{0}^{\pi} \dots \int_{0}^{\pi}  \int_{0}^{\infty}  I\left\{g\left(r t( \gamma)
\right) = c\right\}  f\left(r t( \gamma) \right) |J_d|\,dr\, d\gamma_1 \dots d\gamma_{d - 1} \\
&= \int_{0}^{2\pi}\int_{0}^{\pi} \dots \int_{0}^{\pi}  a_{ \gamma}\left[\int_{0}^{\infty} r^{d-1} I\left\{g\left(r t( \gamma)
\right) = c\right\}  f\left(r t( \gamma) \right) \,dr\right] d\gamma_1 \dots d\gamma_{d - 1}\ ,
\end{align*}
where we note that in the innermost integral $t( \gamma)$ is fixed when integrating over $r$.  Thus, the latter integral has the form $
\int_0^\infty r^{d - 1}I\{g(r v) = c\}  f(r v)\, dr$ for some $v\in\mathbb{R}^d$. 
Because $g$ is strictly ray monotone, the function $g_{ v}:r\mapsto g(rv)$ is strictly monotone.  Thus, the indicator function $I\{g(r v) = c\}$ can only equal one for a single value of $r$, and so, the innermost integral and thus the probability of interest equal zero.
\end{proof}

\subsection*{Proof of Theorems \ref{thm:t1ec} and \ref{thm:unbiased_locl_alt}}

We now prove Theorems 1 and 2. Below, we denote convergence in distribution by $\rightsquigarrow$. We refer to $\bar{U}_n$ as a random draw from the normal distribution with mean zero and covariance matrix $\Sigma_n$ independent of $X_1,X_2,\ldots,X_n$ given $\Sigma_n$. We define $\bar{A}^*_n:=\Gamma^*(\bar{U}_n,\Sigma_n)$, $A^*_n:=\Gamma^*(U_n,\Sigma_n)$ and $A^*_0:=\Gamma^*(U_0,\Sigma_0)$, and denote by $q_n$ and $q_0$ the $\alpha$-quantile of $\bar{A}^*_n$ and $A^*_0$, respectively.

\begin{proof}[Proof of Theorem~\ref{thm:t1ec}]
In view of Lemma \ref{lem:Gammastar}, if conditions \ref{itm:cond_cont_non_neg_star}--\ref{itm:cond_dom_by_lebesgue_star} hold for each $\varphi\in\mathscr{F}_0$, then  \ref{itm:cond_cont_non_neg_star}--\ref{itm:cond_dom_by_lebesgue_star} hold, and if in addition \ref{itm:cond_conv_prb_zr} holds for some $\varphi\in\mathscr{F}_0$, then  \ref{itm:cond_conv_prb_zr_star}  holds as well. Since $\Sigma_n$ converges in probability to $\Sigma_0$, we have that $(\bar{U}_n,\Sigma_n)$ converges in distribution to $(U_0,\Sigma_0)$.  In view of \ref{itm:cond_cont_non_neg_star}, this implies that $\bar{A}^*_n\rightsquigarrow A^*_0$ by the continuous mapping theorem.  By Lemma 21.2 of \cite{van_der_vaart_asymptotic_2000}, we have that $q_n\rightarrow q_0$ in probability since the distribution function of  $A^*_0$ is continuous at $q_0$ by \ref{itm:cond_dom_by_lebesgue_star}. Suppose that $P_0 \in \mathcal{M}_0$. The asymptotic linearity of $ \psi_n$ and consistency of $\Sigma_n$ imply that $(U_n,\Sigma_n)\rightsquigarrow(U_0,\Sigma_0)$, and so, $A^*_n\rightsquigarrow A^*_0$ by the continuous mapping theorem in view of \ref{itm:cond_cont_non_neg_star}. By \ref{itm:cond_dom_by_lebesgue_star}, this implies that the type I error $P_0(A^*_n \leq q_n)$ tends to $P_0(A^*_0 \leq q_0) = \alpha$.

Now, suppose instead that $P_0 \not\in \mathcal{M}_0$. For any $\varepsilon>0$, the rejection rate $P_0(A^*_n \leq q_n)=1 - P_0(A^*_n > q_n)$ is bounded below by
$ 1 - P_0(A^*_n \geq q_0 - \varepsilon) -  P_0(|q_n - q_0|> \varepsilon)$.  The term $P_0(|q_n - q_0|> \varepsilon)$ tends to zero in  probability by the consistency of $q_n$ for $q_0$. Since $q_0 > 0$ by 
 \ref{itm:cond_dom_by_lebesgue_star}, we can choose $\varepsilon\in (0, q_0)$ above, and for any such choice, we have that
\[
    \liminf_{n\rightarrow\infty}P_0(A^*_n\leq q_n)\ge 1-\limsup_{n\rightarrow\infty}P_0(A^*_n \geq q_0 - \varepsilon)\ .
\]It then directly follows that the power $P_0(A^*_n\leq q_n)$ of the test tends to one provided $A^*_n$ tends to zero in probability. We thus show that the latter statement holds. First note that a sequence of random variables converges in probability if and only if each subsequence of this sequence contains a further subsequence that converges almost surely to the same limit \citep{shorack_probability_2017}. 
Let $\varphi_0\in\mathscr{F}_0$ be such that $\Gamma^*(x_s, \Sigma)\rightarrow 0$ uniformly over $\Sigma \in B^*$ for some neighborhood $B^*$ of $\Sigma_0$ for every sequence $x_1,x_2,\ldots$ of elements of $\mathbb{R}^d$ such that $\varphi_0(x_s)\rightarrow \infty$; the existence of $\varphi_0$ is guaranteed by \ref{itm:cond_conv_prb_zr_star}.
By the reverse triangle inequality,  we have that $\varphi_0(n^{1/2}\psi_n)\ge \varphi_0(n^{1/2} \psi_0)- \varphi_0(n^{1/2} (\psi_n- \psi_0)) = n^{1/2}\varphi_0( \psi_0)-O_P(1)$ in view of the fact that $n^{1/2}(\psi_n-\psi_0)$ is bounded in probability. As $n^{1/2}\varphi_0( \psi_0)\rightarrow+\infty$, this shows that $V_n:=1/\varphi_0(n^{1/2} \psi_n)$ tends to zero in probability.  Let $V_{n_1},V_{n_2},\ldots$  with $1\leq n_1<n_2<\ldots$ be an arbitrary subsequence of $V_1,V_2,\ldots$, and  note that $V_{n_k}$ tends to zero in probability as $k\rightarrow\infty$. There must then exist a further subsequence $V_{n_{k_1}},V_{n_{k_2}},\ldots$ with $1\leq k_1<k_2<\ldots$ that converges to zero almost surely, and so, defining $U'_j:=U_{n_{k_j}}=n_{k_j}^{1/2}\psi_{n_{k_j}}$, we have that $\varphi_0(U'_j)$ diverges almost surely as $j\rightarrow \infty$. Thus, it follows that $\Gamma^*(U'_j, \Sigma)$ converges to zero almost surely uniformly over $\Sigma \in B^*$. Since we have argued that every subsequence $\Gamma^*(U_{n_k},\Sigma)$ has a further subsequence that converges to zero almost surely, we have shown that
$\Gamma^*(U_n, \Sigma)$ tends to zero in probability uniformly over $\Sigma \in B^*$. For each $\varepsilon > 0$, we then have that
\begin{align*}
    P_0(A^*_n > \varepsilon)\ &=\  P_0(A^*_n>\varepsilon,\Sigma_n\in B^*)  +  P_0(A^*_n>\varepsilon,\Sigma_n\notin B^*)\\
    &\leq\ \sup_{\Sigma \in B^*}P_0\{\Gamma^*(U_n, \Sigma) > \varepsilon\}  +  P_0( \Sigma_n \not \in B^*)\ ,
\end{align*}which implies the claim since the first and second summands tend to zero in view of \ref{itm:cond_conv_prb_zr_star} and the consistency of $\Sigma_n$, respectively.
\end{proof}

\begin{proof}[Proof of Theorem \ref{thm:unbiased_locl_alt}]
In view of Lemma \ref{lem:Gammastar}, the fact that conditions \ref{itm:cond_cont_non_neg}, \ref{itm:cond_dom_by_lebesgue}, \ref{itm:cond_unimod} and \ref{itm:cond_cent_sym} hold for each $\varphi\in\mathscr{F}_0$ and that condition  \ref{itm:cond_conv_prb_zr} holds for some $\varphi\in\mathscr{F}_0$ implies that \ref{itm:cond_cont_non_neg_star}--\ref{itm:cond_cent_sym_star} hold.  Since $\Sigma_n$ is consistent and $\psi_n$ is regular and asymptotically linear, under any sequence $P_n^{(0)}$ of local alternatives, it holds that $U_n \rightsquigarrow U_0 +  h$ for some $h\in\mathbb{R}^d$, and in view of \ref{itm:cond_cont_non_neg_star}, $A^*_n=\Gamma^*(U_n,\Sigma_n) \rightsquigarrow \Gamma_0^*(U_0 +  h)$. Also, in view of \ref{itm:cond_cont_non_neg_star} and \ref{itm:cond_dom_by_lebesgue_star}, it follows that $\bar{A}_n^* \rightsquigarrow A^*_0$ and the distribution function of $A^*_0$ is continuous. Therefore, $q_n$ tends to $q_0$ in probability under this sequence of local alternatives. Lastly, $q_0$ is a continuity point of $\Gamma_0^*(U_0 + h)$ using a change of variables argument and \ref{itm:cond_dom_by_lebesgue_star}. Thus, it follows that  $P_n^{(0)}(A^*_n \leq q_n) \to \textrm{pr}\left\{\Gamma_0^*(U_0+h) \leq q_0\right\}$. We define the function $g_{ h} : \mathbb{R} \to \mathbb{R}$ pointwise as
\begin{align}
	g_{ h}(\beta) := \pr\left\{\Gamma^*_0(U_0 + \beta  h) \geq q_0\right\}  &= \int_{B_0} f_{0}(x - \beta  h)\,dx\ , \label{eqn:decr_part}
\end{align}where we define $B_0:=\{x\in\mathbb{R}^d:  \Gamma^*_0(x) \geq q_0\}$ and denote by $ f_{0}$ the density of the multivariate normal distribution with mean zero and covariance $\Sigma_0$. By Lemma \ref{lemma:mvtnorm_quasi-concavity}, the probability density function of the multivariate normal distribution with mean zero is centrally symmetric and quasi-concave, and so, $g_{ h}$ is non-increasing in view of Theorem 1 of \citet{anderson_integral_1955}.  Corollary 1 of \citet{anderson_integral_1955} states that $g_h$ is in fact strictly decreasing provided $\{\omega +  h : \omega \in B_0,  f_{0}(\omega) > u\} \neq \{\omega +  h : \omega \in B_0\} \cap \{\omega+h :  f_0(\omega) > u\}$.  Lemma \ref{lemma:aditional_anderson_cond} indicates that this condition is satisfied if $\convset_0$ is closed, bounded and centrally symmetric. By \ref{itm:cond_cent_sym_star}, $\Gamma^*_0$ is centrally symmetric, and so,  $\convset_0$ is also centrally symmetric.  Also, since $u\mapsto \Gamma^*_0(u)$ is continuous, it is also upper semicontinuous, and therefore, $B_0$ is closed. It remains to show that $B_0$ is bounded. Let $\varphi_0\in\mathscr{F}_0$ be such that $\Gamma^*(x_s, \Sigma)\rightarrow 0$ uniformly over $\Sigma \in B^*$ for some neighborhood $B^*$ of $\Sigma_0$ for every sequence $x_1,x_2,\ldots$ of elements of $\mathbb{R}^d$ such that $\varphi_0(x_s)\rightarrow \infty$; the existence of $\varphi_0$ is guaranteed by \ref{itm:cond_conv_prb_zr_star}. Suppose that $B_0$ is not bounded, that is, for each $r=1,2,\ldots$, there exists some $ v_r \in B_0$ for which $\varphi_0( v_r) > r$. Because $\varphi_0( v_r) \rightarrow \infty$,  it follows that $\sup_{\Sigma\in B^*}\Gamma( v_r, \Sigma,\varphi_0) \to 0$, and since $0\leq \Gamma^*(v_r,\Sigma)\leq \Gamma( v_r, \Sigma,\varphi_0)$ for each $\Sigma\in B^*$, this also implies that $\sup_{\Sigma\in B^*}\Gamma^*(v_r,\Sigma)\rightarrow 0$.  However, this is a contradiction since by definition $\sup_{\Sigma\in B^*}\Gamma^*(v_r,\Sigma)\geq \Gamma_0^*( v_r) \geq q_0$ for every $r$.  Thus, no such sequence exists, and instead there exists some $r_0>0$ such that $\varphi_0(v) < r_0$  for every $ v \in B_0$. Thus, $B_0$ must be bounded. It follows finally that $g_h$ is strictly decreasing, and so, the power of the proposed test under local alternatives tends to $1 - g_h(1) > 1 - g_h(0)=\text{pr}\{\Gamma_0^*(U_0)< q_0\}=\alpha$.
\end{proof}

\subsection*{Proof of Theorem \ref{thm:performance_metric}}
\label{sec:perfm_metr_prf}
We now show that both local measures of test inefficiency discussed in this paper satisfy regularity conditions \ref{itm:cond_cont_non_neg}--\ref{itm:cond_cent_sym} irrespective of the norm $\varphi$ used. 

\begin{proof}[Proof of Theorem \ref{thm:performance_metric}]
\underline{\textbf{Part 1:}} acceptance rate measure.\\
\textbf{C1.} Non-negativity is clear. To establish the continuity of $(x,\Sigma)\mapsto \Gamma_{\text{ar}}(x,\Sigma,\varphi)$, we first show that $\zeta:(\garg, \Sigma, c)\mapsto\int I\{\varphi(t) < c\}  f_\Sigma(t- x)\,dt$
is continuous, where $ f_\Sigma$ is the density function of the $d$-dimensional normal distribution with mean zero and covariance matrix $\Sigma$. Fix $x_0\in \mathbb{R}^d$, $\Sigma_0'\in \mathbb{V}_d$ sufficiently close to $\Sigma_0$ to ensure that it is invertible, and $c_0\in\mathbb{R}$. Consider an arbitrary sequence $(x_1, \Sigma_1, c_1),(x_2,\Sigma_2,c_2),\ldots$ in $\mathbb{R}^d\times \mathbb{V}_d\times \mathbb{R}$ tending to $(x_0, \Sigma_0', c_0)$. Since the smallest eigenvalue of $\Sigma_j$ converges to that of $\Sigma_0'$,  $\Sigma_j$ is invertible for all $j$ sufficiently large. Hence, without loss of generality, we suppose that $\Sigma_j$ is invertible for all $j$. By the triangle inequality, for any $j$, we have that
\begin{align}
    &|\zeta(x_j, \Sigma_j, c_j)-\zeta(x_0, \Sigma'_0, c_0)|\ \leq\ |\zeta(x_j, \Sigma_j, c_j)-\zeta(x_0, \Sigma'_0, c_j)| + |\zeta(x_0, \Sigma'_0, c_j)-\zeta(x_0, \Sigma'_0, c_0)|\ . \label{eq:seqcont}
\end{align}We first show that $\sup_{c\in\mathbb{R}} |\pregam(\garg_j, \Sigma_j, c)-\pregam(\garg_0, \Sigma_0, c)|\rightarrow 0$, which implies that the first summand tends to zero. Let $T_1,T_2,\ldots$ be independent random vectors with $T_j$ following the $d$-dimensional normal distribution with mean $x_j$ and covariance matrix $\Sigma_j$, and let $T_0$ be an independent random vector following the $d$-dimensional normal distribution with mean $x_0$ and covariance matrix $\Sigma_0'$. Because the moment generating function of $ T_j$ converges pointwise to that of $ T_0$, we have that $ T_j\rightsquigarrow T_0$, and by the continuity of norms, it follows that $\varphi( T_j)\rightsquigarrow\varphi( T_0)$. Hence, the  distribution function $F_j$ of $\varphi( T_j)$ tends to the distribution function $F_0$ of $\varphi( T_0)$ at all continuity points of $F_0$. Because $ T_j$ is a non-degenerate normal random vector and all norms are strictly ray increasing, $F_j$ is everywhere continuous for each $j$ in view of  Lemma~\ref{lemma:zero_measure_ray_monotonic}, so that $F_j(c)\rightarrow F_0(c)$ for each $c\in\mathbb{R}$. Moreover, by Lemma~2.11 in \citet{van_der_vaart_asymptotic_2000}, this convergence is uniform, that is, $\sup_{c\in\mathbb{R}}|F_j(c)-F_0(c)|\rightarrow 0$. 
 Since the continuity of $F_j$ everywhere implies that $\zeta(\garg_j, \Sigma_j, c_0)=F_j(c_0)$ for each $j$, we find that $\sup_{c\in\mathbb{R}}|\zeta(x_j, \Sigma_j, c)-\zeta(x_0, \Sigma_0, c)|\rightarrow 0$, as claimed. That the second summand in \eqref{eq:seqcont} also tends to zero follows from the fact that $\zeta(x_0, \Sigma_0, c_j)=F_0(c_j)\rightarrow F_0(c_0)=\zeta(x_0, \Sigma_0, c_0)$ since $c_0$ is necessarily a continuity point of $F_0$. 

For each $\Sigma\in\mathbb{V}_d$, we define $c_\alpha(\Sigma):=\min\{c>0:\text{pr}\{\varphi(U)<c\}\geq 1-\alpha\}$ with $U$ a multivariate normal random vector with mean zero and covariance $\Sigma$. We wish to show that $c_\alpha$ is continuous at $\Sigma_0$. We first note that $c_\alpha(\Sigma_j)$ is the $(1-\alpha)$-quantile of $\varphi( T_j)$ in the setting in which $x_1=x_2=\ldots= 0$.  Since we have already shown that the distribution function of $\varphi( T_j)$ converges uniformly to that of $\varphi( T_0)$,  it follows from Lemma 21.2 of \citet{van_der_vaart_asymptotic_2000} that the quantile function $F_j^{-1}$ of $\varphi(T_j)$ converges to the quantile function $F_0^{-1}$ of $\varphi( T_0)$.  Thus, we have that
$
c_{\alpha}(\Sigma_j) = F^{-1}_j(1 - \alpha) \rightarrow F^{-1}_0(1 - \alpha) = c_{\alpha}(\Sigma'_0)
$, thereby establishing that $c_\alpha$ is a continuous function in a neighborhood of $\Sigma
_0$.
Since $\Gamma_{\text{ar}}(x,\Sigma,\varphi)=\zeta(x,\Sigma,c_\alpha(\Sigma))$ for each $(x,\Sigma)$, $\Gamma_{\text{ar}}$ is a composition of continuous functions, thereby implying \ref{itm:cond_cont_non_neg}.\\
\textbf{C2.} Fix $x\in\mathbb{R}^d$ and define $g_x:\beta\mapsto \int_W  f_{\Sigma_0}(t-\beta x)\,dt$ with $W:=\{t\in\mathbb{R}^d:\varphi(t)\leq c_\alpha(\Sigma_0)\}$, so that $g_x(\beta)=\Gamma_{\text{ar}}(\beta x,\Sigma_0,\varphi)$.  In view of from Lemma \ref{lemma:aditional_anderson_cond}, $g_x$ is strictly decreasing provided $W$ is closed, bounded and centrally symmetric. Because $\varphi$ is a norm, it is centrally symmetric, and thus, so is $W$.  Moreover, the hypograph $\{(\intvar,c) : \varphi(\intvar)\le c\}$ is closed and therefore upper semicontinuous by the continuity of $\varphi$. This, in turn, implies that $W$ is closed. Finally, we can show that $W$ is bounded similarly as was done for the set $B_0$ in the proof of Theorem \ref{thm:unbiased_locl_alt}. Since this establishes that $x\mapsto \Gamma(x,\Sigma_0,\varphi)$ is ray-decreasing, we find that $\text{pr}\left\{\Gamma_{\textnormal{ar}}(U, \Sigma_0,\varphi) = c\right\} = 0$ for every $c \in \mathbb{R}$ by Lemma \ref{lemma:zero_measure_ray_monotonic}.\\
\textbf{C3.} 
For any sequence $x_1,x_2,\ldots$ of elements in $\mathbb{R}^d$ with $\varphi(x_s)  \to \infty$, we have that
\begin{align*}
   \Gamma_{\textnormal{ar}}(x_s, \Sigma_0, \varphi)\ &=\ \pr\left\{\varphi(U_0+x_s) < c_\alpha(\Sigma_0) \right\}\\
   &\leq\ \pr\left\{\varphi(U_0)+\varphi(x_s) < c_\alpha(\Sigma_0)\right\}\ =\ 1 - \pr\left\{ c_\alpha(\Sigma_0) - \varphi(U_0) \leq \varphi(x_s)\right\}
\end{align*}by the triangle inequality. Because the random variable $c_\alpha(\Sigma_0) - \varphi(U_0)$ is bounded in probability, it follows from the above inequality that $\Gamma_{\text{ar}}(x_s,\Sigma_0,\varphi)$ tends to zero since  $\varphi(x_s) \rightarrow \infty$.

Now, suppose that there is no $\varepsilon>0$ over which, for every sequence $x_s$  for which $\varphi(x_s) \to \infty$, $\Gamma_{\textnormal{ar}}(x_s, \Sigma, \varphi) \to 0$ uniformly over all $\Sigma$ in a neighborhood of $\Sigma_0$.  There must then exist some $\delta>0$ and sequences $\Sigma_1,\Sigma_2,\ldots$ and $x_1,x_2,\ldots$ such that $\varphi(x_s) \to \infty$ and $\Sigma_s \to \Sigma$ but $\Gamma_{\text{ar}}(x_s,\Sigma_s,\varphi) > \delta$ for every $s$. By the continuity of $\varphi$ and $c_\alpha$, we have that $c_\alpha(\Sigma_s) - \varphi(U_s) \rightsquigarrow c_\alpha(\Sigma_0) - \varphi(U_0)$, where $U_0,U_1,U_2,\ldots$ is a sequence of independent random $d$-vectors with $U_s$ following the multivariate normal distribution with mean zero and covariance $\Sigma_s$. By Lemma 2.11 of \cite{van_der_vaart_asymptotic_2000}, this implies the uniform convergence of the corresponding distribution functions, and so, it follows that
\begin{align*}
  &|\pr\left\{ c_\alpha(\Sigma_0) - \varphi(U_0) \leq \varphi(x_s)\right\} - \pr\left\{ c_\alpha(\Sigma_s) - \varphi(U_s) \leq \varphi(x_s)\right\}|\\
  &\hspace{.5in}\leq\ \sup_{x}\left|\pr\left\{ c_\alpha(\Sigma_0) - \varphi(U_0) \leq x\right\} - \pr\left\{ c_\alpha(\Sigma_s) - \varphi(U_s) \leq x \right\}\right| \to 0
\end{align*}
Since we have already established above that $\pr\left\{ c_\alpha(\Sigma_0) - \varphi(U_0) \leq \varphi(x_s)\right\}\to 1$, it must then also be that $\pr\left\{ c_\alpha(\Sigma_s) - \varphi(U_s) \leq \varphi(x_s)\right\}\to 1$, and so, $\Gamma_{\textnormal{ar}}(x_s, \Sigma_s, \varphi) \leq 1 - \pr\left\{ c_\alpha(\Sigma_s) - \varphi(U_s) \leq \varphi(x_s)\right\}\rightarrow 0$.  This is a contradiction. As such, there must exist some neighborhood of $\Sigma_0$ such that the convergence of $\Gamma_{\text{ar}}(x_s,\Sigma,\varphi)$ to zero is uniform over $\Sigma$ in this neighborhood.
\\
\textbf{C4.} Let $x\in\mathbb{R}^d$ be given. Defining $A_{x} := \{\omega\in\mathbb{R}^d : \varphi(\omega + x) < c_\alpha(\Sigma_0)\}$, we note that  
\begin{align*}
	\gengam[\textnormal{ar}] & = \int I\left\{\varphi(\intvar) < c_\alpha(\Sigma_0)\right\}  f_{\Sigma_0}(\intvar - \garg)\,dt \\
	&= \int I\left\{\varphi(u+x) < c_\alpha(\Sigma_0)\right\}  f_{\Sigma_0}(u)\,du\ =\ \pr(U_0 \in A_{\garg})\ .
\end{align*}
Suppose that $x_1, x_2\in\mathbb{R}^d$ are such that $\Gamma_{\textnormal{ar}}(x_1, \Sigma_0,\varphi) \geq c$ and $\Gamma_{\textnormal{ar}}(x_2, \Sigma_0,\varphi) \geq c$. Then, we can write that $
	c = c^{t}c^{1 - t} \leq \Gamma_{\textnormal{ar}}(x_1, \Sigma_0,\varphi)^t \Gamma_{\textnormal{ar}}(x_2, \Sigma_0,\varphi)^{1 - t}$.
Theorem 1 of \citet{rinott_convexity_1976} states that $
     \nu(A_{ y})^t \nu(A_{ z})^{1 - t} \leq  \nu(t A_{ y} + (1 - t)A_{ z})$ for any distribution $\nu$ with log-concave density function,
where  $t A_{ y} + (1 - t)A_{ z}:= \{t  \omega_1 + (1 - t) \omega_2:  \omega_1 \in A_{ y},  \omega_2 \in A_{ z}, t \in [0, 1] \}$.
The multivariate normal distribution has a log-concave density, as shown, for example (see, e.g., Theorem 4.2.1 of \citealp{tong_multivariate_2012}), and so, it holds that $
     \pr(U_0\in A_{x_1})^t \pr(U_0\in A_{x_2})^{1-t}\leq \pr\{U_0\in tA_{x_1}+(1-t)A_{x_2}\}$. It remains to show that 
$\pr\{U_0\in tA_{x_1}+(1-t)A_{x_2}\}=\Gamma_{\text{ar}}(tx_1+(1-t)x_2,\Sigma_0,\varphi)=\pr\{U_0\in A_{tx_1+(1-t)x_2}\}$.
This is implied by Lemma \ref{lemma:set_convexity} and the fact that each $A_{ x}$ is convex by the convexity of norms, since this lemma shows that $tA_{\garg_1} + (1 - t) A_{\garg_2} = A_{t\garg_1 + (1 - t) \garg_2}$. Thus, we obtain that
\begin{align*}
	c\ \leq\ \pr\left\{U_0\in tA_{x_1}+(1-t)A_{x_2}\right\}\ =\ \pr\left\{U_0\in A_{tx_1+(1-t)x_2}\right\}\ =\ \Gamma_{\textnormal{ar}}(t\garg_1 + (1 - t) \garg_2, \Sigma_0,\varphi)\ .
\end{align*}
Thus, we have established that $x\mapsto \Gamma_{\textnormal{ar}}(x, \Sigma_0,\varphi)$ is quasi-concave.\\
\textbf{C5.} 
 In view of the facts that $U_0$ and $-U_0$ have the same distribution and that $\varphi$ is centrally symmetric, for any $x\in\mathbb{R}^d$, we have that
\begin{align*}
	\Gamma_{\text{ar}}(x,\Sigma_0,\varphi)\ =\ \pr\left\{\varphi(U_0 + x) < c_{\alpha}(\Sigma_0)\right\}\ &=\ 
	\pr\left\{\varphi(-U_0 + x) < c_{\alpha}(\Sigma_0)\right\}\\
	&=\ \pr\left\{\varphi(U_0 - x) < c_{\alpha}(\Sigma_0)\right\}\ =\  \Gamma_{\text{ar}}(-x, \Sigma_0,\varphi)\ .
\end{align*}

\noindent \underline{\textbf{Part 2:}} multiplicative factor measure.\\
\textbf{C1.} Again, non-negativity is clear. For $( x,\Sigma)\in \mathbb{R}^d \times \mathbb{R}^{d\times d}$, define $\Lambda_{ x,\Sigma} : \mathbb{R}^+ \rightarrow (0,1-\alpha]$ pointwise as $\Lambda_{ x,\Sigma}(s):=\Gamma_{\textnormal{ar}}(s x, \Sigma,\varphi)$. Since $x\mapsto \Gamma_{\textnormal{ar}}(x, \Sigma,\varphi)$ is continuous and strictly ray-decreasing, $s\mapsto \Lambda_{ x,\Sigma}(s)$ is also continuous and strictly decreasing. We note that $\Lambda_{ x,\Sigma}(0)=1-\alpha>\tau$ and $\lim_{s\rightarrow\infty} \Lambda_{ x,\Sigma}(s)=0$, and therefore,  $\Gamma_{\textnormal{mf}}(\garg,\Sigma,\varphi)=\min\{s\ge 0 :  \Lambda_{\garg, \Sigma}(s) \le \tau\}$ equals the inverse $\Lambda_{ x,\Sigma}^{-1}(\tau)$ of $\Lambda_{ x,\Sigma}$ at $\tau$.

Let sequences $x_1,x_2,\ldots\in\mathbb{R}^d$ and $\Sigma_1,\Sigma_2,\ldots\in\mathbb{V}^d$ such that $(x_k,\Sigma_k)\rightarrow (x,\Sigma)$ be given, and denote $\Lambda_k:=\Lambda_{ x_k,\Sigma_k}$ for each $k$ and $\Lambda:=\Lambda_{x,\Sigma}$. The continuity of $x\mapsto \Gamma_{\textnormal{ar}}(x,\Sigma,\varphi)$ implies that $\Lambda_k(s)\rightarrow \Lambda(s)$ for each $s$. In view of the continuity and monotonicity of the bounded functions $\Lambda_1,\Lambda_2,\ldots$ and $\Lambda$, an adaptation of arguments used to prove Lemma~2.11 of \cite{van_der_vaart_asymptotic_2000} can be used to show that $
    \sup_{s\ge 0}|\Lambda_k(s)- \Lambda(s)|\rightarrow 0$.
We prove by contradiction that $\Lambda_k^{-1}(\tau)\rightarrow \Lambda^{-1}(\tau)$. Suppose this is not so. Then, there exists  $\epsilon>0$ and natural numbers $k_1< k_2< \ldots$ such that (i) $\inf _j[\Lambda_{k_j}^{-1}(\tau)-\Lambda^{-1}(\tau)]\geq \epsilon$ or (ii)  $\inf_j[\Lambda^{-1}(\tau)-\Lambda_{k_j}^{-1}(\tau)]\geq \epsilon$.
Suppose that (i) holds. By the monotonicity of $\Lambda_k$, we have that $\tau<\Lambda_{k_j}(\Lambda^{-1}(\tau)+\epsilon)$ for all $j$, and so, 
\begin{align*}
    \tau\ &<\ \Lambda(\Lambda^{-1}(\tau)+\epsilon) + \Lambda_{k_j}(\Lambda^{-1}(\tau)+\epsilon) - \Lambda(\Lambda^{-1}(\tau)+\epsilon)\\
    &\le\ \Lambda(\Lambda^{-1}(\tau)+\epsilon) + \sup_{s\ge 0}|\Lambda_{k_j}(s)- \Lambda(s)|\ .
\end{align*}
As $\Lambda$ is strictly decreasing, $\Lambda(\Lambda^{-1}(\tau)+\epsilon)<\tau$. This yields a contradiction since the latter summand has been shown to tend to zero. A similar argument can be made if (ii) holds instead. We have thus shown that $x\mapsto \Gamma_{\textnormal{mf}}(x,\Sigma,\varphi)$ is continuous.\\
\textbf{C2.} Let $ v\in\mathbb{R}^d$ be given, and define $g_v : \mathbb{R} \to \mathbb{R}$ pointwise as \[g_{ v}(\beta) := \Gamma_{\textnormal{mf}}(\beta  v, \Sigma_0,\varphi)=\min\{s\geq 0 : \pr\left\{\varphi(U_0 + s\beta  v) \geq c_{\alpha}(\Sigma_0)\right\} \geq 1-\tau\}\ .\]
We note that $g_{ v}(\beta) = g_v(1) / \beta$, and so,  $x\mapsto\Gamma_{\textnormal{mf}}(x,\Sigma_0,\varphi)$ is strictly ray-decreasing. Hence, the conditions of Lemma \ref{lemma:zero_measure_ray_monotonic} are satisfied, and it follows that $\pr\{\Gamma_{\textnormal{mf}}(U_0, \Sigma_0,\varphi) = c\} = 0$ for each $c\geq 0$.\\
\textbf{C3.} 
Let a sequence $x_1,x_2, \dots\in\mathbb{R}^d$ such that $\varphi(\garg_x) \to \infty$ be given. Let $\varepsilon > 0$ be given, and set $\tilde{x}_j:=\varepsilon x_j$ for each $j$.  The sequence $\tilde{x}_1,\tilde{x}_2,\ldots$ also has the property that $\varphi(\tilde{x}_j)=\varphi(\varepsilon x_j) = \varepsilon \varphi(x_j) \to \infty$. Using condition \ref{itm:cond_conv_prb_zr} established in Part 1, there exists some $N>0$ and a neighborhood $B_0$ of $\Sigma_0$ such that $\Gamma_{\textnormal{ar}}(\tilde{x}_j, \Sigma, \varphi) \leq \tau$ for each $n > N$ and $\Sigma \in B_0$.  As $\Gamma_{\textnormal{ar}}(\tilde{x}_j, \Sigma, \varphi) \to 0$ and $\Gamma_{\text{mf}}(x_j, \Sigma, \varphi)$ is defined as the smallest $s$ such that $\Gamma_{\textnormal{ar}}(s x_j, \Sigma, \varphi) \leq \tau$, it follows that $\limsup_j \Gamma_{\text{mf}}(x_j, \Sigma, \varphi) \leq \varepsilon$ uniformly over $\Sigma\in B_0$.  Since $\varepsilon>0$ is arbitrary, it must be the case that $\Gamma_{\text{mf}}(x_j, \Sigma, \varphi) \to 0$ uniformly over $\Sigma\in B_0$.\\
\textbf{C4.} Suppose that $x_1, x_2\in\mathbb{R}^d$ are such that $\Gamma_{\textnormal{mf}}(x_1, \Sigma_0,\varphi) \geq c$ and $\Gamma_{\textnormal{mf}}(x_2, \Sigma_0,\varphi) \geq c$. When establishing condition  \ref{itm:cond_dom_by_lebesgue} in Part 1, it was shown that $s\mapsto \pr\{\varphi(U_0 + sx) \geq c_\alpha(\Sigma_0)\}$ is continuous and strictly increasing. Hence, if $\Gamma_{\textnormal{mf}}(x, \Sigma_0,\varphi) \geq x$, then $\pr\{\varphi(U_0 + cx) \geq c_\alpha(\Sigma_0)\} \geq 1-\tau$, which implies that $\pr\{\varphi(U_0 + cx_1) < c_\alpha(\Sigma_0)\} > \tau$ and $\pr\{\varphi(U_0 + cx_2) < c_\alpha(\Sigma_0)\} > \tau$. Using condition \ref{itm:cond_unimod} established in Part 1, we find that $\pr\{\varphi(U_0 + c \{tx_1 + (1 - t)x_2\}) < c_\alpha(\Sigma_0)\} > \tau$ or, equivalently, $\pr\{\varphi(U_0 + c \{tx_1 + (1 - t)x_2\}) \geq c_\alpha(\Sigma_0)\} \leq 1-\tau$. 
Thus, it follows that $\Gamma_{\textnormal{mf}}(tx_1 + (1 - t)x_2, \Sigma_0,\varphi) \geq c$, and so, $x\mapsto \Gamma_{\text{mf}}(x,\Sigma_0,\varphi)$ is quasi-concave. \\
\textbf{C5.} Using the fact that the density function of a mean-zero multivariate normal distribution is centrally symmetric, we have that
\begin{align*}
		\Gamma_{\textnormal{mf}}(x, \Sigma_0)\ &=\ \min\left\{s\geq 0 : \pr\{\varphi(U_0 + sx) \geq c_\alpha(\Sigma_0)\} \geq 1-\tau \right\}\\
		&=\ \min\left\{s\geq 0 : \pr\{\varphi(-U_0 + sx) \geq c_\alpha(\Sigma_0)\} \geq 1-\tau \right\}\\
		&=\ \min\left\{s\geq 0 : \pr\{\varphi(U_0 - sx) \geq c_\alpha(\Sigma_0)\} \geq 1-\tau \right\}\ =\  \Gamma_{\textnormal{mf}}(-x, \Sigma_0)
\end{align*}for each $x\in\mathbb{R}^d$, thereby establishing that $x\mapsto\Gamma_{\text{mf}}(x,\Sigma_0,\varphi)$ is centrally symmetric.
\end{proof}

\subsection*{Additional technical lemma}

The sum-of-squares function $\jmath_k : \mathbb{R}^d \to \mathbb{R}$ is defined as $x\mapsto \sqrt{\sum_{i = 1}^k x^2_{(d - i + 1)}}$, 
where $x^2_{(j)}$ is the $j^{th}$ order statistic of components of $x$.

\begin{lemma}
\label{lemma:ssq_norm}
The function $\jmath_k$ is a norm for each $k\in\{1,2,\ldots,d\}$.
\end{lemma}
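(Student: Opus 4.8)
The plan is to recognize $\jmath_k$ as a pointwise maximum of Euclidean seminorms taken over coordinate subsets, since the triangle inequality --- the only nontrivial axiom --- follows immediately for any such maximum.

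First I would establish the representation
\[
\jmath_k(x) = \max_{S}\Bigl(\sum_{j\in S} x_j^2\Bigr)^{1/2},
\]
where the maximum ranges over all subsets $S\subseteq\{1,2,\ldots,d\}$ with $|S|=k$. This holds because $\sum_{i=1}^{k} x^2_{(d-i+1)}$ is precisely the sum of the $k$ largest squared components of $x$, and for any size-$k$ index set $S$ one has $\sum_{j\in S} x_j^2 \le \sum_{i=1}^{k} x^2_{(d-i+1)}$, with equality attained by the set indexing the $k$ largest squared values.

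Next, for each fixed $S$ I would write $\rho_S(x):=(\sum_{j\in S} x_j^2)^{1/2}$ and note that $\rho_S$ is the composition of the coordinate projection $x\mapsto (x_j)_{j\in S}$ with the Euclidean norm on $\mathbb{R}^k$; hence each $\rho_S$ is a seminorm on $\mathbb{R}^d$, i.e.\ it is absolutely homogeneous and subadditive. The norm axioms for $\jmath_k=\max_S \rho_S$ then follow in turn. Non-negativity is immediate. For homogeneity, $\jmath_k(\lambda x)=\max_S \rho_S(\lambda x)=|\lambda|\max_S \rho_S(x)=|\lambda|\,\jmath_k(x)$. For the triangle inequality, $\jmath_k(x+y)=\max_S \rho_S(x+y)\le \max_S\{\rho_S(x)+\rho_S(y)\}\le \max_S \rho_S(x)+\max_S \rho_S(y)=\jmath_k(x)+\jmath_k(y)$. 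Finally, for positive definiteness, $\jmath_k(x)=0$ forces $\rho_S(x)=0$ for every size-$k$ subset $S$; since $k\ge 1$, every coordinate index $j$ belongs to at least one such $S$, so $x_j=0$ for all $j$ and thus $x=0$, while the converse is clear.

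The only genuine obstacle is the triangle inequality, which is why the maximum-of-seminorms representation is the crux: once it is in hand, subadditivity of a pointwise maximum reduces the claim to subadditivity of the ordinary Euclidean norm. The step requiring the most care is the verification of the representation itself --- that the sum of the top-$k$ squared order statistics equals the maximal sum of squares over size-$k$ coordinate sets --- though this is elementary.
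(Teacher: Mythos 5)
Your proof is correct, and it takes a genuinely different route from the paper's. The paper verifies the norm axioms directly: for subadditivity it sets $z = x+y$, assumes without loss of generality that $|z_1|\le\cdots\le|z_d|$, applies the $\ell_2$ triangle inequality on $\mathbb{R}^k$ to the coordinates carrying the $k$ largest $|z_i|$, and then bounds each of the two resulting coordinate sums by the corresponding top-$k$ sum for $x$ and $y$ separately. You instead prove the representation $\jmath_k(x)=\max_{|S|=k}\bigl(\sum_{j\in S}x_j^2\bigr)^{1/2}$ once, and then invoke the general principle that a finite pointwise maximum of seminorms is a seminorm, with positive definiteness following because every coordinate lies in some size-$k$ subset. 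The two arguments rest on the same two atomic facts --- $\ell_2$ subadditivity on $k$ coordinates and domination of any size-$k$ coordinate sum by the top-$k$ sum --- but your packaging buys something: it eliminates the sorting/WLOG step on $x+y$ (which is where a reader must check that order statistics of the sum align correctly with fixed coordinate indices), it handles absolute homogeneity for all scalars $\lambda$ cleanly (the paper only writes the case $a>0$), and it yields a reusable statement (any finite family of coordinate-projection seminorms covering all indices induces a norm via pointwise max). What the paper's direct computation buys is self-containment: it needs no auxiliary representation lemma and stays entirely within elementary inequality manipulation.
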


\begin{proof}[Proof of Lemma~\ref{lemma:ssq_norm}]
Fix $k\in\{1,2,\ldots,d\}$. We must show that $\jmath_k$ is point-separating, absolutely homogeneous and subadditive, which then implies the claim. First, we note that if $\jmath_k(x) = 0$, then it must be that $0\leq x^2_{(1)}\leq\ldots\leq x^2_{(d)}\leq \sum_{i=1}^{k}x^2_{(d-i+1)}=0$, and so, $x_1=x_2=\ldots=x_d=0$. Second, we note that, for any $a > 0$ and $x\in\mathbb{R}^d$,
\begin{align*}
\jmath_k(ax) = \sqrt{\sum_{i = 1}^k \{ax_{(d - i + 1)}}\}^2  = \sqrt{\sum_{i = 1}^k a^2 x^2_{(d - i + 1)}} = a\sqrt{\sum_{i = 1}^k x^2_{(d - i + 1)}} = a\cdot\jmath_k(x) \ .
\end{align*} Finally, we let $ x = (x_1,x_2, \dots, x_d)^\top$ and $ y = (y_1,y_2, \dots, y_d)^\top $ be elements of $\mathbb{R}^d$, and define $z:=x+y$. Without loss of generality, suppose that $|z_1|\le |z_2|\le \ldots\le |z_d|$. Then, we have that
\begin{align*}
    \jmath_k( z)\ &=\ \sqrt{\sum_{i = 1}^k z^2_{(d - i + 1)}}\ =\ \sqrt{\sum_{i = 1}^k z^2_{d - i + 1}}\ =\ \sqrt{\sum_{i = 1}^k \left( x_{d - i + 1} + y_{d - i + 1}\right)^2} \\
    &\le\ \sqrt{\sum_{i = 1}^k x_{d - i + 1}^2} + \sqrt{\sum_{i = 1}^k y_{d - i + 1}^2}\ \le\ \sqrt{\sum_{i = 1}^k x_{(d - i + 1)}^2} + \sqrt{\sum_{i = 1}^k y_{(d - i + 1)}^2}\ =\ \jmath_k( x) + \jmath_k( y)\ ,
\end{align*}
where the first inequality follows from the subaddativity of the $\ell_2$ norm on $\mathbb{R}^k$.
\end{proof}

\subsection*{Additional figures}

\begin{figure}
	\centering
\includegraphics[width = \linewidth]{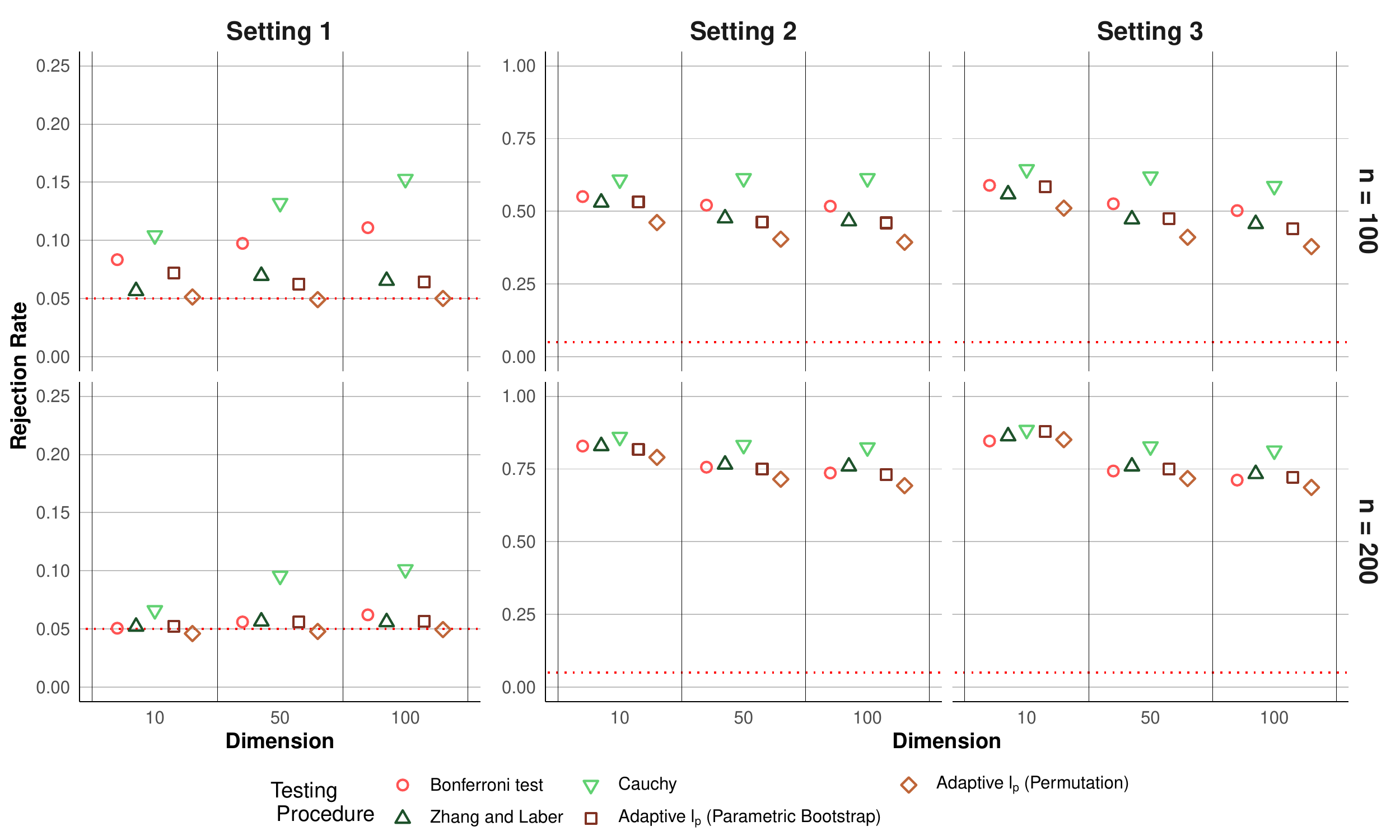}
	\caption{Simulation study-based empirical rejection rate of various tests applicable in Example 1 under different data-
generating mechanisms, at different sample sizes, and for covariate vectors with moderate correlation (50\%)
across components and of different length.
}
	\label{fig:some_cor}
\end{figure}

\begin{figure}
	\centering
\includegraphics[width = \linewidth]{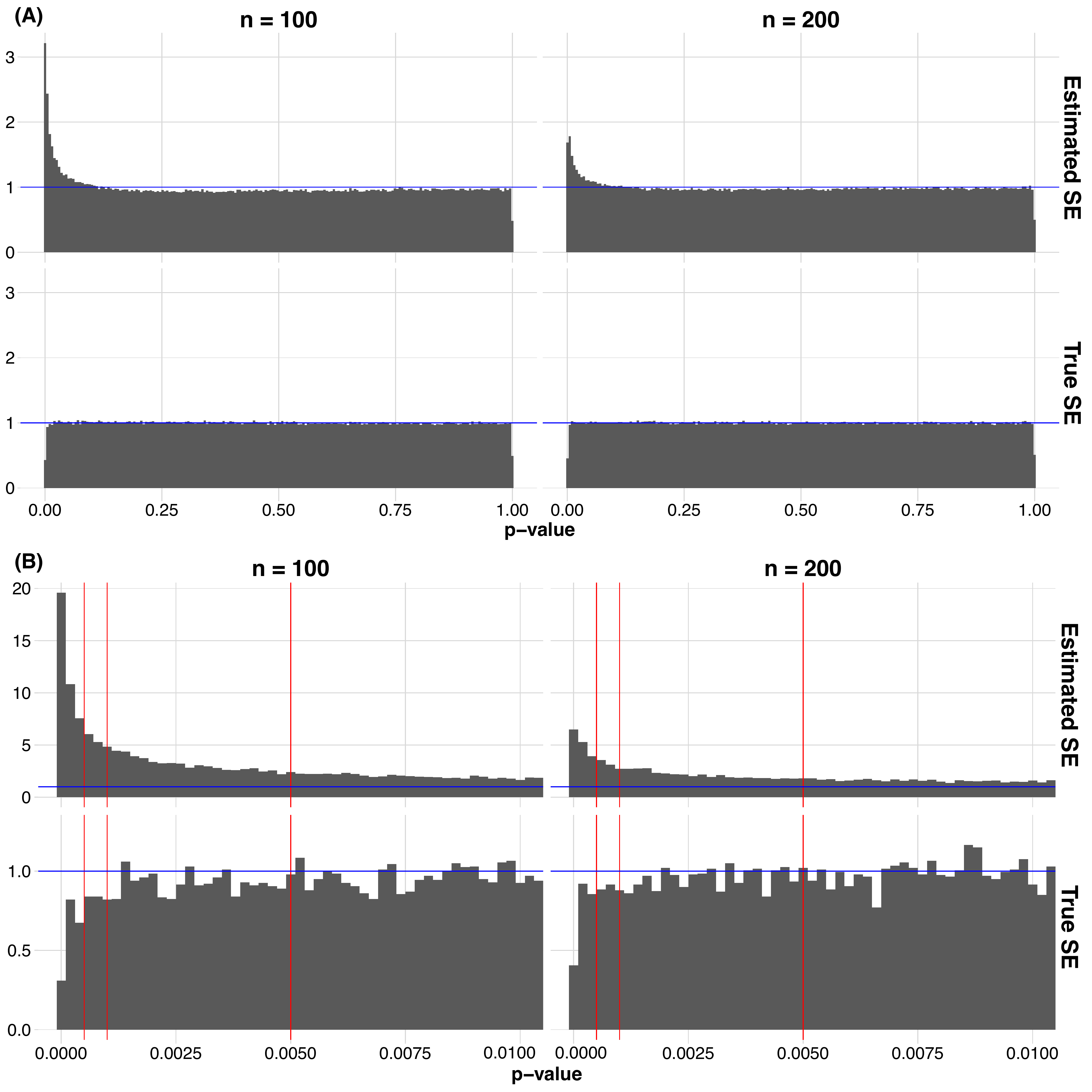}
	\caption{Simulation-based empirical sampling density  of the $p$-value $p_{1n}(\sigma):=2\,[1-\Phi(n^{1/2}|\psi_{n1}|/\sigma)]$ for $\sigma$ equal to either the true asymptotic standard deviation $\sigma_0$ or its influence function-based estimator $\sigma_n$ based on Example 1.  Data are generated from the setting in which all covariates are independent of each other and the outcome.  Panel (A) shows sampling densities on $[0, 1]$. Panel (B) shows the same densities but is restricted to the interval $[0, 0.01]$. In each panel, displays in the top and bottom rows show, respectively, the sampling density when $\sigma_0$ is estimated or known. Displays in the left and right columns show, respectively, results for $n = 100$ or $n = 200$. The blue horizontal line represents the theoretical standard uniform density of $p$-values under the null, and the red vertical lines (left to right) in Panel (B) are the largest single covariate $p$-value that results in rejection of the Bonferroni test for dimension $d$ equal to 100, 50 and 10.}
	\label{fig:pval_all}
\end{figure}

\begin{figure}
	\centering
\includegraphics[width = \linewidth]{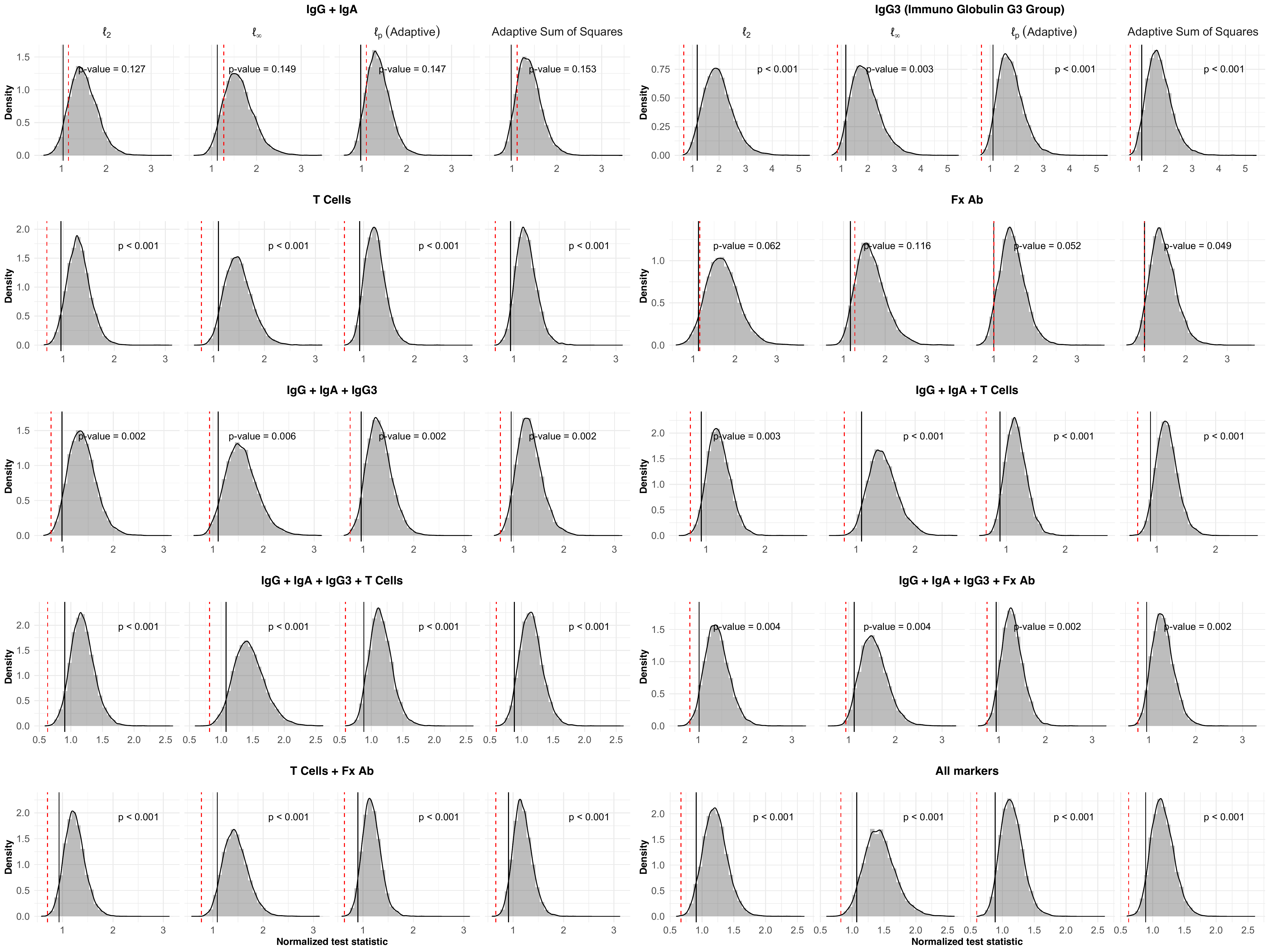}
	\caption{Estimated limiting distribution of the multiplicative factor measure for both non-adaptive ($\ell_2$ and maximum absolute deviation) and adaptive (adaptive $\ell_p$ and adaptive sum-of-squares) testing procedures.  The black and dashed red vertical lines in each plot denote the value of the $5^{th}$ percentile of the limiting distribution and of the test statistic, respectively. This analysis is based on data from the HVTN 505 clinical trial, and the null hypothesis tested is that the biomarkers from the Fx Ab group are not associated with risk of HIV infection.}
	\label{fig:all_de_figs}
\end{figure}

\subsection*{Additional information on data analysis}
\label{sec:supp_data_app}

During the HVTN 505 clinical trial, vaccinations were administered at Months 0, 1 and 6.  To measure the immune response to vaccination, samples were collected from individuals at Month 7. After trial completion, samples were analyzed for 25 primary endpoint vaccine cases (HIV-1 infected between Month 7 and Month 24) and 125 randomly sampled frequency matched vaccine controls (HIV-1 negative at Month 24) \citep{janes_higher_2017}. Baseline covariates and infection status were recorded for all study participants.

We denote the vector of recorded immune response biomarkers as $S := (S_1,S_2, \ldots, S_d)$, and denote by $W$ and $Y$ the baseline covariate vector and infection status, respectively. The biomarker vector $S$ is only recorded on a subset of participants, and the variable $\Delta$ indicates those patients, with $\Delta=1$ if $S$ is recorded and 0 otherwise. For each group of biomarkers considered in \cite{neidich_antibody_2019}, we test the null hypothesis that these biomarkers are not associated with risk of infection. The measure of association used for each biomarker $S_j$ is the $\beta_1$-coefficient value indexing the KL projection of the conditional log-odds of infection $ \log\left(\text{odds}\left(Y = 1 | S_j = s\right)\right)$ onto a linear working model $\beta_1+\beta_2 s$.

Denoting by $P_F$ a candidate distribution for the full-data unit $(W,S,Y)$, we first define the full-data parameter for biomarker $S_j$ to be
\begin{align*}
  \boldsymbol{\beta}_{j,F}(P_F):=\argmax_{\boldsymbol{\beta}}\, 
  E_{P_F}\left[R_j(\boldsymbol{\beta})(S_j,Y)\right]\ ,
\end{align*}where we also define $R_j(\boldsymbol{\beta}):(s,y)\mapsto y\log\{\text{expit}(\beta_1 + \beta_2 s)\}+(1-y)\log\{1 - \text{expit}(\beta_1 + \beta_2 s)\}$.
For the two-phase design, the observed-data unit is $X:=(W,\tilde{S},\Delta,Y)$ with $\tilde{S}:=\Delta S$. Each participant's probability of being sampled in the second phase depends on their outcome and baseline covariate vector but not on the biomarker vector. In other words, $\Delta$ and $S$ are independent given $(W,Y)$.  In this particular study, all cases were sampled but controls were sampled based on BMI, race and ethnicity \citep{janes_higher_2017}. Under this assumption, the full-data parameter can be expressed as the observed-data parameter
\begin{align*}
\boldsymbol{\beta}_j(P):=\argmax_{\boldsymbol{\beta}} E_P\left[\frac{\Delta }{P(\Delta=1\mid Y,W)}\cdot R_j(\boldsymbol{\beta})(\tilde{S}_j,Y)\right]\ ,
\end{align*}where $P$ is a candidate distribution of the observed-data unit.

In the context considered, $(W,Y)$ have a finite support under the true sampling distribution $P_0$. Thus, the parameter value $\boldsymbol{\beta}_{j0}:=\boldsymbol{\beta}_j(P_0)$ can be estimated using the plug-in estimator $\boldsymbol{\beta}_{jn}:=\boldsymbol{\beta}_j(P_n)$, where $P_n$ is the empirical distribution based on $X_1,X_2,\ldots,X_n$; in practice, this estimator can be obtained using weighted univariable logistic regression with empirically computed weights. The estimator $\boldsymbol{\beta}_{jn}$ is a vector $(\beta_{jn,1},\beta_{jn,2})$, with components giving estimators of the constant and slope of the best linear model approximation to the true conditional log-odds of risk of HIV infection, respectively.

The influence function of $\boldsymbol{\beta}_{jn}$ is given by \[x\mapsto -M_{j0}^{-1}\left[\frac{\delta}{\pi_0(w,y)}\nabla R_j(\boldsymbol{\beta}_{j0})(w,\tilde{s},y)+\left\{1-\frac{\delta}{\pi_0(w,y)}\right\}\xi_{j0}(w,y)\right]\,,\]where we define pointwise the nuisance  functions $\pi_0(w,y):=E_0\left(\Delta\mid W=w,Y=y\right)$ and $\xi_{j0}(w,y):= E_0\left[\nabla R_j(\boldsymbol{\beta}_{j0})(W,\tilde{S},Y)\mid \Delta=1,W=w,Y=y\right]$ as well as the normalization matrix \[M_{j0}:=E_0\left[\frac{\Delta}{\pi_0(W,Y)}\nabla^2R_j(\boldsymbol{\beta_{j0}})(W,\tilde{S},Y)\right].\]Here, defining $m_{\boldsymbol{\beta}}:s\mapsto \expit(\beta_1+\beta_2s)$, we can compute $\nabla R_j(\boldsymbol{\beta})(w,s,y)=[y-m_{\boldsymbol{\beta}}(s)]\left[\begin{smallmatrix}1\\ s\end{smallmatrix}\right]$ and $\nabla^2R_j(\boldsymbol{\beta})(w,s,y)=-m_{\boldsymbol{\beta}}(s)[1-m_{\boldsymbol{\beta}}(s)]\left[\begin{smallmatrix}1 & s\\ s & s^2
\end{smallmatrix}\right]$. In particular, the influence function of $\psi_{jn}:=\beta_{j2,n}$ is given by 
\begin{align*}
    \phi_{j0}:x\mapsto\ & \frac{\delta}{\pi_0(w,y)}\left\{a_{j0}+b_{j0}\tilde{s}\right\}\left\{y-m_{\boldsymbol{\beta}_{j0}}(\tilde{s})\right\}\\
    &+\left\{1-\frac{\delta}{\pi_0(w,y)}\right\}
    E_0\left[\big{(}a_{j0}+b_{j0}\tilde{S}\big{)}\big{\{}Y-m_{\boldsymbol{\beta}_{j0}}(\tilde{S})\big{\}}\,\middle|\, \Delta=1,W=w,Y=y\right]\ ,
\end{align*}where $a_{j0}$ and $b_{j0}$ are the $[2,1]$ and $[2,2]$ entries of $-M_{j0}^{-1}$. This implies that $n^{1/2}\left(\psi_n-\psi_0\right)$ converges in distribution to a mean-zero multivariate normal distribution with covariance matrix $\Sigma_0$ with entry $[j,k]$ given by $\Sigma_{jk}:=E_0\left[\phi_{j0}(X)\phi_{k0}(X)\right]$. As such, a natural estimator $\Sigma_n$ of $\Sigma_0$ is defined entrywise as $\Sigma_{jk,n}:=\frac{1}{n}\sum_{i=1}^{n}\phi_{jn}(X_i)\phi_{kn}(X_i)$ with 
\begin{align*}
    \phi_{jn}:x\mapsto\ & \frac{\delta}{\pi_n(w,y)}\left\{a_{jn}+b_{jn}\tilde{s}\right\}\left\{y-m_{\boldsymbol{\beta}_{jn}}(\tilde{s})\right\}\\
    &+\left\{1-\frac{\delta}{\pi_n(w,y)}\right\}
    E_n\left[\big{(}a_{jn}+b _{jn}\tilde{S}\big{)}\big{\{}Y-m_{\boldsymbol{\beta}_{jn}}(\tilde{S})\big{\}}\,\middle|\, \Delta=1,W=w,Y=y\right]\ ,
\end{align*}where $\pi_n$ is an estimator of $\pi_0$, $a_{jn}$ and $b_{jn}$ are the $[2,1]$ and $[2,2]$ entries of $-M_{jn}^{-1}$ with $M_{jn}:=-\frac{1}{n}\sum_{i=1}^{n}\frac{\Delta_i}{\pi_n(W_i,Y_i)}m_{\boldsymbol{\beta}_{jn}}(\tilde{S}_i)[1-m_{\boldsymbol{\beta}_{jn}}(\tilde{S}_i)]\left[\begin{smallmatrix}1 & \tilde{S}_i\\ \tilde{S}_i & \tilde{S}_i^2
\end{smallmatrix}\right]$, and $E_n$ denotes an empirical expectation relative to the distribution of $\tilde{S}$ given $\Delta=1$ and $(W,Y)$.

\end{document}